\newcommand{\bone}{\mathds{1}}
\definecolor{color2}{RGB}{162,173,0}
\definecolor{color3}{RGB}{227,114,34}
\definecolor{color4}{RGB}{202,033,063}
\setlist{parsep = 0.2\parsep, itemsep = 2\itemsep, leftmargin=\parindent}
\newif\ifcomment 
\definecolor{green}{rgb}{0,0.5977,0}
\newcommand{\rr}{\mathbb R}
\newcommand{\abs}[1]{\left|{#1}\right|}
\newcommand{\suchthat}{\ | \ }
\newcommand{\tth}{^\text{th}}
\newcommand{\txt}[1]{\text{#1}}
\newcommand{\push}{\\ & \ \ \ \ \ \ \ \ \ \ }
\g@addto@macro{\@algocf@init}{\SetKwInOut{Parameter}{Parameters}} 
\newcommand{\pdif}[1]{\tfrac{\partial}{\partial#1}}
\newcommand{\pips}{$\pi$ps}
\renewcommand{\Pr}{\mathbb{P}} 
\newcommand{\app}{a}
\renewcommand{\theta}{\vartheta}
\newcommand{\introcite}[1]{\citeauthor{#1} (\citeyear{#1})}
\theoremstyle{acmplain}
\newtheorem{observation}{Observation}[section]
\title{Monotone Randomized Apportionment}
\def\@ACM@checkaffil{
\if@ACM@instpresent\else
\ClassWarningNoLine{\@classname}{No institution present for an affiliation}%
\fi
\if@ACM@citypresent\else
\ClassWarningNoLine{\@classname}{No city present for an affiliation}%
\fi
\if@ACM@countrypresent\else
\ClassWarningNoLine{\@classname}{No country present for an affiliation}%
\fi
}
\author{Jos\'e Correa}
\affiliation{
\institution{Universidad de Chile}
}
\author{Paul G\"olz}
\affiliation{
\institution{UC Berkeley, Cornell University}
}
\author{Ulrike Schmidt-Kraepelin}
\affiliation{
\institution{TU Eindhoven}
}
\author{Jamie Tucker-Foltz}
\affiliation{
\institution{Harvard University}
}
\author{Victor Verdugo}
\affiliation{
\institution{Pontificia Universidad Cat\'olica de  Chile}
}
\begin{abstract}
	Apportionment is the act of distributing the seats of a legislature among political parties (or states) in proportion to their vote shares (or populations). A famous impossibility by \introcite{BY01} shows that no apportionment method can be proportional up to one seat (\emph{quota}) while also responding monotonically to changes in the votes (\emph{population monotonicity}).
	\introcite{Grimmett04} proposed to overcome this impossibility by randomizing the apportionment, which can achieve quota as well as perfect proportionality and monotonicity\,---\,at least in terms of the \emph{expected number} of seats awarded to each party.
	Still, the correlations between the seats awarded to different parties may exhibit bizarre non-monotonicities.
	When parties or voters care about joint events, such as whether a coalition of parties reaches a majority, these non-monotonicities can cause paradoxes, including incentives for strategic voting.
	
	In this paper, we propose monotonicity axioms ruling out these paradoxes, and study which of them can be satisfied jointly with Grimmett's axioms.
	Essentially, we require that, if a set of parties all receive more votes, the probability of those parties jointly receiving more seats should increase.
	Our work draws on a rich literature on \emph{unequal probability sampling} in statistics (studied as \emph{dependent randomized rounding} in computer science).
	Our main result shows that a sampling scheme due to \introcite{Sampford67} satisfies Grimmett's axioms and a notion of higher-order correlation monotonicity.
\end{abstract}
\begin{document}
	
	\begin{titlepage}
		
		\maketitle
		
	\end{titlepage}
	
	\section{Introduction}\label{secIntro}
	Across modern democracies, a long-lasting ideal has been that the legislature should ``be an exact portrait, in miniature, of the people'' it represents, in the words of John Adams~\cite{Adams76}.
	Many democratic systems aim to achieve this maxim by partitioning the seats of their legislature over political blocs \emph{in proportion} to the sizes of these blocs.
	Which blocs the countries consider differs: federal states divide up seats between the member states (in proportion to state populations), whereas 85 countries around the world divide up seats between political parties (in proportion to their vote shares)~\cite{ACE}.
	Mathematically speaking, however, both settings pose the same task of \emph{apportionment}: dividing a fixed number $h \in \mathbb{N}$ of seats across $n$ blocs in proportion to the blocs' sizes.
	For the sake of exposition, we use the language of apportionment over parties throughout this paper, but our work equally applies to other apportionment settings.
	
	Though it sounds almost trivial, the question of how to proportionally apportion seats possesses surprising mathematical and political complexity.
	The cause of this complexity is the indivisibility of seats\,---\,if a party receives $8.4\%$ of the votes and hence deserves 8.4 out of $h=100$ seats, should it receive 8 or 9, or even some other number?
	Since the 18th century, political luminaries and mathematicians have devised various \emph{apportionment methods}, i.e., functions that take in the number of votes received by all parties and the legislature size $h$, and return how many seats are to be filled by each party.
	Which of these methods should be used has led to intense debate, which often revolved around the methods' mathematical properties and actively involved mathematicians~\cite{Szpiro10}.
	
	Take, for example, \emph{Hamilton's method}, the apportionment method put forward by the first US secretary of the treasury, Alexander Hamilton.
	His method first calculates, for each party, the number of seats it proportionally deserves (8.4, in the example above) and immediately assigns the floor of this number to the party (e.g., $\lfloor 8.4 \rfloor = 8$), which is known as the party's \emph{lower quota}.
	Then, the method goes through the parties in decreasing order of their \emph{residue} (e.g., $8.4 - \lfloor 8.4 \rfloor = 0.4$) and assigns one more seat to each party until the desired house size $h$ is reached.
	One strength of Hamilton's method is that it satisfies \emph{quota}: a party's number of seats will never be below the floor of its proportional entitlement or above its ceiling.
	On the flip side, this method exhibits paradoxical non-monotonicities when the votes change.
	For instance, it violates \emph{population monotonicity}, which means that, when party~1 gains voters and party~2 loses voters from one election to the next, party~1 might lose a seat while party~2 gains one.
	Such monotonicity violations are not mere mathematical inconveniences but led to Hamilton's method being abandoned for the apportionment over US states in 1901~\cite[p.\ 42]{BY01} and for the apportionment over parties in Germany in 2008~\cite{Bgbl08}.
	In fact, no apportionment method is immune to such criticism: as Balinski and Young showed, \emph{any} method violates quota or population monotonicity~\cite{BY01}.
	
	In 2004, the mathematician Geoffrey Grimmett proposed a simple solution to this impasse: allowing apportionment methods to be random~\cite{Grimmett04}.
	Following his proposal, a party deserving 8.4 seats would be ``rounded down'' to 8 seats with 60\% probability and ``rounded up'' to 9 seats with 40\% probability. 
	In general, Grimmett's random apportionment of seats would always satisfy quota ex post and \emph{ex-ante proportionality}, which means that a party's \emph{expected number} of seats is perfectly equal to its proportional share.
	This latter property implies that the expected number of seats satisfies all reasonable forms of monotonicity, circumventing the Balinski--Young impossibility.
	
	\looseness=-1
	In designing such a randomized method, one must carefully correlate the rounding decisions across parties to ensure that the total number of seats is indeed $h$.
	Grimmett's method, like Hamilton's, 
	first awards each party its lower quota, and then (randomly) awards some parties one more seat, as follows:
	Let the parties' residues be $p_1, \dots, p_n \in [0, 1)$, and let $k$ denote their sum, which is the number of seats not yet awarded.
	We line up $n$ intervals on the number line without gaps, where the $i\tth$ interval has length $p_i$.
	We then shift all intervals to the right by a random amount $u$, drawn uniformly from $[0, 1)$, and round up exactly those parties whose interval contains an integer.
	Note that there are exactly $k$ such integers in the interval $[u, k + u)$, so this step will add $k$ seats, as required.
	Moreover, the process satisfies ex-ante proportionality since, no matter where some party $i$'s interval is placed, the random shift $u$ will leave it containing an integer with probability exactly~$p_i$.
	
	Beyond Grimmett's method, the recipe ``award lower quotas then round'' opens up a vast space of randomized apportionment methods, all of which satisfy quota and ex-ante proportionality.
	To construct such a method, all we need is a \emph{rounding rule}, i.e., a procedure that takes in residues $p_1, \dots, p_n \in [0, 1)$ summing to some integer $k$, and randomly selects a subset $S \subseteq \{1, \dots, n\}$ of size~$k$ such that each $1 \leq i \leq k$ is contained in $S$ with probability $p_i$.
	Fortuitously, this very setting has been studied in mathematical statistics as \emph{\pips{} (``probability proportional to size'') sampling without replacement}~\cite{BH83,Tille06}.
	In fact, the seminal book by \citet{BH83} lists 50 such methods,\footnote{Though several only apply if $k=2$ or give inclusion probabilities only approximately equal to the $p_i$.} out of which one, \emph{systematic sampling}~\cite{Madow49}, yields exactly Grimmett's method when applied as above.
	\citet{DT98} and \citet{HLS89} even developed general frameworks for \pips{} sampling, which capture infinitely many rounding rules and hence randomized apportionment methods.
	In the past two decades, theoretical computer scientists have searched for the same kind of rounding rules under the name of \emph{dependent randomized rounding}~\cite{GKP+06}, which have proved powerful tools in algorithm design~\cite[e.g.,][]{CL12,Srinivasan01,NSW23a}\,---\,frequently reinventing rules previously known in statistics~\cite{Tille23}.
	
	But which randomized apportionment method out of this multitude should one choose, given that all these methods satisfy quota and ex-ante proportionality?\footnote{Grimmett himself offers ``no justification for [his] scheme apart from fairness and ease of implementation''~\cite{Grimmett04}.}
	Naturally, we define additional desirable axioms and focus 
	on
	those methods with good axiomatic properties.
	In this paper, we ask
	\begin{quote}
		\emph{Which monotonicity axioms are desirable in randomized apportionment? And can sampling schemes from the statistics literature help us satisfy them?}
	\end{quote}
	
	\subsection{A Motivating Example}
	\label{sec:motivating}
	We motivate our monotonicity axioms with the story of two elections on the island of Apportia.
	Despite its modest population of 1100 people, Apportia boasts as many as six political parties, three leaning left and three leaning right.
	Apportia's parliament consists of just 11 legislators, elected at-large using Grimmett's randomized apportionment method.
	In the previous election, votes were cast according to the middle columns in \cref{tbl:apportia}.
	\definecolor{up}{rgb}{0., .5, 0.}
	\definecolor{down}{rgb}{8., 0., 0.}
	\begin{table}[htb]
		\centering
		\caption{Vote totals for six parties in two fictional elections.}\label{tbl:apportia}
		\begin{tabular}{rl rrr rrr}
			\toprule
			& & \multicolumn{3}{c}{Previous election} & \multicolumn{3}{c}{New election} \\
			\cmidrule(lr){3-5} \cmidrule(l){6-8}
			Party & Ideology & Votes & Lower quota & Residue & Votes & Lower quota & Residue \\
			\midrule
			1 & left &     110 & 1 & 0.1 &       110 & 1 & 0.1 \\
			2 & right &    270 & 2 & 0.7 &       (\textcolor{up}{+20}) 290 & 2 & 0.9 \\
			3 & left &     210 & 2 & 0.1 &       210 & 2 & 0.1 \\
			4 & right &    160 & 1 & 0.6 &       (\textcolor{up}{+30}) 190 & 1 & 0.9 \\
			5 & left &      70 & 0 & 0.7 &       (\textcolor{down}{-60}) \hphantom{0}10 & 0 & 0.1 \\
			6 & right &    280 & 2 & 0.8 &       (\textcolor{up}{+10}) 290 & 2 & 0.9 \\
			\bottomrule
		\end{tabular}
	\end{table}

	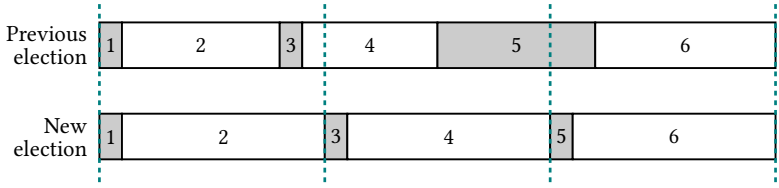
\begin{figure}[htb]
		\centering
		\scalebox{0.6}{
			\begin{tikzpicture}
				\node [align=right,left] at (-.15,3.25) {\LARGE Previous};
				\node [align=right,left] at (-.15,2.75) {\LARGE election};
				\node [align=right,left] at (-.15,1.25) {\LARGE New};
				\node [align=right,left] at (-.15,0.75) {\LARGE election};
				\foreach \a/\b/\c/\d in {0/.5/1/black!20,.5/4/2/white,4/4.5/3/black!20,4.5/7.5/4/white,7.5/11/5/black!20,11/15/6/white}{\draw[very thick,fill=\d] (\a,2.5) rectangle (\b,3.5) node[pos=.5] {\LARGE$\c$};}
				\foreach \a/\b/\c/\d in {0/.5/1/black!20,.5/5/2/white,5/5.5/3/black!20,5.5/10/4/white,10/10.5/5/black!20,10.5/15/6/white}{\draw[very thick, fill=\d] (\a,1.5) rectangle (\b,0.5) node[pos=.5] {\LARGE$\c$};}
				\foreach \i in {0,5,10,15}{\draw[teal,line width=.6mm,dashed] (\i,0) -- (\i,4);}
		\end{tikzpicture}}
		\caption{Illustration of Grimmett's method~\cite{Grimmett04} for the two elections from \cref{tbl:apportia}. Intervals corresponding to left-wing parties are gray. Dashed lines indicate the position of integers before the random shift.}
		\label{fig:grimmett}
	\end{figure}
	
	Did the left-wing parties have a chance of forming a majority coalition?
	The proportional number of votes needed to win a seat in Apportia is $1100/11 = 100$.
	Thus, after Grimmett's method had awarded each party its lower quota, the left-wing coalition collectively controlled three seats (one from party~1, two from party~3, and none from party~5), the right-wing parties controlled five, with three seats remaining unapportioned.
	Hence, a left-wing majority would have only been possible if all remaining seats had gone to the left-wing parties.
	Upon inspection of the residues, however, one can see that this was impossible: Assuming we order the parties numerically for Grimmett's algorithm (see \cref{fig:grimmett}, top), the intervals for parties~1 and~3 are contained within the same unit interval, which means that it is impossible for both intervals to contain an integer, no matter the random shift $u$.
	Since a left-wing majority would have required each of the three parties to receive another seat, the right-wing parties obtained a majority with probability 1.
	(Arguably, this majority is deserved since the right-wing parties received 65\% of the votes.)
	
	Nonetheless, this poor showing by the left-wing parties was discouraging for the voters of party~5, which is why, for the next election, 60 of these voters switch to vote for right-wing parties: 20 voters switch to party~2, 30 to party~4, and 10 to party~6.
	(The right-most columns in \cref{tbl:apportia} show the new tallies.)
	This shift leaves the lower quotas unchanged, but changes the residues.
	As can be seen on the bottom of \cref{fig:grimmett}, as long as the random shift $u$ falls in the range $(9/10, 1)$, all three left-wing parties will get an extra seat.
	This is a paradox: a shift of voters from left-wing to right-wing parties \emph{increased} the chances of a left-wing government from zero to $1/10$.
	In other words, even though the vote shares of all left-wing parties weakly dropped and the vote shares of all right-wing parties grew, the left-wing coalition might well profit from (and hence encourage) this change.
	Our monotonicity axioms will rule out such paradoxes.\footnote{It may appear as if this example relied on the ordering of the parties, so a natural attempt to fix Grimmett's algorithm would be to randomize the order. In fact, \citet{Grimmett04} explicitly proposes randomizing the order as ``it seems desirable to reduce to a
		minimum any correlations which depend on this extraneous element.'' However, the reader may verify that there was no order in which the left-wing parties could have reached a majority in the previous election. Thus, overcoming this paradox will require new apportionment methods.} 
	
	\subsection{Our Approach and Results}
	Our first contribution is the introduction of a family of monotonicity axioms that can guide the choice and future development of randomized apportionment methods.
	Unlike the work of \citet{GPP22}, who lift classical monotonicity axioms into the randomized realm through a rather technical notion of decomposition (see \cref{sec:relatedwork}), our axioms are \emph{native to randomized apportionment} and their violation directly implies paradoxes and perverse incentives.
	Conceptually, our axioms express that a set of parties that gains in votes while other parties lose votes should be entitled to higher chances of \emph{simultaneous representation}, not just to higher chances of representation for each party separately, which is implied by ex-ante proportionality.
	Our axioms differ in which gains and losses of votes they apply to, how they define simultaneous representation, and whether they consider just a single, growing coalition or the comparison of a growing coalition with a shrinking one.
	
	In \cref{sec:rounding}, we begin our investigation in the domain of rounding rules.
	We find that not only the systematic rounding underlying Grimmett's method, but also pipage rounding and the rounding rule with maximum entropy (i.e., \emph{conditional Poisson sampling}) violate simple forms of our monotonicity, and hence admit the paradox described in the previous subsection.
	
	One rounding rule from the statistics literature, however, bucks this trend: we show that \emph{Sampford sampling}~\cite{Sampford67} satisfies the axiom of \emph{selection monotonicity}.
	\begin{restatable}[Sampford rounding satisfies selection monotonicity]{theorem}{thmsampfordmain}\label{thmSampfordMainResult}
		Let $\vec{p}, \vec{p}' \in [0, 1)^n$ be two residue vectors, summing up to the same integer $k$.
		Let $T$ be a set of $k$ parties such that $p_i' \geq p_i$ for all parties $i \in T$ and $p_i' \leq p_i$ for all $i \notin T$.
		Then, the probability that Sampford sampling rounds up the set $T$ is at least as high for $\vec{p}'$ as for $\vec{p}$.
	\end{restatable}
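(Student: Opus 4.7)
The plan is to reduce the theorem to a local infinitesimal statement and then to an algebraic inequality specific to Sampford's weighting.

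\textbf{Reduction to elementary transfers.} Any pair $(\vec{p}, \vec{p}')$ satisfying the hypothesis can be connected by a continuous, piecewise-linear path within $\{\vec{q} \in [0,1)^n : \sum_\ell q_\ell = k\}$ consisting of infinitesimal transfers of mass from a single $j \notin T$ to a single $i \in T$. Since Sampford probabilities depend continuously on $\vec{p}$, it suffices to show that $\tfrac{d}{d\epsilon}\Pr_{\vec{p}(\epsilon)}[T] \ge 0$ at $\epsilon = 0$ for every such elementary transfer.

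\textbf{Algebraic reduction.} I would write Sampford's distribution as $\Pr[T] = g_T(\vec{p})/Z(\vec{p})$, where $g_T = (k - \pi_T)\prod_{\ell \in T} w_\ell$, $w_\ell := p_\ell/(1-p_\ell)$, and $Z = \sum_{|S| = k} g_S$. The constraint $\sum_\ell p_\ell = k$ lets us substitute $k - \pi_S = \pi_{S^c}$, which yields the alternative expression
\[ Z(\vec{p}) = \sum_m p_m\, e_k(\vec{w}_{-m}), \]
where $e_k$ denotes the $k$-th elementary symmetric polynomial and $\vec{w}_{-m}$ removes coordinate $m$. Computing $\tfrac{d}{d\epsilon}\log \Pr[T]$ explicitly and using both the $\pi$ps identity $\Pr[\ell \in S] = p_\ell$ (to evaluate the expectations appearing in $\tfrac{d}{d\epsilon}\log Z$) and the recursion $e_k(\vec{w}_{-j}) - e_k(\vec{w}_{-i}) = (w_i - w_j)\, e_{k-1}(\vec{w}_{-\{i,j\}})$, the desired non-negativity of $\tfrac{d}{d\epsilon}\Pr[T]$ reduces to the explicit algebraic inequality
\[ Z \left( \frac{1}{p_i} + \frac{1}{1-p_j} - \frac{1}{k - \pi_T} \right) + (w_i - w_j)\, e_{k-1}(\vec{w}_{-\{i,j\}}) \;\ge\; 0. \]

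\textbf{Main obstacle.} The real difficulty lies in establishing this last inequality. Neither of its two summands is individually non-negative: the coefficient of $Z$ is negative whenever $k - \pi_T$ is small (e.g., when $T$ holds residues close to $1$), and $w_i - w_j$ is negative whenever $p_i < p_j$. The plan is to expand $Z$ by splitting the outer sum over $m$ into the cases $m = i$, $m = j$, and $m \notin \{i,j\}$, and then further extracting the contributions of $w_i$ and $w_j$ from each $e_k(\vec{w}_{-m})$ via two iterated applications of the recursion, leaving only elementary symmetric polynomials of $\vec{w}_{-\{i,j,m\}}$. After regrouping with the correction $(w_i - w_j)\, e_{k-1}(\vec{w}_{-\{i,j\}})$ and using the identity $(1-p_m) w_m = p_m$ to convert weights back into residues, I expect the left-hand side to reorganize into a sum of products of manifestly non-negative quantities. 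The ultimate source of positivity should be that Sampford's factor $(k - \pi_S)$ was reverse-engineered precisely so that $\pi_\ell = p_\ell$, which is the same structural property that drives the cancellations.
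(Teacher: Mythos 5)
Your reduction to elementary transfers and the subsequent algebraic reduction are both correct and match the paper's. In particular, your target inequality
\[
Z \left( \frac{1}{p_i} + \frac{1}{1-p_j} - \frac{1}{k - \pi_T} \right) + (w_i - w_j)\, e_{k-1}(\vec{w}_{-\{i,j\}}) \;\ge\; 0
\]
is (after multiplying through by $\prod_\ell(1-p_\ell)$ and writing $s := k - \pi_T$) exactly the inequality in \eqref{eq:factornonneg} that the paper proves via the quotient rule, together with the identification $\sum_A f(A) = Z\prod_\ell(1-p_\ell)$ and $(w_i-w_j)e_{k-1}(\vec{w}_{-\{i,j\}}) = -(p_j-p_i)\mathbb{P}[|\hat{B}|=k-1]/\prod_\ell(1-p_\ell)$. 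So far so good.

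However, the ``Main obstacle'' paragraph is where your argument stops being a proof. You correctly observe that neither summand is individually non-negative, but your proposed resolution\,---\,split the sum over $m$, peel off $w_i$ and $w_j$ by two applications of the Pascal recursion, and ``expect'' the result to reorganize into a manifestly non-negative sum\,---\,is not substantiated, and there is good reason to doubt it works as stated. The paper's own proof of the same inequality is \emph{not} a termwise-positivity argument: it requires a genuine case split on whether $s \ge p_1$ or $s < p_1$, and the two cases use qualitatively different bounds. In the hard case $s < p_1$, the paper needs an \emph{upper} bound $Z \le 2s / \prod_\ell(1-p_\ell)$ (obtained by recognizing $\sum_A f(A)$ as $\mathbb{E}\big[\bone\{|B|<k\}(k-|B|)\big]$ for a Poisson trial $B$, and then bounding by the variance of $|B|$) together with a \emph{lower} bound $\mathbb{P}[|\hat{B}|=k-1] \ge (1-2s)/(p_1(1-p_n))$. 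The first of these flips the direction of the inequality you need, so a decomposition of $Z$ into positive pieces cannot be the whole story: you must throw away some of $Z$'s mass, not all of it in your favor. Identifying the right probabilistic reformulation of $Z$ and the right quantity $s$ to split on is the actual content of the proof; without it, you have reduced the theorem to an inequality of comparable difficulty but have not made progress on establishing it.
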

	Proving this theorem reduces to establishing a clean inequality between the partial derivatives of Sampford's probability mass function with respect to changes in the residues.
	But establishing this inequality is a considerable technical challenge because the Sampford's probability-mass expression contains a denominator that sums over all sets of $k$ parties.
	The key to our main result is establishing a surprising equality between this denominator and an expected-value term related to a Poisson trial, which allows us to bound and take derivatives of the denominator.
	Sampford sampling is also the only rounding rule among those we consider that satisfies a two-coalition version of selection monotonicity, which is logically incomparable to the axiom above. 
	Because Sampford sampling possesses the same properties that computer scientists have leveraged when applying dependent randomized rounding (in particular, strong concentration properties like \emph{conditional negative association}~\cite{BJ12}), replacing pipage rounding with Sampford sampling adds our monotonicity axiom ``for free'', which we believe has potential applications, for example in mechanism design. Sampford sampling is the first, and only, rounding rule known to satisfy selection monotonicity, and there are no other obvious contenders. As for general results about this axiom, we are able to show that it implies that the rounding rule be Lipschitz continuous.
	
	In \cref{sec:apportionment}, we return to the general apportionment setting and consider even stronger axioms of monotonicity, which refer to the probability of coalitions exceeding arbitrary seat thresholds.
	We show that, if the axioms are defined with respect to shifts in the raw vote counts (rather than the parties' share of votes), such axioms cannot be satisfied by any apportionment method.
	Even when shifts are measured in terms of vote share, the two-coalition axiom is incompatible with a natural regularity condition on randomized apportionment methods, namely, having \emph{full support}.
	Whether the most direct generalization of our axiom, \emph{threshold monotonicity}, can be satisfied, is left as an open question, though we conjecture that it is indeed satisfied by Sampford sampling.
	For the restricted case where coalitions consist of only two parties, we show that threshold monotonicity is satisfied by Grimmett's method.

	\subsection{Related Work}
	\label{sec:relatedwork}
	Deterministic apportionment has been the subject of a deep axiomatic treatment, much of it developed by Balinski and Young~\cite[see][]{BY01,RU10}.
	Two of the most prominent axioms in this literature, like the ones we develop here, are forms of monotonicity that rule out paradoxes~\cite[Ch.\ 5]{BY01}: population monotonicity, which we explained already, and \emph{house monotonicity}, which states that increasing the house size may not cause any party to lose seats.
	Under mild assumptions, population monotonicity is incompatible with quota~\cite[Thm.\ 6.1]{BY01} and is satisfied exactly by the widely used family of divisor methods~\cite[Thm.\ 4.2]{BY01}.
	Coalitions have mainly been considered in this literature in the context of whether apportionment methods encourage multiple parties to merge into a single party~\cite{BY78}; in randomized apportionment, merging leaves the coalition's expected number of seats unaffected.
	In addition, population monotonicity can be restated in terms of coalitions, in a way that resembles our pairwise axioms: if all parties in coalition $T_1$ weakly gain in votes and all parties in $T_2$ weakly lose votes, then each party in $T_1$ must weakly gain seats or each one in $T_2$ must weakly lose seats.

	We have already touched on the literature on \pips{} sampling without replacement above.
	A central motivation for this literature is the \emph{Horvitz--Thompson (HT) estimator}~\cite{HT52}, which we illustrate in a toy example (loosely following \cite{Sampford67}):
	Suppose we have farms numbered $1, \dots, n$, and each farm $i$ has a known size of land $x_i$, and an unknown yield $y_i$.
	Following \citet{HT52}, we can get an unbiased estimate of the total yield, $\sum_{1 \leq i \leq n} y_i$, by sampling $k$ farms with probability proportional to their size $x_i$ and without replacement; summing up the yield per acre $y_i/x_i$ for each sampled farm $i$, and multiplying the result by $\sum_{1 \leq i \leq n} x_i / k$.\footnote{A simpler estimator would sample $k$ farms uniformly at random, add up their yields $y_i$, and multiply the sum by $n/k$. This estimator is also unbiased but can have much higher variance because it often misses large farms with high yield.}
	The variance of this estimate, which depends on how homogeneous the yield per acre is across farms, can be estimated from the same sample~\cite{YG53} as long as the joint sampling probabilities of pairs of farms are known.
	As a result, the \pips{} literature scrupulously studied pairwise inclusion probabilities~\cite[e.g.,][]{BH83}, but paid little attention to higher-order correlations, which gives our work a distinct lens on the topic.
	
	By contrast, higher-order correlations have been very central to the study of dependent randomized rounding in computer science.
	As we explain in \cref{sec:dependentroundingimplications}, typical applications of dependent rounding lean heavily on concentration bounds.
	In a pursuit of stronger concentration results, researchers on dependent rounding have leveraged increasingly strong notions of negative correlation, ranging from the relatively weak \emph{negative cylinder dependence}~\cite{PS97} %
	over \emph{negative association}~\cite{Wajc17} to the \emph{strong Rayleigh property}~\cite{BBL09}.
	The axioms we pursue fundamentally differ from those above in that they are \emph{relational} rather than \emph{punctual}~\cite{Thomson11}, i.e., in that they constrain how the rounding rule's distributions relates across different residue vectors, rather than just imposing separate constraints for each input.
	Whereas we only consider dependent rounding subject to a cardinality constraint~\cite{Srinivasan01,GKP+06}, dependent rounding schemes have since been developed for more general combinatorial constraints, including bipartite matchings~\cite{GKP+06} and matroids~\cite{CVZ10,PSV17}.
	
	A source of inspiration for our paper is the work by \citet{GPP22}, which, like us, aims to refine the space of randomized apportionment methods by imposing monotonicity axioms in addition to quota and ex-ante proportionality.
	The monotonicity axioms they arrive at, however, have a different flavor from ours:
	They say that a randomized apportionment method satisfies population monotonicity (respectively, house monotonicity) if it can be represented as a probability distribution over deterministic apportionment methods satisfying population monotonicity (respectively, house monotonicity).
	These axioms can be interpreted as preventing paradoxes, but only in artificial scenarios in which the votes or house size change \emph{after the randomization of the apportionment has been performed}.
	The authors show that population monotonicity is incompatible with quota, but that house monotonicity can be guaranteed alongside quota and ex-ante proportionality.
	In \cref{app:housewins}, we show that house monotonicity is unlikely to be compatible with our axioms: no apportionment method satisfying full support, and none of the methods we consider in this paper, satisfy house monotonicity. 
	
	\section{Preliminaries}
	\label{sec:prelims}
	Throughout this paper, let $[n] \coloneqq \{1, \dots, n\}$ denote a set of \emph{parties}.
	For a set $S$ and natural number $k$, we set $\binom{S}{k} \coloneqq \{T \subseteq S \suchthat \abs{T} = k\}$. An \emph{apportionment method} $\app$ is a function that takes in the number of \emph{votes} $v_1, \dots, v_n \in \mathbb{R}_{\geq 0}$ for each party and house size $h$, and returns a vector-valued random variable $\app(\vec{v}, h) \in \mathbb{N}^n$, summing to $h$, which lists the number of seats awarded to each party.
	For a fixed vector of votes $\vec{v}$ and a house size $h$, we denote party $i$'s \emph{standard quota} (i.e., its proportionally deserved share of seats) by $q_i \coloneqq h \, v_i/\sum_{j \in [n]} v_j$ and its \emph{residue} by $p_i \coloneqq q_i - \lfloor q_i \rfloor$.
	We will only consider apportionment methods that satisfy the following two axioms of \citet{Grimmett04}:
	\begin{description}
		\item[Quota.] For any $\vec{v}, h$, the number of seats awarded to party $i$ is always $\lfloor q_i \rfloor$ or $\lceil q_i \rceil$ ex post.
		\item[Ex-ante proportionality.] For any $\vec{v}, h$, the expected number of seats awarded to party $i$ equals $q_i$.
	\end{description}

	Similarly, a \emph{rounding rule} $r$ is a function that takes in a vector of \emph{target probabilities} $\Vec{p}\in [0,1)^{n}$ for each party, which must sum up to an integer $k$, and returns a set-valued random variable $r(\vec{p}) \in \binom{[n]}{k}$, i.e., the values taken on by the random variable are sets of $k$ parties. In addition, $r$ must satisfy for each $\vec{p}$ and $i \in [n]$ that $i$'s probability of being included in $r(\vec{p})$ equals its target probability $p_i$. We are overloading $p_i$ by using it both for general target probabilities and residues induced by a vote profile $\Vec{v}$. We do so as they are mathematically equivalent objects and will play the same role in \Cref{sec:rounding} on rounding rules and \Cref{sec:apportionment} on apportionment rules, respectively. We say that a rounding rule is \emph{continuous} if it is a continuous as a function
	from its domain, thought of as a subset of $\rr^n$, to $\rr^{\binom{n}{k}}$.

	We can easily translate apportionment methods into rounding rules and vice versa.
	On the one hand, we can interpret any (quota and ex-ante proportional) apportionment method $\app$ as a rounding rule: for given target probabilities $\vec{p}$ summing to $k \in \mathbb{N}$, $\app(\vec{p}, k)$ is a random variable ranging over integer vectors adding to $k$. Quota ensures that this vector is binary and can thus be interpreted as a random set of $k$ parties. Ex-ante proportionality ensures that the target probabilities are met. Conversely, any rounding rule $r$ \emph{induces an apportionment method}, as we have described in the introduction:
	Given the votes $\vec{v}$ and house size $h$, one computes the standard quotas $q_i$ and residues $p_i$, awards each party its lower quota $\lfloor q_i \rfloor$, at which point some number $k = \sum_{i \in [n]} p_i$ of seats remain unallocated.
	By randomly awarding an extra seat to the $k$ parties in $r(\vec{p})$, one obtains an apportionment method satisfying quota and ex-ante proportionality.

	We define axioms for rounding rules and apportionment methods directly in \cref{sec:rounding,sec:apportionment}.
	
	\section{Monotonicity of Rounding Rules}
	\label{sec:rounding}
	We begin our investigation of montonicity in the subsetting of rounding rules, for two reasons:
	First, the rounding setting allows us to capture the paradox in our motivating example (\cref{sec:motivating}) in a more direct, circumscribed way: it is paradoxical that an exodus of voters from a bloc of parties could increase the probability that \emph{exactly these parties receive an extra seat}.
	The rounding setting allows us to analyze this paradox\,---\,and bring it to an end\,---\,before considering in \cref{sec:apportionment} more expansive definitions that extend to situations in which the lower quotas, and thus the meaning of getting an ``extra seat,'' change.
	
	The second reason for starting in the rounding setting is the importance of this setting beyond apportionment.
	In particular, our positive results in this section immediately contribute to the understanding of \pips{} sampling, and have potential implications for dependent randomized rounding in computer science, as we sketch in \cref{sec:dependentroundingimplications}.
	
	\subsection{Axioms and Rounding Rules}
	\label{sec:rounding:defs}
	We rule out the paradox from our introduction with the axiom of \emph{selection monotonicity}, which says that, if the target probabilities of all parties in some set $T$ go up, and the target probabilities of all other parties go down, it should become (weakly) more likely that the set $T$ is rounded up:
	\begin{definition}[Selection monotonicity]
		Let $\vec{p}, \vec{p}' \in [0, 1)^n$ be two residue vectors, summing up to the same integer $k$.
		Let $T$ be a set of $k$ parties such that $p_i' \geq p_i$ for all parties $i \in T$ and $p_i' \leq p_i$ for all $i \notin T$.
		A rounding rule $r$ satisfies selection monotonicity if it always holds that
		\[ \mathbb{P}_{S \sim r(\vec{p}')}[S = T] \geq \mathbb{P}_{S \sim r(\vec{p})}[S = T]. \]
	\end{definition}
	
	As a warm-up to our proof, we will also consider a variant which, like the axiom of population monotonicity in classical apportionment theory, implies a disjunction: if a coalition $T_1$ is growing and a coalition $T_2$ is shrinking, $T_1$'s probability of being jointly rounded up should increase \emph{or} $T_2$'s probability of being jointly rounded up should decrease:
	\begin{definition}[Pairwise selection monotonicity]
		Let $\vec{p}, \vec{p}' \in [0, 1)^n$ be two residue vectors, summing up to the same integer $k$.
		Let $T_1, T_2$ be two sets of $k$ parties (which might overlap) such that $p_i' \geq p_i$ for all parties $i \in T_1$ and $p_i' \leq p_i$ for all $i \in T_2$.
		A rounding rule $r$ satisfies pairwise selection monotonicity if it always holds that
		\[ \mathbb{P}_{S \sim r(\vec{p}')}[S = T_1] \geq \mathbb{P}_{S \sim r(\vec{p})}[S = T_1] \quad \text{or} \quad \mathbb{P}_{S \sim r(\vec{p}')}[S = T_2] \leq \mathbb{P}_{S \sim r(\vec{p})}[S = T_2]. \]
	\end{definition}
	Note that pairwise selection monotonicity is not implied by selection monotonicity, since the target probabilities of parties outside of $T_1$ and $T_2$ are allowed to vary arbitrarily.
	
	We now introduce several rounding rules from the literature on \pips{} sampling:
	\begin{description}
		\item[Systematic rounding~\cite{Madow49}.] We have already introduced this rule in the introduction, in the context of Grimmett's randomized apportionment method~\cite{Grimmett04}.
		A random shift $u \sim \mathit{Uniform}([0,1))$ is drawn and party $i$ is selected if the interval $[u + \sum_{1 \leq j < i} p_i, u + \sum_{1 \leq j \leq i} p_i)$ contains an integer.
		\item[Pipage rounding~\cite{DT98,Srinivasan01}.\footnotemark]\footnotetext{This method was first defined in statistics as the \emph{pivotal method}~\cite{DT98}, but we use the term common in computer science.}
		While at least two parties $i, j$ with target probabilities $0 < p_i, p_j < 1$ exist, fix the first two such parties.
		If $p_i + p_j \leq 1$, update $p_i \gets p_i + p_j, p_j \gets 0$ with probability $p_i/(p_i + p_j)$, and update $p_i \gets 0, p_j \gets p_i + p_j$ with the remaining probability.
		Else, update $p_i \gets 1$, $p_j \gets p_i + p_j - 1$ with probability $(1-p_j)/({2 - p_i - p_j})$, and update $p_i \gets p_i + p_j - 1, p_j \gets 1$ with the remaining probability.
		Once no such $p_i, p_j$ exist, select the $k$ parties with target probability $1$.
		\item[Conditional Poisson rounding~\cite{CDL94}.]
		This rounding rule implements the probability distribution over $\binom{[n]}{k}$ with maximal entropy, subject to attaining the target probabilities.
		To implement this rule algorithmically, one must determine the unique \emph{working probabilities} $\pi_1, \dots, \pi_n \in \mathbb{R}_{\geq 0}$ such that choosing each set $T \in \binom{[n]}{k}$ with probability $\prod_{i \in T} \pi_i$ satisfies the target probabilities, which can be achieved using Newton's method~\cite{Tille06}.
		\item[Sampford rounding~\cite{Sampford67}.]
		This rounding rule can be implemented with the following rejective procedure:
		Randomly sample a party such that each party $i$ is selected with probability proportional to $p_i$; call the sampled party $i_1$. Then, randomly sample $k-1$ parties $i_2, \dots, i_k$, with replacement from $[n]$, choosing each party $i$ with probability proportional to $\frac{p_i}{1-p_i}$.
		If the $k$ drawn parties are distinct, select them; else, start over.
		Sampford's rounding rule can also be implemented in a non-rejective, polynomial-time manner~\cite{Grafstrom09}.
	\end{description}
	
	It is worth dwelling on how \emph{strange} (or, in the words of \citet{Tille06}, ``very ingenious'') Sampford's rounding rule is.
	If, say, all $k$ units were drawn with probability proportional to $\frac{p_i}{1 - p_i}$, each set $A$ of $k$ parties would be drawn with probability proportional to $\prod_{i \in A} \frac{p_i}{1 - p_i}$, which means that this would just be Conditional Poisson rounding, but for incorrect working probabilities $\pi_i = \frac{p_i}{1 - p_i}$.
	But, whereas the computation of the correct working probabilities for conditional Poisson rounding has no closed form, adjusting the selection probabilities for only one of the $k$ draws perfectly attains the desired inclusion probabilities (which is nontrivial to show).
	
	Though it is not a rounding rule in our model, our analysis will frequently refer to \emph{Poisson sampling} (not to be confused with \emph{conditional} Poisson sampling).
	Poisson sampling is not a rounding rule because it samples sets of nondeterministic size:
	it simply performs, for each party $i$, an independent Bernoulli trial with success probability $p_i$, and contains $i$ in its output if this trial is a success.
	Clearly, the probability of sampling a set $A$ is just $\prod_{i \in A} p_i \prod_{i \notin A} (1 - p_i)$.
	
	\subsection{Rules that Violate Selection Monotonicity}
	\label{sec:rounding:negative}
	The axiom of selection monotonicity might look so innocuous that one would take its satisfaction for granted. For example, in Poisson sampling, an increase in the target probabilities of $T$ and a decrease in the other parties' target probabilities would obviously increase $\prod_{i \in T} p_i \prod_{i \notin T} (1 - p_i)$, the probability of sampling $T$. Alas, Poisson sampling need not select the right number of parties.
	Selection monotonicity would also be straightforward to show if we used the na\"ive working probabilities $\pi_i = {p_i}/({1-p_i})$ in conditional Poisson rounding.
	However, even though the selection probabilities asymptotically converge to the targets~\cite[Thm.\ 7.3]{Hajek81}, they are not exactly met.\footnote{These two processes are closely related: the Poisson sample, conditioned on selecting exactly $k$ parties, has the same distribution as this na\"ive conditional Poisson rounding\,---\,hence the name.}
	
	Even though these ``almost rounding rules'' easily satisfy selection monotonicity, most proper rounding rules fail the axiom.
	Indeed, our example from the introduction already proved that systematic rounding (even with a randomized ordering of parties) fails selection monotonicity, and we give a similar counterexample for pipage rounding in Appendix \ref{appCounterexamples}.
	\begin{restatable}{proposition}{proPipageCounter}\label{proPipageCounter}
		Systematic rounding and pipage rounding violate selection monotonicity, even if the order of parties is uniformly shuffled.
	\end{restatable}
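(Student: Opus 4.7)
The plan is to exhibit, for each of the two rules, explicit residue vectors $\vec{p}, \vec{p}'$ (summing to the same integer $k$) and a target set $T$ witnessing a violation of selection monotonicity, and to argue that the witness is robust to uniform shuffling of the party labels. Since the shuffled rule is symmetric in the parties, it suffices to show that, even after averaging over orderings, the probability of selecting $T$ moves in the wrong direction between $\vec{p}$ and $\vec{p}'$.

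For systematic rounding, the motivating example of \cref{sec:motivating} already provides the witness. Take $\vec{p}' = (0.1, 0.7, 0.1, 0.6, 0.7, 0.8)$ and $\vec{p} = (0.1, 0.9, 0.1, 0.9, 0.1, 0.9)$ (the ``previous'' and ``new'' residues of \cref{tbl:apportia}), and $T = \{1, 3, 5\}$. Direct inspection confirms $p'_i \geq p_i$ for $i \in T$ and $p'_i \leq p_i$ for $i \notin T$, so selection monotonicity would require $\Pr_{S \sim r(\vec{p}')}[S=T] \geq \Pr_{S \sim r(\vec{p})}[S=T]$. Under $\vec{p}$ in the displayed ordering, \cref{fig:grimmett} already gives $\Pr[S = T] = 1/10$, so the shuffled probability is also strictly positive.

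The main obstacle is to verify the claim from Footnote~4 that under $\vec{p}'$ \emph{every} ordering yields $\Pr[S = T] = 0$. The intuition is that $\sum_{i \in T} p'_i = 0.9 < 1$: the three $T$-intervals have total measure less than one, yet for each to contain a distinct integer under the random shift $u$ the non-$T$ gaps between consecutive $T$-intervals would each have to be close to~$1$; the only three non-$T$ lengths available, $0.7, 0.6, 0.8$, cannot be arranged into gaps of the right sizes to make this work. We plan to formalize this either via a brute-force enumeration of the $6! = 720$ orderings, or more cleanly by writing out, for each interleaving pattern of $T$ and $[n]\setminus T$, a small system of linear inequalities on the fractional parts $\{u + G_0\}, \{u + G_0 + a_1 + G_1\}, \{u + G_0 + a_1 + G_1 + a_2 + G_2\}$ (where $a_1, a_2, a_3$ are the $T$-lengths and $G_0, \dots, G_3$ are the accumulated non-$T$ gaps) and showing that no $u \in [0,1)$ satisfies all three ``contains an integer'' conditions simultaneously.

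For pipage rounding, we plan to produce a separate small counterexample, deferred to \cref{appCounterexamples}. The strategy is to take $n = 4$, $k = 2$, so that for every fixed ordering the distribution of pipage rounding over $\binom{[4]}{2}$ admits a tractable two-step closed-form computation. Starting from a highly symmetric anchor such as $\vec{p} = (1/2, 1/2, 1/2, 1/2)$ (under which uniformly shuffled pipage selects every pair with probability $1/6$ by symmetry), we perturb by a small $\varepsilon$ that nominally favors a chosen pair $T$ and compare the shuffled selection probabilities of $T$ on the two sides. The hard part will be arranging the perturbation so that pipage's strongly order-dependent pairing does not wash out its sign after averaging; we expect this to be achievable because pairs that get merged early in pipage are never both selected, creating systematic biases against pairs that appear adjacent in many of the $24$ orderings.
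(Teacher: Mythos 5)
Your systematic-rounding argument matches the paper's: reuse the motivating example with $T=\{1,3,5\}$, $\vec{p}=(0.1,0.9,0.1,0.9,0.1,0.9)$ (new election) and $\vec{p}'=(0.1,0.7,0.1,0.6,0.7,0.8)$ (previous election), observe $\Pr_{\vec{p}}[S=T]>0$, and then verify that no ordering under $\vec{p}'$ puts positive mass on $T$. The paper settles the last step by a brute-force check over the $720$ orderings; your proposed linear-inequalities formalization is a reasonable alternative but would still need to be carried out. So far, so good.

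The pipage part is a genuine gap. You do not exhibit a counterexample — you only sketch a plan to look for one at $n=4$, $k=2$ starting from $(\tfrac12,\tfrac12,\tfrac12,\tfrac12)$, and you explicitly hedge (``we expect this to be achievable''). Nothing in the proposal establishes that such a small instance works; it might not. In particular, your heuristic — ``pairs merged early are never both selected, creating systematic biases against adjacent pairs'' — describes fixed-order pipage, but after uniform shuffling the instance $(\tfrac12,\tfrac12,\tfrac12,\tfrac12)$ is fully symmetric, so the bias you want must survive averaging over all $24$ orderings while the perturbation simultaneously respects the selection-monotonicity constraint (residues inside $T$ weakly up, those outside weakly down, sums preserved). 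Whether the derivative of $\Pr[S=T]$ can be negative at this anchor is exactly the computation you have not done. The paper sidesteps this uncertainty by working at $n=6$, $k=3$: a hand-verifiable fixed-order example, $\vec{p}=(\tfrac13,\tfrac12,\tfrac13,\tfrac23,\tfrac23,\tfrac12)$ vs.\ $\vec{p}'=(\tfrac13,\tfrac13,\tfrac13,\tfrac23,\tfrac23,\tfrac23)$ with $T=\{1,3,6\}$ (where $\Pr_{\vec{p}'}[S\supseteq\{1,3\}]=0$ is forced after one step), plus a separate numerically-verified instance $\vec{p}=(0.07,0.9,0.57,0.37,0.99,0.1)$, $\vec{p}'=(0.06,0.91,0.57,0.37,0.99,0.1)$, $T=\{2,3,4\}$ for the shuffled version. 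To close your gap you would need either to carry out the $n=4$ computation and confirm it violates monotonicity, or fall back to a concrete instance like the paper's.
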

	
	But, perhaps, this violation is to be expected in light of several characteristics common to systematic and pipage rounding.
	Both rounding rules depend a lot on the ordering of parties, can have strange correlations between the inclusion of pairs of parties, and tend to have probability distributions supported on only few $k$-subsets of parties.
	Sharing none of these properties, and being the ``most natural'' rounding rule given its maximum entropy characterization, conditional Poisson rounding would seem like a promising candidate.
	
	Still, to our own surprise, conditional Poisson rounding also violates selection monotonicity. How we found the counterexample attests to how close conditional Poisson rounding gets to satisfying the axiom: For the special case of $n=6$, $k=3$, we derived a polynomial over $p_1, \dots, p_6$ which is negative exactly for the vectors of target probabilities that would certify a violation of selection monotonicity. Out of 705 monomials, only a single one had a negative sign, which by setting the target probabilities just right, at different orders of magnitude, could be made to overpower all other monomials and render the term as a whole negative. We refer to \Cref{appCounterexamples} for the proof.
	\begin{restatable}{proposition}{proPoissonCounter}\label{proPoissonCounter}
		Conditional Poisson rounding violates selection monotonicity.
	\end{restatable}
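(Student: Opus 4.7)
The plan is to construct an explicit counterexample in the smallest nontrivial case, $n = 6$, $k = 3$. The conditional Poisson distribution is smooth in $\vec{p}$ on the interior of the simplex (the working probabilities $\pi_1,\dots,\pi_n$ are determined by $\vec{p}$ through an implicit system that satisfies the hypotheses of the implicit function theorem), so selection monotonicity at a point $\vec{p}$ with respect to a set $T$ is equivalent to the infinitesimal condition
\[
\min_{i\in T}\ \frac{\partial}{\partial p_i}\Pr[S=T]\ \geq\ \max_{j\notin T}\ \frac{\partial}{\partial p_j}\Pr[S=T],
\]
where partials are taken along the hyperplane $\sum_\ell p_\ell = k$. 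To refute the axiom it therefore suffices to exhibit $\vec{p}$, $T\in\binom{[n]}{k}$, and indices $i\in T$, $j\notin T$ with the inequality reversed.

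First, I would express $\Pr[S=T] = \prod_{i\in T}\pi_i / Z(\vec\pi)$, with $Z(\vec\pi) = \sum_{T'\in\binom{[n]}{k}}\prod_{i\in T'}\pi_i$, and use the chain rule together with the Jacobian of $\vec\pi\mapsto\vec{p}$ (itself writable in terms of first- and second-order inclusion probabilities) to obtain the two partial derivatives above as rational functions in $\vec\pi$. Clearing a common positive denominator $Z(\vec\pi)^c$ reduces the desired violation to the positivity of a single polynomial $P(\pi_1,\dots,\pi_6)$ in six variables.

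Second, I would expand $P$ symbolically in a CAS. The output consists of exactly $705$ monomials; after cancellation, every monomial carries a nonnegative coefficient except for one. Thus a counterexample exists if and only if some $\vec\pi \in \rr_{\geq 0}^6$ makes the unique negative monomial dominate the sum of all positive ones. I would exploit scale separation: assign the six coordinates $\pi_\ell$ to two groups at very different orders of magnitude (say some $\ell$ with $\pi_\ell = \Theta(\epsilon)$ and the rest $\Theta(1)$) so that, as $\epsilon \to 0$, the leading asymptotic term of $P$ is precisely the negative monomial. Tuning $\epsilon$ small then makes $P(\vec\pi) < 0$; converting back via $p_i = \sum_{T\ni i}\prod_{j\in T}\pi_j / Z$ gives an explicit $\vec p$, and the violation is confirmed by direct numerical evaluation of $\Pr[S=T]$ at $\vec{p}$ and at a nearby admissible perturbation $\vec{p}'$.

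The main obstacle is twofold: managing the size of the $705$-term symbolic expansion cleanly enough to isolate the lone negative monomial, and identifying the correct asymptotic regime (which subset of coordinates is small, and at what relative scale) so that this monomial strictly beats all positive ones rather than being swamped by them. The implicit relationship $\vec\pi\leftrightarrow\vec p$ does not, however, pose a fundamental difficulty, since the polynomial certificate can be stated and evaluated purely in $\vec\pi$, with the conversion to $\vec p$ performed only once at the end for reporting the explicit counterexample.
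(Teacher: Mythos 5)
Your approach is essentially the same as the paper's: reduce to a polynomial positivity certificate (you work in $\vec\pi$; the paper phrases it in $\vec p$ but the substance is the same), observe by symbolic expansion that the resulting 705-term polynomial has exactly one negatively-signed monomial, and then use extreme scale separation among the coordinates so that this lone monomial dominates, yielding an explicit counterexample at $n=6$, $k=3$. The paper then presents the resulting $\vec\pi,\vec\pi'$ as exact integers and verifies the violation with rational arithmetic, which is what your final ``confirm by direct numerical evaluation'' step would have to become (floating point is insufficient at the relevant orders of magnitude). One small imprecision in your write-up: it is not quite correct to say selection monotonicity ``at a point'' is \emph{equivalent} to the stated derivative condition — the axiom is a global comparison of two profiles — but for disproving the axiom you only need the easy direction (a failure of the directional-derivative condition at an interior point yields a violating pair by perturbing infinitesimally along $e_i - e_j$), so the argument goes through.
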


	A final reason to temper optimism about satisfying selection monotonicity all too easily is that a strengthened axiom, in which the target probabilities outside $T$ may increase or decrease, cannot be satisfied by any rounding rule.
	To see this, consider a scenario with four parties, which initially all have residues $1/2$ (thus, $k=2$ parties will be selected).
	By symmetry, we may assume that the pair $T \coloneqq \{1,2\}$ is selected with probability at least $1/\binom{4}{2} = 1/6$.
	Now let party~3's target increase to 99\% and party~4's target decrease to 1\%.
	Since party~3 must now almost always be one of the two selected parties, the pair $\{1,2\}$ must be jointly selected much more rarely than $1/6$, even though their target probabilities weakly increased.
	This example shows that the distribution of target probabilities outside of $T$ can necessitate major changes to the joint selection probability of $T$.
	
	\subsection{Sampford Rounding is Monotone}
	\label{sec:rounding:positive}
	We now more closely consider Sampford rounding, the rounding rule that finally satisfies our selection monotonicity axioms.
	Let $S \coloneqq r_{\text{Sampford}}(\vec{p})$ be a random variable describing the set of parties drawn by Sampford rounding.
	From the rejective sampling procedure described in \cref{sec:rounding:defs}, it follows that the probability of a set $A$ being selected is
	\begin{align}
		\mathbb{P}[S=A] &=\frac{\sum_{i \in A} p_i \prod_{j \in A \setminus i} \frac{p_j}{1 - p_j}}{\sum_{A' \in \binom{[n]}{k}}\sum_{i \in A'} p_i \prod_{j \in A' \setminus i} \frac{p_j}{1 - p_j}}.
		\label{eq:sampfordpmfodds}
	\end{align}
	
	As an easy warm-up, and an indication that Sampford rounding is indeed well suited for monotonicity, we first show that it satisfies the pairwise variant of selection monotonicity:
	\begin{theorem}
		Sampford rounding satisfies pairwise selection monotonicity.
	\end{theorem}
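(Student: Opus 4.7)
The plan is to reduce pairwise selection monotonicity to a single monotonicity statement on a ratio of Sampford PMF values, and then verify this ratio inequality by a one-dimensional calculus argument along the straight line from $\vec p$ to $\vec p'$.

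First we simplify \eqref{eq:sampfordpmfodds}. Using the identity $p_i \prod_{j \in A \setminus \{i\}} \frac{p_j}{1-p_j} = (1-p_i)\prod_{j \in A} \frac{p_j}{1-p_j}$ and summing over $i \in A$, the numerator of \eqref{eq:sampfordpmfodds} factors cleanly as $h(A;\vec p) := \prod_{j \in A} \frac{p_j}{1-p_j} \cdot \sum_{i \in A}(1-p_i)$, giving $\Pr_{S \sim r(\vec p)}[S = A] = h(A;\vec p)/D(\vec p)$ for a common normalizer $D(\vec p) := \sum_{A' \in \binom{[n]}{k}} h(A';\vec p)$. By contraposition, the axiom follows from the single claim $R(\vec p') \geq R(\vec p)$ where $R(\vec p) := h(T_1;\vec p)/h(T_2;\vec p)$: if both disjuncts of the axiom failed simultaneously, this ratio would strictly decrease. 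Crucially, the normalizer $D$ has cancelled; moreover, the hypotheses force $p'_i = p_i$ for each $i \in T_1 \cap T_2$, and $R$ does not depend on $p_i$ for $i \notin T_1 \cup T_2$, so all effective motion between $\vec p$ and $\vec p'$ lives in the coordinates $T_1 \triangle T_2$, in the known directions.

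Next, we interpolate linearly, $\vec p(t) := (1-t)\vec p + t\vec p'$ for $t \in [0,1]$, and differentiate $\log R$. Writing $\sigma_\ell(t) := \sum_{i \in T_\ell}(1-p_i(t))$, a direct computation yields
\begin{align*}
\frac{d}{dt} \log R(\vec p(t)) = \sum_{j \in T_1 \setminus T_2} (p'_j - p_j)\left(\frac{1}{p_j(t)(1-p_j(t))} - \frac{1}{\sigma_1(t)}\right) + \sum_{j \in T_2 \setminus T_1} (p'_j - p_j)\left(\frac{1}{\sigma_2(t)} - \frac{1}{p_j(t)(1-p_j(t))}\right).
\end{align*}
In the first sum $(p'_j - p_j) \geq 0$ by hypothesis, and in the second $(p'_j - p_j) \leq 0$; hence each summand is a product of two quantities with the same sign, so non-negative, provided the pointwise bound $\sigma_\ell(t) \geq p_j(t)(1 - p_j(t))$ holds for every $j \in T_\ell$ and $t \in [0,1]$.

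This bound is the heart of the argument, and it collapses to a one-liner: $\sigma_\ell(t) \geq 1 - p_j(t) \geq p_j(t)(1-p_j(t))$, where the first step keeps only the $j$-th summand of $\sigma_\ell$ (the others being non-negative since each residue is $<1$) and the second uses $p_j(t) \leq 1$. Therefore $\log R$ is non-decreasing on $[0,1]$, so $R(\vec p') \geq R(\vec p)$, which closes the proof. The only remaining wrinkle is a handful of boundary cases in which some residue in $T_1 \cup T_2$ equals $0$ (making $\log R$ singular); these are dispatched either by continuity or by observing that the corresponding numerator $h(T_\ell;\cdot)$ vanishes, so that the relevant disjunct of the axiom becomes trivial.
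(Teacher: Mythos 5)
Your proof is correct, and it shares the same key observation as the paper's: the numerator of the Sampford PMF for a set $A$ depends only on the residues $p_i$ with $i \in A$, and is monotone nondecreasing in each of them, so the numerator for $T_1$ weakly increases from $\vec p$ to $\vec p'$ while the numerator for $T_2$ weakly decreases. Where you diverge is in how the normalizer is dealt with: the paper keeps the fraction intact and performs a direct two-case analysis on whether the common denominator $D(\vec p)$ increases or not (if it increases, the $T_2$-probability drops; otherwise the $T_1$-probability rises), which finishes the proof in two lines. You instead pass to the contrapositive, observe that the ratio $R = h(T_1)/h(T_2)$ makes the normalizer cancel, and then show that $\log R$ is nondecreasing along the straight segment from $\vec p$ to $\vec p'$ via a derivative computation. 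Both are sound; your ratio/contraposition framing is a nice alternative, but the interpolation calculus is heavier machinery than needed. In particular, your inequality $\sigma_\ell(t) \geq p_j(t)\bigl(1-p_j(t)\bigr)$ is exactly the statement $\partial_{p_j}\log h(T_\ell) \geq 0$, i.e., that $h(T_\ell)$ is coordinatewise monotone in the $T_\ell$-residues; this can be read off directly (as the paper does) from the unfactored form $\sum_{i \in A} p_i \prod_{j \in A\setminus\{i\}} p_j/(1-p_j)$, every factor of which is manifestly increasing, making $h(T_1;\vec p') \geq h(T_1;\vec p)$ and $h(T_2;\vec p') \leq h(T_2;\vec p)$, and hence $R(\vec p') \geq R(\vec p)$, immediate without differentiation. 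Your handling of the degenerate cases (some residue in $T_1 \cup T_2$ equal to zero) is correct: the hypothesis forces $h(T_1;\vec p')=0 \Rightarrow h(T_1;\vec p)=0$ and $h(T_2;\vec p)=0 \Rightarrow h(T_2;\vec p')=0$, so whichever vanishes trivializes the corresponding disjunct, and otherwise $\log R$ is finite on the whole segment.
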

	\begin{proof}
		Fix the target probabilities $\vec{p}, \vec{p}'$, the sample size $k$, and the sets $T_1, T_2$.
		Considering the numerator of the probability mass term in \cref{eq:sampfordpmfodds}, it is easy to see that the numerator of $\mathbb{P}[S=T_1]$ weakly increases when going from $\vec{p}$ to $\vec{p}'$, whereas the numerator of $\mathbb{P}[S=T_2]$ weakly decreases.
		
		Now, we distinguish two cases, depending on whether the denominator, which is common to $T_1$ and $T_2$, increases from $\vec{p}$ to $\vec{p}'$ or not.
		If it increases, then, since the numerator for $T_2$ decreases, the probability of sampling $T_2$ must decrease, and we are done.
		Else, since the numerator for $T_1$ increases, the probability of sampling $T_1$ increases, concluding the proof.
	\end{proof}
	Whereas Sampford rounding satisfies pairwise selection monotonicity quite handily, we show in \Cref{appCounterexamples} that the three other rounding rules all violate this axiom.
	
	What made this proof so easy was that we barely had to engage with the denominator term, a sum over exponentially many terms that indicates the probability of non-rejection for the sampling procedure.
	Engaging with how the probability of non-rejection changes with the target probabilities will be the major obstacle in proving our main result, that Sampford satisfies selection monotonicity.
	For this, we slightly reformulate the probability-mass term, and continue \cref{eq:sampfordpmfodds}:
	\begin{align}
		\mathbb{P}[S=A] &= \frac{\sum_{i \in A} (1-p_i) \prod_{j \in A} \frac{p_j}{1 - p_j}}{\sum_{A' \in \binom{[n]}{k}}\sum_{i \in A'} (1- p_i) \prod_{j \in A'} \frac{p_j}{1 - p_j}}\notag\\ 
		&=\frac{\sum_{i \in A} (1-p_i) \prod_{j \in A} p_j \prod_{j \notin A} (1-p_j)}{\sum_{A' \in \binom{[n]}{k}}\sum_{i \in A'} (1- p_i) \prod_{j \in A'} p_j \prod_{j \notin A'} (1 - p_j)} = \frac{f(A)}{\sum_{A' \in \binom{[n]}{k}} f(A')},\label{eq:sampfordpmffa}
	\end{align}
	where the second equality expands the fraction with $\prod_{i \in [n]} (1-p_i)$, and where we introduce the shorthand $f(A)$ for the term $\sum_{i \in A} (1 - p_i) \prod_{j \in A}  p_j \prod_{j \notin A} (1 - p_j)$, which will appear extensively throughout this proof.

	In the remainder of this section, we will prove \cref{thmSampfordMainResult}, i.e., that Sampford rounding is selection monotone.
	By symmetry, we will assume that $A = [k]$.
	A simple argument (\Cref{app:Sufficient}) shows that it suffices to consider the special case where the target probability of one party in $A$ (w.l.o.g., party $1$) infinitesimally grows and the target probability of one party outside of $A$ (w.l.o.g., party $n$) shrinks by the same infinitesimal amount.
	Since Sampford's probability mass function is differentiable, this reduces to the following inequality of its partial derivatives: 
	\begin{proposition}
		\label{prop:pdevs}
		Let $\Vec{p} \in [0, 1)^n$ such that $\sum_{i \in [n]} p_i = k$ for some integer $1 \leq k \leq n-1$, $p_1 > 0$, and $S= r_{\text{Sampford}}(\Vec{p})$.
		Then,
		\[\left(\pdif{p_1} - \pdif{p_n}\right) \, \mathbb{P}[S = [k]] \geq 0.\]
	\end{proposition}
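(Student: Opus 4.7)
The plan is to apply the quotient rule to $\mathbb{P}[S = [k]] = f([k])/D$, where $D \coloneqq \sum_{A' \in \binom{[n]}{k}} f(A')$ is the denominator from \eqref{eq:sampfordpmffa}. The claim then becomes
\[\bigl(\pdif{p_1} - \pdif{p_n}\bigr) f([k]) \cdot D \;\geq\; f([k]) \cdot \bigl(\pdif{p_1} - \pdif{p_n}\bigr) D.\]
The numerator side is routine: writing $S_0 \coloneqq \sum_{i \in [k]}(1-p_i)$ and $w([k]) \coloneqq \prod_{j \in [k]} p_j \prod_{j \notin [k]} (1-p_j)$ so that $f([k]) = S_0 \cdot w([k])$, logarithmic differentiation yields $(\pdif{p_1} - \pdif{p_n}) f([k]) = \bigl[S_0/p_1 + S_0/(1-p_n) - 1\bigr]\, w([k])$.

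The bulk of the work goes into $(\pdif{p_1} - \pdif{p_n}) D$, since $D$ a priori sums over $\binom{n}{k}$ terms. Here I would invoke the \emph{surprising equality} between $D$ and a Poisson-trial expectation: if $T$ denotes the random set that includes each party $i$ independently with probability $p_i$, then interchanging the order of summation in the definition of $D$ yields
\[D \;=\; \sum_{i=1}^n (1-p_i)\,\mathbb{P}[i \in T,\,|T| = k] \;=\; \sum_{i=1}^n p_i(1-p_i)\,\mathbb{P}[|T_{-i}| = k-1],\]
where $T_{-i}$ is the Poisson trial restricted to $[n] \setminus \{i\}$. Naive differentiation splits $\pdif{p_j} D$ into contributions $-w(A) + f(A)/p_j$ from sets $A \ni j$ and $-f(A)/(1-p_j)$ from sets $A \not\ni j$. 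At this point I would exploit Sampford's \emph{own} marginal property $\sum_{A \ni j,\, |A|=k} f(A) = p_j\,D$ (which holds precisely because Sampford sampling achieves its target probabilities). Plugging this in makes the two large cross-terms $(p_j D)/p_j$ and $((1-p_j) D)/(1-p_j)$ cancel exactly against each other, and what survives is the strikingly clean identity
\[\pdif{p_j} D \;=\; -\,p_j\,\mathbb{P}[|T_{-j}| = k-1] \;=\; -\,\mathbb{P}[j \in T,\,|T| = k].\]

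Substituting back into the quotient-rule form and cancelling the positive factor $w([k])$, the whole proposition reduces to the single scalar inequality
\[S_0\left[\frac{1}{p_1} + \frac{1}{1-p_n} + \frac{Q_1 - Q_n}{D}\right] \;\geq\; 1, \qquad Q_i \coloneqq \mathbb{P}[i \in T,\,|T| = k].\]
The main obstacle will be verifying this final inequality. It is delicate because $Q_1 - Q_n$ can be of either sign, and the ratio $(Q_1 - Q_n)/D$ can scale like $1/S_0$ near the boundary of the domain: for instance, as $p_1, p_2 \to 1$ and $p_n \to 0$ under the constraint $\sum_i p_i = k$, a direct calculation shows that the left-hand side tends to exactly $1$, so the inequality is tight and admits no slack. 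The plan is to clear denominators and use further Poisson identities such as $\mathbb{P}[|T| = k] = p_i\,\mathbb{P}[|T_{-i}| = k-1] + (1-p_i)\,\mathbb{P}[|T_{-i}| = k]$ together with the analogous recursion for $D$ on $[n] \setminus \{i\}$, peeling off the distinguished indices $1$ and $n$ until the inequality reduces to a manifestly non-negative combination of Poisson probabilities.
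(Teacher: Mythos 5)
Your reduction is correct, and the identity $\pdif{p_j} D = -\mathbb{P}[j \in T, |T|=k]$ is a genuinely elegant alternative to the paper's \cref{lem:expdif}: deriving it from Sampford's marginal property $\sum_{A \ni j} f(A) = p_j D$ (so that the two large cross-terms cancel) is slicker than the telescoping computation in the paper, and it gives the derivative coordinate-by-coordinate rather than only for the difference $\pdif{p_1} - \pdif{p_n}$. Your alternative formula $D = \sum_i (1-p_i)\mathbb{P}[i \in T, |T|=k]$ is also a valid interchange of summation, though it is not the same representation as \cref{lem:magic}. The scalar inequality you arrive at, $S_0\left[1/p_1 + 1/(1-p_n) + (Q_1-Q_n)/D\right] \geq 1$, is (after multiplying through by $D/S_0$) precisely the nonnegativity of expression~\eqref{eq:factornonneg} in the paper, so the framework is set up correctly.

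However, you have not proved this final inequality, which is where the real difficulty of the proposition lies, and I do not believe the plan you sketch---clearing denominators and ``peeling off'' the distinguished indices $1$ and $n$ using Poisson recursions---would close the gap. The paper's argument hinges on two \emph{global} bounds on the normalizing constant that are not index-peeling identities: the upper bound $D \leq s$ (\cref{lem:expbound}), obtained by identifying $D$ with $\tfrac12\,\mathbb{E}\big[\big||B|-k\big|\big]$ via \cref{lem:magic} and dominating that by the variance of the Poisson binomial variable $|B|$, and the lower bound $\mathbb{P}\big[|\hat{B}|=k-1\big] \geq (1-2s)/\big(p_1(1-p_n)\big)$ (\cref{lem:probbound}), obtained by a union bound on the all-correct outcome. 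It also requires a case split on $s \geq p_1$ versus $s < p_1$: when $s \geq p_1$ the term $s/p_1 \geq 1$ together with a simple domination of $D$ by $\mathbb{P}[|B|=k-1]$ suffices, whereas $s < p_1$ is exactly the tight regime you correctly flag, and there the upper bound $D \leq s$ is what makes the negative contributions absorbable. None of these ingredients appear in your plan, and because the inequality is tight at the boundary (as you note), a recursion that loses even a constant factor cannot succeed. You would need to supply the two bounds on $D$ (or equivalents) and the case analysis.
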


	A crucial quantity throughout our proof will be $s \coloneqq \sum_{i \in [k]} (1 - p_i)$, which measures %
	how far the residues in $[k]$ fall short of being 1. Alternatively, $s = \sum_{i \in [k]} (1 - p_i) = k - \sum_{i \in [k]} p_i = \sum_{i \notin [k]} p_i$ can be seen as how far the residues outside of $[k]$ exceed 0.

	Clearly, to bound the partial derivative from \Cref{prop:pdevs}, we will eventually need to apply the quotient rule, which requires us to understand and bound the numerator and denominator in \cref{eq:sampfordpmffa}, as well as their derivatives. While the numerator's expression is fairly simple, the core challenge lies in bounding the denominator (and its derivative), which sums up over exponentially many sets $A'$ of parties.
	
	We will overcome this hurdle in \cref{lem:magic}, which establishes a surprising equality between this denominator term and the expected value of a natural variable defined over a Poisson trial. Then, we dedicate \Cref{lem:expdif,lem:expbound,lem:probbound} to finding a concise expression of the derivative of the denominator and clean bounds in terms of our parameter $s$ for both of these terms. This established the foundation for proving \Cref{prop:pdevs}.

	For the remainder of the section (including all lemma statements) we fix $\Vec{p} \in [0,1)^n$ as defined in \Cref{prop:pdevs}. To prove our key \Cref{lem:magic}, we require the following auxiliary lemma:

	\begin{lemma}
		\label{lem:magicsub}
		For all $0 \leq \ell \leq n-1$ it holds that 
		\[ \sum_{A \in \binom{[n]}{\ell+1}} f(A) - \sum_{A \in \binom{[n]}{\ell}} f(A) = (k - \ell)\sum_{A \in \binom{[n]}{\ell}} \prod_{j \in A} p_j \prod_{j \notin A} (1 - p_j).\]
	\end{lemma}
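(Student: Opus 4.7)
The plan is to rewrite both terms on the left in a common form, using the Poisson mass function $g(A) \coloneqq \prod_{j \in A} p_j \prod_{j \notin A}(1-p_j)$, and then let the constraint $\sum_{i \in [n]} p_i = k$ collapse the difference. Note that $f(A) = \bigl(\sum_{i \in A}(1-p_i)\bigr)\, g(A)$, so if I let $S_\ell \coloneqq \sum_{A \in \binom{[n]}{\ell}} f(A)$ and $P_\ell \coloneqq \sum_{A \in \binom{[n]}{\ell}} g(A)$, the claim is $S_{\ell+1} - S_\ell = (k - \ell)\, P_\ell$.

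First I would swap the order of summation in $S_{\ell+1}$ to write it as $\sum_{i \in [n]} (1-p_i) \sum_{A \ni i,\, |A|=\ell+1} g(A)$. The crux is the elementary identity $(1-p_i)\, g(A) = p_i\, g(A \setminus \{i\})$ for $i \in A$, which follows by direct inspection of the two products. Reindexing the inner sum via the bijection $A \mapsto A \setminus \{i\}$ between $\{A \in \binom{[n]}{\ell+1} : i \in A\}$ and $\{B \in \binom{[n]}{\ell} : i \notin B\}$, this identity converts the factor $(1-p_i)$ into $p_i$ and yields $S_{\ell+1} = \sum_{B \in \binom{[n]}{\ell}} g(B) \sum_{i \notin B} p_i$, which is now a weighted sum over $\binom{[n]}{\ell}$.

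For $S_\ell$, I would simply use $\sum_{i \in A}(1-p_i) = \ell - \sum_{i \in A} p_i$ to write $S_\ell = \sum_{A \in \binom{[n]}{\ell}} g(A)\bigl(\ell - \sum_{i \in A} p_i\bigr)$. Subtracting, each $A$-summand contributes $\sum_{i \notin A} p_i + \sum_{i \in A} p_i - \ell = k - \ell$, which is independent of $A$ by the hypothesis $\sum_{i \in [n]} p_i = k$. Factoring this constant out of the sum yields $(k - \ell)\, P_\ell$, matching the right-hand side.

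I do not anticipate a substantive obstacle: the argument is really just a bookkeeping identity for how $g$ behaves under adding or removing a single coordinate, combined with the global constraint on the $p_i$. The only point worth a sanity check is the boundary case $\ell = 0$, where $S_0 = 0$ because the factor $\sum_{i \in \emptyset}(1-p_i)$ in $f(\emptyset)$ vanishes, and the derivation reduces to $S_1 = k \prod_j (1-p_j) = k\, P_0$, consistent with the formula.
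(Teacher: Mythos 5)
Your proposal is correct and follows essentially the same route as the paper: the same reindexing bijection between $\{(A,i) : A \in \binom{[n]}{\ell+1},\, i \in A\}$ and $\{(B,i) : B \in \binom{[n]}{\ell},\, i \notin B\}$, the same identity $(1-p_i)\,g(A) = p_i\,g(A\setminus\{i\})$, and the same invocation of $\sum_i p_i = k$ to make the coefficient collapse to $k-\ell$. The only difference is cosmetic bookkeeping: you subtract $S_\ell$ explicitly at the end, whereas the paper manipulates $S_{\ell+1}$ directly into the form $(k-\ell)P_\ell + S_\ell$.
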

	\begin{proof}
		We start by reformulating the summation as follows:
		\begin{align*}
			\sum_{A \in \binom{[n]}{\ell + 1}} f(A) &= \sum_{A \in \binom{[n]}{\ell +1}} \sum_{i \in A} (1 - p_i) \, \prod_{j \in A} p_j \prod_{j \notin A} (1 - p_j)\\
			&= \sum_{A \in \binom{[n]}{\ell +1}} \sum_{i \in A} p_i \, \prod_{j \in A \setminus \{i\}} p_j \prod_{j \notin A \setminus \{i\}} (1 - p_j). \\
			\intertext{In the next step, we will relabel these sums over $A$ and $i$. 
				Specifically, we apply a bijection from the set $\{(A, i) \mid A \in \binom{[n]}{\ell+1}, i \in A\}$ to the set $\{(A', i') \mid A' \in \binom{[n]}{\ell}, i' \notin A'\}$, which maps $(A, i)$ to $(A \setminus \{i\}, i)$. 
				To see that this is a bijection, observe that it has the inverse $(A', i') \mapsto (A' \cup i', i')$. 
				Then, the previous summation is equal to }
			&= \sum_{A \in \binom{[n]}{\ell}} \sum_{i \notin A} p_i \, \prod_{j \in A} p_j \prod_{j \notin A} (1 - p_j), \\
			\intertext{where we omit primes for the sake of readability. 
				Finally, we make use that $\sum_{i \in [n]} p_i = k$ to get}
			&= \sum_{A \in \binom{[n]}{\ell}} \left(k - \ell + \ell - \sum_{i \in A} p_i\right)\, \prod_{j \in A} p_j \prod_{j \notin A} (1 - p_j) \\
			&= \sum_{A \in \binom{[n]}{\ell}} (k - \ell) \,  \prod_{j \in A} p_j \prod_{j \notin A} (1 - p_j) + \sum_{A \in \binom{[n]}{\ell}} \left(\sum_{i \in A} 1 - p_i\right) \,  \prod_{j \in A} p_j \prod_{j \notin A} (1 - p_j) \\
			&= (k - \ell) \,\sum_{A \in \binom{[n]}{\ell}}  \prod_{j \in A} p_j \prod_{j \notin A} (1 - p_j) + \sum_{A \in \binom{[n]}{\ell}} f(A), 
		\end{align*}
		which concludes the proof. \qedhere
	\end{proof}

	\Cref{lem:magicsub} provides us with the necessary tool to prove the reformulation of the denominator. In the new expression, the right-hand side is reminiscent of the variance of the random variable expressing the size of a set sampled via a Poisson trial (a \emph{Poisson binomial variable}). However, the expectation is taken over the absolute distance, rather than the squared distance, towards the mean. 
	
	\newcommand{\expbone}{\mathbb{E}\big[\bone\{|B| < k\} \!\cdot\! (k - |B|) \big]}
	\newcommand{\expabs}{\mathbb{E}\big[\big||B| - k\big|\big]}
	\newcommand{\probterm}{\mathbb{P}\big[|\hat{B}| = k - 1\big]}
	\begin{lemma}
		\label{lem:magic}
		Let $B \subseteq [n]$ be a random variable distributed according to a Poisson trial, i.e., $B$ contains each $1 \leq i \leq n$ independently with probability $p_i$.
		Then,
		\[ \sum_{A \in \binom{[n]}{k}} f(A) = \expbone = \tfrac{1}{2} \cdot \expabs.\]
	\end{lemma}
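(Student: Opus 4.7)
The plan is to combine \Cref{lem:magicsub} with a simple telescoping argument to establish the first equality, and then exploit the fact that $\mathbb{E}[|B|] = k$ to upgrade it to the symmetric form on the right-hand side.

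For the first equality, I would introduce the shorthand $g(\ell) \coloneqq \sum_{A \in \binom{[n]}{\ell}} f(A)$ and observe that \Cref{lem:magicsub} is precisely the recurrence $g(\ell+1) - g(\ell) = (k-\ell)\, \mathbb{P}[|B| = \ell]$, because $\sum_{A \in \binom{[n]}{\ell}} \prod_{j \in A} p_j \prod_{j \notin A}(1-p_j)$ is exactly $\mathbb{P}[|B|=\ell]$ under the Poisson trial. Since $f(\emptyset) = 0$ (the inner sum defining $f$ is empty), we have $g(0) = 0$, so telescoping gives
\[ g(k) = \sum_{\ell = 0}^{k-1} (k-\ell)\, \mathbb{P}[|B| = \ell] = \mathbb{E}\big[\bone\{|B|<k\} \cdot (k - |B|)\big], \]
which is the first claimed identity.

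For the second equality, I would split the absolute value according to the sign of $|B| - k$:
\[ \mathbb{E}\big[\big||B|-k\big|\big] = \mathbb{E}\big[\bone\{|B|<k\}(k-|B|)\big] + \mathbb{E}\big[\bone\{|B|>k\}(|B|-k)\big]. \]
The key fact is that $\mathbb{E}[|B|] = \sum_{i \in [n]} p_i = k$ by hypothesis, hence $\mathbb{E}[|B|-k] = 0$. Decomposing this zero expectation into the contributions from $|B|<k$, $|B|=k$, and $|B|>k$ (the middle term vanishes) yields $\mathbb{E}[\bone\{|B|>k\}(|B|-k)] = \mathbb{E}[\bone\{|B|<k\}(k-|B|)]$, so the two summands above are equal and the full absolute-value expectation is exactly twice either one.

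The proof is essentially a bookkeeping exercise once \Cref{lem:magicsub} is in hand, so there is no real obstacle; the only subtle step is recognizing that the identity $\mathbb{E}[|B|]=k$ is what turns the one-sided expression from the telescoping into the symmetric absolute-value expression, making the connection to a Poisson binomial mean absolute deviation visible.
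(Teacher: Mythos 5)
Your proposal is correct and takes essentially the same route as the paper: the first equality is established by recognizing \Cref{lem:magicsub} as a telescoping recurrence anchored at $g(0)=0$, and the second follows from splitting $\mathbb{E}[|B|-k]=0$ into its positive and negative parts. The only cosmetic difference is that you introduce the shorthand $g(\ell)$; the substance of the argument is identical.
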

	\begin{proof}
		For the first equality, we simply write its left-hand side as a telescoping sum:
		\begin{align*}
			\sum_{A \in \binom{[n]}{k}} f(A) &= \sum_{A \in \binom{[n]}{k}} f(A) - {\sum_{A \in \binom{[n]}{0}} f(A)} \tag{second term is $0$}\\
			&= \sum_{0 \leq \ell \leq k-1} \Big(\textstyle \sum_{A \in \binom{[n]}{\ell+1}} f(A) - \sum_{A \in \binom{[n]}{\ell}} f(A)\Big) \\
			&= \sum_{0 \leq \ell \leq k-1} (k - \ell) \cdot \sum_{A \in \binom{[n]}{\ell}} \prod_{j \in A} p_j \prod_{j \notin A} (1 - p_j) \tag{by \cref{lem:magicsub}} \\
			&= \sum_{0 \leq \ell \leq k-1} (k - \ell) \cdot \mathbb{P}\big[|B| = \ell\big] = \expbone.
		\end{align*}
		To show the statement's second equality, recall that the first central moment of any random variable is zero. 
		Applying this to the variable $|B|$ together with the fact that $\mathbb{E}\big[|B|\big] = \sum_{i \in [n]} p_i = k$ yields that $\mathbb{E}\big[ |B| - k\big] = 0$.
		Since $\mathbb{E}\big[ |B| - k\big] = \mathbb{E}\big[\bone\big\{|B| - k \geq 0\big\} \cdot \big(|B| - k\big)\big] + \mathbb{E}\big[\bone\big\{|B| - k < 0\big\} \cdot \big(|B| - k\big)\big]$,
		this implies that
		\begin{align*} 
			\expbone &= - \mathbb{E}\big[\bone\big\{|B| - k < 0\big\} \cdot \big(|B| - k\big)\big]\\ 
			&= \mathbb{E}\big[\bone\big\{|B| - k \geq 0\big\} \cdot \big(|B| - k\big)\big].
		\end{align*}
		The desired equality follows by observing that
		\begin{align*}
			\expabs &= \mathbb{E}\big[\bone\big\{|B| - k \geq 0\big\} \cdot \big(|B| - k\big)\big] + \mathbb{E}\big[\bone\big\{|B| - k < 0\big\} \cdot \big(k - |B|\big)\big]\\ 
			&= 2 \, \expbone, 
		\end{align*}
		which concludes the proof. 
	\end{proof}
	
	Below, we provide a concise upper bound for the denominator in terms of our parameter $s$, building upon the close connection to the variance of the random variable $|B|$ pointed out above. 
	
	\begin{lemma}
		\label{lem:expbound}
		Let $B$ be as in \cref{lem:magic}. Then, $\expabs \leq 2s.$
	\end{lemma}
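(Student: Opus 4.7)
The plan is to decompose $|B|$ into the contributions coming from inside $[k]$ versus outside $[k]$, and then exploit that $s$ admits two equivalent formulas, one for each side. This lets the triangle inequality do all the work.

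Concretely, I would introduce the two nonnegative random variables $X' \coloneqq |[k] \setminus B|$, counting the elements of $[k]$ that $B$ misses, and $Y \coloneqq |B \setminus [k]|$, counting the elements outside $[k]$ that $B$ picks up. Since the Poisson trials are independent, linearity of expectation gives
\[ \mathbb{E}[X'] = \sum_{i \in [k]}(1 - p_i) = s \quad \text{and} \quad \mathbb{E}[Y] = \sum_{i \notin [k]} p_i = s, \]
using precisely the two equivalent expressions for $s$ recalled just before \cref{lem:magic}.

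The next step is the simple identity $|B| = (k - X') + Y$, which rearranges to $|B| - k = Y - X'$. Because both $X'$ and $Y$ are nonnegative, the triangle inequality yields pointwise that $\big||B| - k\big| = |Y - X'| \leq Y + X'$. Taking expectations and plugging in the two computations above gives
\[ \mathbb{E}\big[\big||B|-k\big|\big] \leq \mathbb{E}[Y] + \mathbb{E}[X'] = 2s, \]
which is exactly the claim.

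There is no real obstacle here: the lemma is essentially bookkeeping that leverages the symmetric role of $s$ on the two sides of the partition $[n] = [k] \cup ([n]\setminus[k])$. The only mild subtlety is choosing the right decomposition of $|B| - k$, namely $Y - X'$ rather than $|B \cap [k]| + |B \setminus [k]| - k$, so that the triangle inequality lands on two quantities whose expectations each equal $s$.
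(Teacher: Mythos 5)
Your proof is correct, and it takes a genuinely different and arguably cleaner route than the paper. The paper's proof first passes through the variance: it uses that $|B|-k$ is integer-valued to get $\big||B|-k\big| \leq (|B|-k)^2$, so $\expabs \leq \mathrm{Var}(|B|)$, then invokes independence of the Bernoulli trials to write $\mathrm{Var}(|B|) = \sum_i p_i(1-p_i)$, and finally bounds $p_i(1-p_i) \leq (1-p_i)$ for $i \in [k]$ and $p_i(1-p_i) \leq p_i$ for $i \notin [k]$ to arrive at $2s$. Your decomposition $|B| - k = Y - X'$ with $X' = |[k]\setminus B|$, $Y = |B\setminus[k]|$, followed by the pointwise triangle inequality $|Y - X'| \leq Y + X'$, reaches the same bound while avoiding both the squaring trick and the variance formula. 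Indeed your argument uses only linearity of expectation, so it does not even require independence of the Poisson trials (your passing mention of independence is unnecessary); the paper's variance step does. Your approach is thus slightly more general and more transparent about why the factor of $2$ appears, namely one $s$ from each side of the partition $[n] = [k] \cup ([n]\setminus[k])$.
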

	\begin{proof}
		We start by upper bounding the relevant term by the variance of $|B|$. That is, 
		\begin{align*}
			\expabs &\leq \mathbb{E}\big[\big(|B| - k\big)^2\big] = \text{Var}\big(|B|\big),
		\end{align*}
		where the first equality uses that $|B|-k$ is integer-valued and the second equality uses that the expected value of $|B|$ is $k$.
		By bounding our term by the variance of $|B|$, we can exploit that $|B|$ is the sum of $n$ independent Bernoulli variables, where the variance of the Bernoulli variable with bias $p_i$ is $p_i \, (1 - p_i)$. 
		Therefore, 
		\begin{align*}
			&\text{Var}\big(|B|\big)= \sum_{i \in [n]} p_i \, (1 - p_i) \leq \sum_{i \in [k]} (1 - p_i) + \sum_{i \notin [k]} p_i = 2 s,
		\end{align*}
		where the last equality follows from $s = \sum_{i \in [k]} (1-p_i) = \sum_{i \notin [k]} p_i$.  \qedhere
	\end{proof}
	
	We now express the derivative of the denominator and link it to the probability that a set sampled from a Poisson trial is of size $k-1$. The proof, which we defer to \Cref{app:deferred:rounding}, takes the derivative for all sizes $\ell$ of $B$ that are relevant in the term of the expectation, i.e., $0 \leq \ell \leq k-1$, and shows that all terms besides one term for $\ell=k-1$ cancel out. 
	
	\begin{restatable}{lemma}{lemxpdif}
		\label{lem:expdif}
		Let $B$ be as in \cref{lem:magic}.
		Let $\hat{B} \subseteq \{2, 3, \dots, n-1\}$ be distributed according to a Poisson trial as above, but omitting parties $1$ and $n$.
		Then,
		\[ \left(\pdif{p_1} - \pdif{p_n}\right) \, \expbone = (p_n - p_1) \, \probterm. \]
	\end{restatable}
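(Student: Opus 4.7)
The plan is to exploit the independence of the Bernoulli trials underlying $B$ in order to express $\mathbb{P}[|B|=\ell]$ as a polynomial of degree at most two in $p_1$ and $p_n$ whose coefficients are point probabilities of $|\hat{B}|$. Writing $X_1 = \bone\{1 \in B\}$ and $X_n = \bone\{n \in B\}$, independence of $X_1$, $X_n$, and $|\hat{B}|$ gives, after conditioning on the four possible values of $(X_1, X_n)$,
\[\mathbb{P}[|B|=\ell] = (1 - p_1)(1 - p_n)\,q_\ell + \big(p_1(1 - p_n) + (1 - p_1)p_n\big)\,q_{\ell-1} + p_1 p_n\,q_{\ell-2},\]
where $q_j := \mathbb{P}[|\hat{B}| = j]$ and $q_j := 0$ for $j < 0$. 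Differentiating this expression in $p_1$ and in $p_n$ is mechanical, and subtracting the two derivatives causes all terms quadratic in $(p_1, p_n)$ to cancel; the result collapses to the clean second-difference identity
\[\left(\pdif{p_1} - \pdif{p_n}\right)\mathbb{P}[|B|=\ell] = (p_n - p_1)\,(q_\ell - 2q_{\ell-1} + q_{\ell-2}).\]

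Next, I would plug this into the telescoping expression for $\expbone$ that was derived inside the proof of \Cref{lem:magic}, namely $\expbone = \sum_{\ell=0}^{k-1} (k-\ell)\,\mathbb{P}[|B| = \ell]$. Because the partial-derivative operator commutes with finite summation, the lemma reduces to the purely combinatorial identity
\[\sum_{\ell=0}^{k-1} (k-\ell)\big(q_\ell - 2q_{\ell-1} + q_{\ell-2}\big) = q_{k-1},\]
after which pulling out the common factor $(p_n - p_1)$ yields exactly the claimed formula.

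The main obstacle is verifying this last identity, which is a discrete Abel-type collapse. The cleanest route is to reindex each of the three sums by a common variable $m$ (substituting $m = \ell$, $m = \ell - 1$, $m = \ell - 2$ respectively) and then to collect the coefficient of $q_m$ for each $m$. For $0 \leq m \leq k-3$ all three sums contribute, and the combined coefficient is $(k-m) - 2(k-m-1) + (k-m-2) = 0$; for the boundary indices $m = k-2$ and $m = k-1$ only a subset of the sums contribute, and a direct check shows that the coefficients reduce to $0$ and $1$ respectively. This isolates $q_{k-1}$ on the right-hand side and completes the proof.
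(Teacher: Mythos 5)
Your proposal follows essentially the same route as the paper: condition $\mathbb{P}[|B|=\ell]$ on the indicators for parties $1$ and $n$, differentiate to obtain the factored second difference $(p_n - p_1)(q_\ell - 2q_{\ell-1} + q_{\ell-2})$, and then collapse $\sum_{\ell=0}^{k-1}(k-\ell)(\cdots)$ to the single surviving term $q_{k-1}$. The only cosmetic difference is that the paper phrases the final collapse as a difference of two telescoping sums, whereas you reindex and collect coefficients of $q_m$; these are the same computation.
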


	The last step before turning towards proving \Cref{prop:pdevs} is to provide a lower bound for the derivative of the denominator, which we just calculated in \Cref{lem:expdif}. The key idea is to lower bound the probability that $|\hat{B}|$ is of size $k-1$ by the probability that $\hat{B}$ equals exactly $\{2,\dots,k\}$, which we can then use to connect this probability to our parameter $s$. At first sight, the bound appears to be rather weak, e.g., when $s = \sum_{i \not\in [k]}{p_i}>\frac{1}{2}$, the bound is dominated by the trivial lower bound of $0$. However, we will only need this bound in the extreme case that $s<p_1$.

	\begin{lemma}
		\label{lem:probbound}
		Let $\hat{B}$ be as in \cref{lem:expdif}.
		Then,
		\[ \probterm \geq \frac{1 - 2 s}{p_1 (1 - p_n)}. \]
	\end{lemma}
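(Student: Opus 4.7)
\textbf{Proof plan for \Cref{lem:probbound}.} The plan is to follow the hint and lower bound $\mathbb{P}[|\hat{B}| = k-1]$ by the probability of a single specific outcome, namely $\hat{B} = \{2,3,\dots,k\}$. Since $\hat{B}$ ranges over subsets of $\{2,\dots,n-1\}$ sampled via independent Bernoulli trials, this probability factorizes as
\[ \mathbb{P}\big[\hat{B} = \{2,\dots,k\}\big] = \prod_{i=2}^{k} p_i \cdot \prod_{i=k+1}^{n-1} (1 - p_i). \]
The next step is to massage this expression so that the product ranges symmetrically over $[k]$ and its complement. Multiplying and dividing by $p_1(1-p_n)$ gives
\[ \prod_{i=2}^{k} p_i \cdot \prod_{i=k+1}^{n-1} (1 - p_i) \;=\; \frac{\prod_{i \in [k]} p_i \cdot \prod_{i \notin [k]} (1 - p_i)}{p_1 (1 - p_n)}. \]
Thus it suffices to show that the numerator on the right is at least $1 - 2s$.

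The key observation for the numerator bound is to substitute $x_i \coloneqq 1 - p_i$ for $i \in [k]$ and $x_i \coloneqq p_i$ for $i \notin [k]$, so that each factor of the product becomes $1 - x_i$, with $x_i \in [0,1)$. Recalling that $s = \sum_{i \in [k]} (1-p_i) = \sum_{i \notin [k]} p_i$, we have $\sum_{i \in [n]} x_i = 2s$. We can then invoke the Weierstrass product inequality, which states that $\prod_i (1 - x_i) \geq 1 - \sum_i x_i$ whenever every $x_i \in [0,1]$ (easily shown by induction on $n$). Applied here, this yields
\[ \prod_{i \in [k]} p_i \cdot \prod_{i \notin [k]} (1 - p_i) \;=\; \prod_{i \in [n]} (1 - x_i) \;\geq\; 1 - 2s. \]
Combining this with the previous display immediately gives the claimed bound $\mathbb{P}[|\hat{B}| = k-1] \geq (1-2s)/(p_1(1-p_n))$.

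I do not expect any serious obstacle: the content of the lemma is really the choice of which single outcome of $\hat{B}$ to use as the witness for the event $|\hat{B}| = k-1$, and the recognition that the resulting product matches the form required by Weierstrass with total weight exactly $2s$. The bound is indeed vacuous when $s \geq 1/2$, but as noted in the surrounding discussion it is only invoked in the regime $s < p_1 \leq 1$, so this is of no concern here.
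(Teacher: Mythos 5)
Your proof is correct and takes essentially the same approach as the paper: both lower-bound $\probterm$ by the probability of the single witness outcome $\hat{B} = \{2,\dots,k\}$, rewrite the resulting product as $\mathbb{P}[B=[k]]/(p_1(1-p_n))$, and bound that by $1-2s$. The only cosmetic difference is that the paper phrases the final inequality as a union bound on $\mathbb{P}[B=[k]]$ while you phrase it as the Weierstrass product inequality applied to $\prod_{i\in[k]}p_i\prod_{i\notin[k]}(1-p_i)$ — since the Bernoulli trials are independent, these are the same bound.
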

	\begin{proof}
		One way in which $|\hat{B}|$ could equal $k-1$ is if $\hat{B}$ is exactly $\{2, \dots, k\}$. Thus,
		\begin{equation*}
			\probterm \geq \prod_{2 \leq j \leq k} p_j \prod_{k+1 \leq j \leq n-1} (1 - p_j) = \frac{\prod_{j \in [k]}p_j \prod_{k+1 \leq j \leq n} (1-p_j)}{p_1 (1 - p_n)} = \frac{\mathbb{P}\big[B = [k]\big]}{p_1 (1 - p_n)}.
		\end{equation*}
		We lower-bound $\mathbb{P}\big[B = [k]\big]$ using a union bound. Specifically, we need to rule out that any of the Bernoulli trials indexed $1 \leq j \leq k$ fails or that any of the Bernoulli trials indexed $k+1 \leq j \leq n$ succeeds. Thus,
		\[ \mathbb{P}\big[B = [k]\big] \geq 1 - \sum_{1 \leq j \leq k} (1 - p_j) - \sum_{k+1 \leq j \leq n} p_j = 1 - 2s,\]
		since $s=\sum_{i \in [k]} (1-p_i) = \sum_{i \not \in [k]} p_i$.
		The claim follows by combining both inequalities.
	\end{proof}

	We are now ready to prove that the partial derivative of $\Pr[S=[k]]$ is nonnegative. We start by expressing the partial derivatives of the numerator of \Cref{eq:sampfordpmffa}, then apply the quotient rule and finally distinguish the two cases $s\geq p_1$ and $s<p_1$.
	The former case will require little additional effort, while the latter requires the application of \Cref{lem:expbound} and \Cref{lem:probbound}.

	\begin{proof}[Proof of \cref{prop:pdevs}.]
		Recall that we want to bound the partial derivatives of the term $f([k])/{\sum_{A' \in \binom{[n]}{k}} f(A')}$, where $f(A) = \sum_{i \in A} (1 - p_i) \prod_{j \in A}  p_j \prod_{j \notin A} (1 - p_j)$.
		We begin by computing partial derivatives for the numerator $f([k])$:
		\begin{align}
			\pdif{p_1} f([k]) ={} &\textstyle \pdif{p_1} \sum_{i \in [k] \setminus 1} (1 - p_i) \cdot \prod_{j \in [k]} p_j \prod_{j \notin [k]} (1 - p_j) \label{eq:p1fk} \\
			&+ \textstyle \pdif{p_1} 1 \cdot \prod_{j \in [k]} p_j \prod_{j \notin [k]} (1 - p_j) - \pdif{p_1} p_1 \cdot \prod_{j \in [k]} p_j \prod_{j \notin [k]} (1 - p_j) \notag \\
			={} &\textstyle\sum_{i \in [k] \setminus 1} (1 - p_i) \cdot \prod_{j \in [k] \setminus 1} p_j \prod_{j \notin [k]} (1 - p_j)\notag \\ 
			&+ \textstyle\prod_{j \in [k] \setminus 1} p_j \prod_{j \notin [k]} (1 - p_j) - 2 \, p_1 \, \prod_{j \in [k] \setminus 1} p_j \prod_{j \notin [k]} (1 - p_j)\notag \\ 
			={} &\textstyle\sum_{i \in [k]} (1 - p_i) \cdot \prod_{j \in [k] \setminus 1} p_j \prod_{j \notin [k]} (1 - p_j) - \prod_{j \in [k]} p_j \prod_{j \notin [k]} (1 - p_j) \notag \\
			={} &\Big( \frac{1}{p_1} - \frac{1}{\textstyle \sum_{i \in [k]} (1 - p_i)}\Big) \, f([k]) = \Big( \frac{1}{p_1} - \frac{1}{s}\Big) \, f([k]).\notag \\
			\pdif{p_n} f([k]) ={} &- \textstyle \sum_{i \in [k]} (1 - p_i) \cdot \prod_{j \in [k]} p_j \prod_{\substack{j \notin [k]\\j \neq n}} (1 - p_j)  = \displaystyle - \frac{f([k])}{1 - p_n}. \label{eq:pnfk}
		\end{align}
		
		We now apply the quotient rule and plug in equalities we have derived so far:
		\begin{align*}
			&\left( \pdif{p_1} - \pdif{p_n}\right) \, \frac{f([k])}{\sum_{A' \in \binom{[n]}{k}} f(A')} = \left( \pdif{p_1} - \pdif{p_n}\right) \, \frac{f([k])}{\expbone} \tag{by \cref{lem:magic}} \\
			={} &\frac{\left(\left( \pdif{p_1} - \pdif{p_n}\right) f([k]) \right) \!\cdot\!  \expbone - f([k]) \!\cdot\! \left( \pdif{p_1} - \pdif{p_n}\right) \expbone}{\expbone^2} \\
			={} &\frac{\textstyle \left(\frac{1}{p_1} - \frac{1}{s} + \frac{1}{1-p_n} \right) \cdot f([k]) \cdot \expbone - f([k]) \cdot (p_n - p_1) \, \probterm}{\expbone^2} \tag{by \cref{eq:p1fk,eq:pnfk,lem:expdif}} \\
			={} &\underbrace{\frac{f([k])}{\expbone^2}}_{\geq 0} \left(\textstyle \left(\frac{1}{p_1}\! - \!\frac{1}{s} \!+\! \frac{1}{1-p_n} \right) \cdot \expbone \!-\! (p_n \!-\! p_1) \, \probterm \right) 
		\end{align*}
		\newcommand{\sterm}{s} %
		Thus, it suffices to show the nonnegativity of the factor
		\begin{equation}\label{eq:factornonneg}
			\left(\frac{1}{p_1} - \frac{1}{\sterm} + \frac{1}{1-p_n} \right) \cdot \expbone - (p_n - p_1) \, \probterm.
		\end{equation}
		We will prove the nonnegativity by distinguishing two cases, depending on whether $s \geq p_1$ or not.
		Surprisingly, the case $s \geq p_1$\,---\,which intuitively seem more common because the residues of all parties outside of $[k]$ outweigh that of the single party $1$\,---\,is straight-forward.
		The hard case of the proof is when $s < p_1$, i.e., when the residues in $[k]$ are almost 1 and the residues outside $[k]$ almost 0. %
		\medskip
		
		\noindent\textbf{Case $\sterm \geq p_1$.}
		Under this assumption, clearly $1/p_1 - 1/\sterm \geq 0$, which allows us to ignore these terms in \cref{eq:factornonneg}.
		Next, we observe that $\expbone \geq \mathbb{P}\big[|B| = k-1\big]$.
		Moreover, since one way for the number of successes $|B|$ in a Poisson trial being equal to $k-1$ is if there are $k-1$ successes among the Bernoulli variables indexed $2, \dots, n-1$ and additionally the Bernoulli variables indexed $1$ and $n$ both fail, it must hold that $\mathbb{P}\big[|B| = k-1\big] \geq (1 - p_1) \, (1 - p_n) \, \probterm$.
		These two inequalities together imply that
		$\expbone/(1-p_n) \geq (1 - p_1) \, \probterm \geq (p_n - p_1) \, \probterm$, which shows the claimed nonnegativity of the term in \cref{eq:factornonneg}.\medskip
		
		\noindent\textbf{Case $\sterm < p_1$.}
		Note that $p_1 > s = \sum_{i \notin [k]} p_i \geq p_n$.
		Thus, the term $(p_n - p_1) \, \probterm$ we subtract in \cref{eq:factornonneg} is nonpositive, i.e., it can only help us in proving nonnegativity.
		If the coefficient $\frac{1}{p_1} - \frac{1}{s} + \frac{1}{1-p_n}$ was nonnegative, \cref{eq:factornonneg} would be trivially nonnegative; hence, we will assume that it is negative instead.
		Knowing the signs of these coefficients allows us to apply the bounds we have previously derived:
		\definecolor{c1}{rgb}{0.0, 0.3, 0.6}
		\definecolor{c2}{rgb}{0.43, 0.62, 0.28}
		{\allowdisplaybreaks\begin{align*}
				&\left(\frac{1}{p_1} - \frac{1}{\sterm} + \frac{1}{1-p_n} \right) \cdot \expbone - (p_n - p_1) \, \probterm \\
				\geq{} &\left(\frac{1}{p_1} - \frac{1}{\sterm} + \frac{1}{1-p_n} \right) \cdot \sterm - (p_n - p_1) \, \frac{1 - 2 \, \sterm}{p_1 \, (1 - p_n)} \tag{by \cref{lem:magic,lem:expbound,lem:probbound}} \\
				={} &\frac{1}{p_1 \, (1 - p_n)} \cdot \left( (1 - p_n) \, s - p_1 \, (1 - p_n) + p_1 \, s - (p_n - p_1) \, (1 - 2 \, s)\right) \\
				={} &\frac{1}{p_1 \, (1 - p_n)} \cdot \big({s - {p_1 \, s}} + p_1 \, p_n - p_n + {p_n \, s}\big) \tag{note $s-sp_1 \geq 0$} \\
				\geq{} &\frac{1}{p_1 \, (1 - p_n)} \cdot \big( p_n \, (p_1 - 1 + s)\big) \geq \frac{p_n}{p_1 \, (1 - p_n)} \cdot \big(p_1 - 1 + (1 - p_1) \big) = 0.
			\end{align*}
			This establishes that, in both cases, $\left(\pdif{p_1} - \pdif{p_n}\right) \, \mathbb{P}[S = [k]]$ is nonnegative as claimed.}
	\end{proof}

	Since we have established (informally in this section and formally in \Cref{app:Sufficient}) that \Cref{prop:pdevs} implies selection monotonicity, the above proof implies our main result, 
	\cref{thmSampfordMainResult}.
	
	\subsection{Potential Implications for Dependent Randomized Rounding}
	\label{sec:dependentroundingimplications}
	The three properties that turned pipage rounding into such a successful tool\,---\,ex-ante proportionality, selecting exactly $k$ units without replacement, and notions of negative correlation~\cite{Srinivasan01}\,---\,are equally satisfied by Sampford rounding~\cite{BJ12}.
	Hence, in all algorithms that use dependent rounding (subject to a cardinality constraint) as a black-box subroutine~\cite[e.g.,][]{BSS18,CJM+20,CL12}, swapping out pipage rounding for Sampford rounding would preserve the algorithm's guarantees, while gaining Sampford's monotonicity properties ``for free''.
	Though we have yet to explore how to leverage this monotonicity in specific applications, we sketch below one path in which Sampford's monotonicity could add incentive guarantees to algorithms using dependent rounding.
	
	Dependent randomized rounding is commonly used in computer science to develop algorithms, in particular approximation algorithms, for combinatorial problems.
	Dependent rounding can be seen as randomly rounding a point $x$ within an integral polytope $P$ to a vertex $X$ of $P$, ensuring that the expected value of $X$ equals $x$. This is a useful subroutine in many algorithms for NP-hard optimization problems: First, locate a point in the space of the polytope, typically achieved by solving a fractional relaxation of the problem.
	Then, this point is rounded in a way that ensures that a linear function (or a continuous extension of a submodular function) satisfies Chernoff-style concentration bounds.
	These guarantees can then be used to bound the objective value with high probability, thus delivering approximation guarantees.
	As \citet{BJ12} showed, Sampford sampling, like pipage rounding, satisfies the \emph{strong Rayleigh} property, which implies extremely general concentration properties. 
	
	In such applications of dependent rounding, selection monotonicity (and similar monotonicity properties that Sampford rounding might possess, see \Cref{conjecture}) might for example allow to construct approximation algorithms that can be implemented as truthful mechanisms. In our case, the polytope $P$ is simply the base polytope of the uniform matroid, which has proved influential in applications including Steiner tree problems~\cite{Srinivasan01}, $k$-median~\cite{CL12}, committee selection~\cite{CJM+20}, and, recently, online algorithms~\cite{NSW23a}.
	
	Concretely, consider an approximation algorithm that consists of the two steps mentioned above: \emph{(1)}~determine a point $x$ inside $P$, then \emph{(2)}~randomly round $x$ to select a vertex $X$ of $P$.
	Let each agent $i$ have a preferred vertex $X_i$ of $P$ with utility $\tau_i \cdot \mathbb{P}[X=X_i]$, which depends on their private type $\tau_i \in \mathbb{R}$.
	These types are inputs to the approximation algorithm, whose objective need not be related to the utilities.
	Suppose that step~(1) of the algorithm is monotone in the reported types, i.e., whenever agent $i$ increases their reported type, this causes $x_j$ to weakly increase for the $k$ many matroid elements $j$ that correspond to the vertex $X_i$ and $x_j$ to weakly decrease for all other $j$.
	Then, the selection monotonicity of Sampford rounding implies an overall monotonicity: if agent $i$ increases their report, $X_i$ will be more likely to be sampled.
	Hence, by charging the agents Myerson payments~\cite{M81a}, we obtain a mechanism that is incentive compatible in expectation (and hence the approximation algorithm receives truthful inputs).
	Should Sampford sampling even satisfy stronger notions of proportionality (as we conjecture in \cref{sec:thresholdconj}), this blueprint extends to richer utilities; for example, agent $i$ could care about a subset $T_i$ of matroid elements, have a monotone nondecreasing function $f_i : \mathbb{N} \to \mathbb{R}$ that expresses diminishing returns or returns to scale, and the agent's utility would be $\tau_i \cdot \mathbb{E}\left[ f_i(|X \cap T_i|) \right]$.

	\subsection{Selection Monotonicity Implies Continuity}\label{subContinuity}
	
	We note that all rules discussed in this paper are continuous: small changes to the input residues do not result in large changes to the probability that any set is selected. As the following result shows, this is a necessary requirement for selection monotonicity.
	
	\begin{theorem}\label{thmContinuity}
		Any selection monotone rounding rule is Lipschitz continuous.
		Specifically, let $\vec{p}, \vec{p}'$ be two residue profiles, and let $T$ be a set of $k$ parties.
		Then,
		\[ \left| \mathbb{P}\big[r(\vec{p}') = T\big] - \mathbb{P}\big[r(\vec{p}) = T\big] \right| \leq \|\vec{p} - \vec{p}'\|_1. \]
	\end{theorem}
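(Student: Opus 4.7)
The plan is to reduce the two-sided claim to a one-sided monotone bound by routing through an intermediate residue profile $\vec{q}$ that dominates both $\vec{p}$ and $\vec{p}'$ in the partial order favorable to $T$. I first assume without loss of generality that $\mathbb{P}[r(\vec{p}')=T] \geq \mathbb{P}[r(\vec{p})=T]$, and define $a = \sum_{i \in T}\max(0, p_i'-p_i)$, $b = \sum_{i \notin T}\max(0, p_i'-p_i)$, $c = \sum_{i \in T}\max(0, p_i-p_i')$, and $d = \sum_{i \notin T}\max(0, p_i-p_i')$; the sum condition forces $a+b = c+d = \|\vec{p}-\vec{p}'\|_1/2$. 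I then construct $\vec{q} \in [0,1)^n$ with $\sum_i q_i = k$, $q_i \geq \max(p_i, p_i')$ for $i \in T$, and $q_i \leq \min(p_i, p_i')$ for $i \notin T$, by splitting on $a$ versus $d$: when $a \geq d$, take $q_i = \max(p_i, p_i')$ on $T$ and decrease entries on its complement below their minima to absorb the excess mass $a-d$; the symmetric construction works when $d > a$, using the strict inequalities $p_i, p_i' < 1$ to supply slack on $T$. Applying selection monotonicity to $(\vec{p}, \vec{q})$ and to $(\vec{p}', \vec{q})$, $\mathbb{P}[r(\vec{q})=T]$ is at least both $\mathbb{P}[r(\vec{p})=T]$ and $\mathbb{P}[r(\vec{p}')=T]$, so it suffices to prove $\mathbb{P}[r(\vec{q})=T] - \mathbb{P}[r(\vec{p})=T] \leq \|\vec{p}-\vec{p}'\|_1$.

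\textbf{Per-step bound.} The key lemma is: for any profile $\vec{p}^\circ \in [0,1)^n$ summing to $k$ and any single cross-boundary transfer $\vec{q}^\circ = \vec{p}^\circ + \epsilon(e_i - e_j)$ with $i \in T$ and $j \notin T$, the difference $\mathbb{P}[r(\vec{q}^\circ)=T] - \mathbb{P}[r(\vec{p}^\circ)=T]$ lies in $[0, 2\epsilon]$. To prove it I would partition $\binom{[n]}{k}$ into four classes $S_{ij}, S_{i\bar{j}}, S_{\bar{i}j}, S_{\bar{i}\bar{j}}$ according to whether each set contains $i$ and $j$, and let $\Delta_{AB}$ denote the sum of probability differences over the corresponding class. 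Selection monotonicity applied to each set in $S_{i\bar{j}}$ (where the transfer is monotone) gives an individual difference $\geq 0$, hence $\Delta_{i\bar{j}} \geq 0$ and every summand is nonnegative; analogously $\Delta_{\bar{i}j} \leq 0$. Combined with the marginal identities $\Delta_{ij}+\Delta_{i\bar{j}} = \epsilon$ and $\Delta_{ij}+\Delta_{\bar{i}j} = -\epsilon$, these force $\Delta_{ij} \in [-\epsilon, \epsilon]$ and therefore $\Delta_{i\bar{j}} \leq 2\epsilon$. Since $T \in S_{i\bar{j}}$ and its summand in $\Delta_{i\bar{j}}$ is nonnegative, the target difference is at most $2\epsilon$.

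\textbf{Telescoping and conclusion.} Because $\vec{q} \geq \vec{p}$ coordinatewise on $T$ and $\vec{q} \leq \vec{p}$ on its complement, the transition $\vec{p} \to \vec{q}$ decomposes into a finite sequence of single cross-boundary transfers from indices outside $T$ with surplus $p_j - q_j$ to indices inside $T$ with deficit $q_i - p_i$, of total size $\gamma_p := \sum_{i \in T}(q_i - p_i) = \max(a, d)$; one checks that each intermediate profile stays in $[0,1)^n$ with coordinates summing to $k$. Applying the per-step bound and telescoping yields $\mathbb{P}[r(\vec{q})=T] - \mathbb{P}[r(\vec{p})=T] \leq 2\gamma_p = \|\vec{q}-\vec{p}\|_1$, and the elementary inequality $\max(a, d) \leq a+b$ gives $2\gamma_p \leq 2(a+b) = \|\vec{p}-\vec{p}'\|_1$, completing the proof. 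The main obstacle is the per-step bound: selection monotonicity only directly controls the class totals $\Delta_{i\bar{j}}$ and $\Delta_{\bar{i}j}$, and the marginal identities are essential to turn these into a bound on the individual summand corresponding to $T$. Routing through $\vec{q}$ is equally essential, because a direct decomposition of $\vec{p} \to \vec{p}'$ would generally require transfers within $T$ or within its complement, for which selection monotonicity imposes no constraint on $\mathbb{P}[r=T]$.
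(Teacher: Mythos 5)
Your proof is correct but takes a genuinely different route from the paper's. The paper first proves a per-step bound (its Lemma~\ref{lemLipschitzGeneral}) for an \emph{arbitrary} single transfer between two coordinates, which requires three cases: when the two changed coordinates straddle the boundary of $T$ (Cases~1 and~2, each bounded via one marginal identity), and when both lie inside or both outside $T$ (Case~3, handled by a two-step detour through a third coordinate and a sign-cancellation argument). It then decomposes $\vec{p}\to\vec{p}'$ directly into such transfers, charges $\delta_i$ to each, and invokes the total-variation identity $\sum_i\delta_i=\tfrac12\|\vec p-\vec p'\|_1$. You instead construct an intermediate $\vec q$ that dominates both $\vec p$ and $\vec p'$ in the $T$-favorable partial order, which lets you decompose $\vec p\to\vec q$ purely into cross-boundary transfers and avoid the paper's Case~3 entirely; the per-step bound is then established via the four-class decomposition $S_{ij},S_{i\bar j},S_{\bar i j},S_{\bar i\bar j}$ together with both marginal identities, which forces $\Delta_{ij}\in[-\epsilon,\epsilon]$ and hence $\Delta_{i\bar j}\le 2\epsilon$. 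What the paper's route buys is a clean unconditional bound of $2\delta$ per transfer, with no need to introduce an intermediate profile (though Case~3 involves some delicacy in choosing a third party and arguing opposite signs). What your route buys is that the per-step monotonicity is always ``in the right direction,'' so you never need the sign-cancellation trick; the price is that you must verify the feasibility of $\vec q$ (which you sketch somewhat briefly --- the total-mass calculation $\sum_{i\in T}\max(p_i,p_i')+\sum_{i\notin T}\min(p_i,p_i')=k+a-d$ and the slack inequalities $\sum_{i\in T}\max(p_i,p_i')<k$ and $d\le\sum_{i\notin T}p_i$ confirm it works) and carry out the bookkeeping $2\gamma_p=2\max(a,d)\le 2(a+b)=\|\vec p-\vec p'\|_1$, which is correct since $a\le a+b$ and $d=a+b-c\le a+b$. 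Both proofs arrive at the same constant, and your argument is a valid alternative.
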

	
	As in the proof of Theorem \ref{thmSampfordMainResult}, we first consider the special case when only two residues are changing at a time.
	
	\begin{lemma}
		\label{lemLipschitzGeneral}
		Consider a selection monotone rounding rule $r$, a residue profile $\vec{p}$, and $0 < \delta < \min(p_2, 1-p_1)$.
		Define a residue profile $\vec{p}'$, in which $p_1' \coloneqq p_1 + \delta$, $p_2' \coloneqq p_2 - \delta$, and all other $p_i' \coloneqq p_i$.
		Then, for any set $T$ of $k$ parties, it holds that
		\begin{equation}\label{equLipshitzBound}
			\left| \mathbb{P}\left[r(\vec{p}') = T\right] - \mathbb{P}\left[r(\vec{p}) = T\right] \right| \leq 2 \, \delta.
		\end{equation}
	\end{lemma}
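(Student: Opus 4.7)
The plan is to combine the marginal identities satisfied by every rounding rule with carefully chosen applications of selection monotonicity. Denote by $\Delta(T')$ the difference $\mathbb{P}[r(\vec{p}') = T'] - \mathbb{P}[r(\vec{p}) = T']$ for each $T' \in \binom{[n]}{k}$. Because $r$ meets its target probabilities, $\sum_{T' \ni i} \Delta(T') = p_i' - p_i$, which equals $\delta$ for $i = 1$, $-\delta$ for $i = 2$, and $0$ for every other party. The argument then splits into three cases according to how $T$ intersects $\{1, 2\}$.

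First suppose exactly one of $1, 2$ lies in $T$; without loss of generality, $1 \in T$ and $2 \notin T$. For every $T'$ with $1 \in T'$ and $2 \notin T'$, the shift $\vec{p} \to \vec{p}'$ satisfies the hypothesis of selection monotonicity with witness $T'$ (the coordinates $i \geq 3$ are unchanged and so meet both inequalities vacuously), so $\Delta(T') \geq 0$; symmetrically, for every $T'$ with $2 \in T'$ and $1 \notin T'$, applying selection monotonicity with $\vec{p}$ and $\vec{p}'$ swapped yields $\Delta(T') \leq 0$. Splitting the two nonzero marginal identities according to the status of the other party and invoking these sign constraints forces $\sum_{T' \supseteq \{1,2\}} \Delta(T') \in [-\delta, \delta]$, whence $\sum_{T' \ni 1,\, 2 \notin T'} \Delta(T') \in [0, 2\delta]$. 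Since every summand is nonnegative and $\Delta(T)$ is one of them, $\Delta(T) \in [0, 2\delta]$. The mirror subcase $1 \notin T$, $2 \in T$ is identical.

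When $\{1, 2\} \subseteq T$, select any $j \notin T$ with $p_j \geq \delta$ and introduce the intermediate profile $\vec{q}$ defined by $q_1 = p_1 + \delta$, $q_j = p_j - \delta$, and $q_i = p_i$ otherwise. Then $\Delta(T)$ decomposes as $(\mathbb{P}[r(\vec{q}) = T] - \mathbb{P}[r(\vec{p}) = T]) + (\mathbb{P}[r(\vec{p}') = T] - \mathbb{P}[r(\vec{q}) = T])$; by the previous paragraph, the first summand is a single-membership shift (with $j$ playing the role of party $2$) and lies in $[0, 2\delta]$, while the second summand is the mirror single-membership shift (with $j$ playing the role of party $1$) and lies in $[-2\delta, 0]$. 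Their sum therefore has absolute value at most $2\delta$. The symmetric case $\{1, 2\} \cap T = \emptyset$ uses an intermediate $j \in T$ with $p_j + \delta < 1$.

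The main obstacle is the degenerate subcase in which no intermediate party has the required slack, e.g., no $j \notin T$ satisfies $p_j \geq \delta$. Here the $\delta$-shift is split into fragments $\delta = \sum_\ell \delta_\ell$ routed through distinct intermediaries, and the single-membership bound of magnitude $2\delta_\ell$ for each step telescopes via the triangle inequality into the desired $2\delta$. Verifying that every intermediate profile constructed along the way stays within $[0,1)^n$ is the only bookkeeping chore.
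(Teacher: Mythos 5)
Your proof is correct and follows essentially the same strategy as the paper: exploit the marginal identities $\sum_{T' \ni i} \Delta(T') = p_i' - p_i$, use selection monotonicity to sign the summands that contain exactly one of $\{1,2\}$, conclude the single-membership cases directly, and then handle $|T \cap \{1,2\}| \in \{0,2\}$ by routing through an intermediate profile and exploiting the opposite signs of the two single-membership steps to avoid the naive $4\delta$ bound. The only differences are cosmetic: you bound $\sum_{T' \supseteq \{1,2\}} \Delta(T')$ symmetrically in $[-\delta,\delta]$ and get $[0,2\delta]$ for single-membership sets, whereas the paper instead normalizes w.l.o.g.\ so that $\mathbb{P}[r(\vec p')\supseteq\{1,2\}]\geq\mathbb{P}[r(\vec p)\supseteq\{1,2\}]$ and obtains the tighter $[0,\delta]$ in one direction; your intermediary changes $(p_1,p_j)$ first and $(p_2,p_j)$ second, while the paper changes $(p_2,p_3)$ first and $(p_1,p_3)$ second; and in the degenerate splitting you need not use ``distinct intermediaries''\,---\,since each fragment restores $p_j$ to its original value, a single intermediary with small enough $\delta_\ell$ suffices, which is how the paper handles it.
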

	\begin{proof}
		We may assume w.l.o.g. that $\mathbb{P}\left[r(\vec{p}') \supseteq \{1,2\}\right] \geq \mathbb{P}\left[r(\vec{p}) \supseteq \{1,2\}\right]$; otherwise, simply change the roles of parties $1$ and $2$, and the roles of $\vec{p}, \vec{p}'$.
		We break the proof up into three cases, depending on how $T$ intersects $\{1, 2\}$.\medskip
		
		\noindent\textbf{Case 1: Sets $T$ such that $1 \in T \not\ni 2$.}
		By ex-ante proportionality, $\sum_{T \in \binom{[n]}{k}, 1 \in T} \mathbb{P}[r(\vec{p})=T] = p_1$ and $\sum_{T \in \binom{[n]}{k}, 1 \in T} \mathbb{P}[r(\vec{p}')=T] = p_1 + \delta$.
		For all $T$ that contain $1$ but not $2$, selection monotonicity ensures that $\mathbb{P}[r(\vec{p}')=T] \geq \mathbb{P}[r(\vec{p})=T]$.
		Hence,
		\begin{align*}
			\delta &= (p_1 + \delta) - p_1 \\
			&= \sum_{\substack{T \in \binom{[n]}{k}\\1 \in T}} \mathbb{P}[r(\vec{p}')=T] - \sum_{\substack{T \in \binom{[n]}{k}\\1 \in T}} \mathbb{P}[r(\vec{p})=T] \\
			&= \sum_{\substack{T \in \binom{[n]}{k}\\1 \in T \not\ni 2}} \big( \mathbb{P}[r(\vec{p}')=T] - \mathbb{P}[r(\vec{p})=T] \big) + \underbrace{\sum_{\substack{T \in \binom{[n]}{k}\\T \supseteq \{1,2\}}} \mathbb{P}[r(\vec{p}')=T] - \sum_{\substack{T \in \binom{[n]}{k}\\T \supseteq \{1,2\}}} \mathbb{P}[r(\vec{p})=T]}_{=\mathbb{P}\left[r(\vec{p}') \supseteq \{1,2\}\right] - \mathbb{P}\left[r(\vec{p}) \supseteq \{1,2\}\right] \geq 0} \\
			&\geq \sum_{\substack{T \in \binom{[n]}{k}\\1 \in T \not\ni 2}} \underbrace{\big( \mathbb{P}[r(\vec{p}')=T] - \mathbb{P}[r(\vec{p})=T] \big)}_{\geq 0}.
		\end{align*}
		Hence, for each set $T$ in the sum, $0 \leq \mathbb{P}[r(\vec{p}')=T] - \mathbb{P}[r(\vec{p})=T] \leq \delta$; i.e., Equation (\ref{equLipshitzBound}) holds for all sets $T$ containing 1 but not 2.
		Note that this chain of inequalities also shows that $\mathbb{P}\left[r(\vec{p}') \supseteq \{1,2\}\right] - \mathbb{P}\left[r(\vec{p}) \supseteq \{1,2\}\right] \leq \delta$, which we will use in the second case.\medskip
		
		\noindent\textbf{Case 2: Sets $T$ such that $2 \in T \not\ni 1$.}
		By ex-ante proportionality, $\sum_{T \in \binom{[n]}{k}, 2 \in T} \mathbb{P}[r(\vec{p})=T] = p_2$ and $\sum_{T \in \binom{[n]}{k}, 2 \in T} \mathbb{P}[r(\vec{p}')=T] = p_2 - \delta$.
		For all $T$ that contain $2$ but not $1$, selection monotonicity ensures that $\mathbb{P}[r(\vec{p}')=T] \leq \mathbb{P}[r(\vec{p})=T]$. Hence,
		\begin{align*}
			\delta &= p_2 - (p_2 - \delta) \\
			&= \sum_{\substack{T \in \binom{[n]}{k}\\2 \in T}} \mathbb{P}[r(\vec{p})=T] - \sum_{\substack{T \in \binom{[n]}{k}\\2 \in T}} \mathbb{P}[r(\vec{p}')=T] \\
			&= \sum_{\substack{T \in \binom{[n]}{k}\\2 \in T \not\ni 1}} \big( \mathbb{P}[r(\vec{p})=T] - \mathbb{P}[r(\vec{p}')=T] \big) + \underbrace{\sum_{\substack{T \in \binom{[n]}{k}\\T \supseteq \{1,2\}}} \mathbb{P}[r(\vec{p})=T] - \sum_{\substack{T \in \binom{[n]}{k}\\T \supseteq \{1,2\}}} \mathbb{P}[r(\vec{p}')=T]}_{=\mathbb{P}\left[r(\vec{p}) \supseteq \{1,2\}\right] - \mathbb{P}\left[r(\vec{p}') \supseteq \{1,2\}\right] \geq -\delta},
			\intertext{which implies \cref{equLipshitzBound} for this case:}
			2 \, \delta &\geq \sum_{2 \in T \not\ni 1} \underbrace{\big( \mathbb{P}[r(\vec{p})=T] - \mathbb{P}[r(\vec{p}')=T] \big)}_{\geq 0}.
		\end{align*}
		\medskip
		
		\noindent\textbf{Case 3: Sets $T$ such that $1, 2 \notin T$ or $1, 2 \in T$.}
		Pick a different party whose membership in $T$ is opposite that of 1 and 2, i.e., if $1, 2 \in T$, then choose a party in $T$, and if $1, 2 \in T$, choose a party outside of $T$. Without loss of generality, let this party be party $3$.
		We may also assume without loss of generality that $\delta$ is less than $1 - p_3$ (which is positive since residues are less than 1, by definition).
		Else, i.e., if $\delta$ is larger, we can split up the increase of $p_1$ by $\delta$ and decrease of $p_2$ by $\delta$ into finitely many steps in which these residues successively increase or decrease, by some $\delta' < 1 - p_3$ in each step.
		Our claim for the original $\delta$ then follows by applying the triangle inequality to the bounds for the smaller steps.
		
		Let $\vec{p}''$ be a new residue profile, where $p''_1 \coloneqq p_1 = p'_1 - \delta$, $p''_2 = p_2 + \delta = p'_2$ and $p''_3 = p_3 + \delta = p'_3 + \delta$. Note that this is a valid residue profile by how we chose $\ell$. The key observation is that we already know the bound in Equation~(\ref{equLipshitzBound}) holds going from $\vec{p}$ to $\vec{p}''$ and from $\vec{p}''$ to $\vec{p}'$, since the residues changing involve one party in $T$ and one party outside of $T$. Thus,
		\begin{align*}
			&\left| \mathbb{P}\left[r(\vec{p}') = T\right] - \mathbb{P}\left[r(\vec{p}) = T\right] \right| \\
			={}& \big| \underbrace{\left(\mathbb{P}\left[r(\vec{p}') = T\right] - \mathbb{P}\left[r(\vec{p}'') = T\right]\right)}_{\in [-2\,\delta, 2\, \delta]} + \underbrace{\left(\mathbb{P}\left[r(\vec{p}'') = T\right] - \mathbb{P}\left[r(\vec{p}) = T\right]\right)}_{\in [-2 \, \delta, 2 \, \delta]} \big|.
		\end{align*}
		Na\"ively, this would yield a bound of $4 \, \delta$, but we can sharpen it by observing that selection monotonicity ensures that both of the terms above have opposite signs: If $1, 2 \notin T$ and $3 \in T$, then the first of these two terms is $\leq 0$ and the second $\geq 0$, whereas if $1, 2 \in T$ and $3 \notin T$, then the first term is $\geq 0$ and the second term is $\leq 0$. Either way, the sum of these terms is at most $2\delta$, so Equation~(\ref{equLipshitzBound}) holds.
	\end{proof}
	
	\begin{proof}[Proof of Theorem \ref{thmContinuity}]
		
		Let $\vec{p}, \vec{p}'$ be an arbitrary pair of residue profiles. Define the following sets of parties:
		\begin{align*}
			S_+ &\coloneqq \{j \in [n] \suchthat p'_j > p_j\},\\
			S_- &\coloneqq \{j \in [n] \suchthat p'_j < p_j\}.
		\end{align*}
		Consider the following algorithm, which transforms $\vec{p}$ into $\vec{p}'$:
		
		{
			\setlength{\interspacetitleruled}{0pt}%
			\setlength{\algotitleheightrule}{0pt}%
			\begin{algorithm}[H]
				$\vec{p}^1 \gets \vec{p}$\;
				$i \gets 1$\;
				\While{$\vec{p}^i \neq \vec{p}'$}
				{
					$j_1 \gets$ any element of $S_+$ such that $p'_{j_1} > p^i_{j_1}$\;
					$j_2 \gets$ any element of $S_-$ such that $p'_{j_2} < p^i_{j_2}$\;
					$\delta_i \gets \min\{p'_{j_1} - p^i_{j_1}, p^i_{j_2} - p'_{j_2}\}$\;
					$\vec{p}^{i + 1} \gets$ copy of $\vec{p}^i$, except $p^{i + 1}_{j_1} \coloneqq p^i_{j_1} + \delta_i$ and $p^{i + 1}_{j_2} \coloneqq p^i_{j_2} - \delta_i$\;
					$i \gets i + 1$\;
				}
			\end{algorithm}
		}
		
		Observe that, in each round of this algorithm, we make a new component of $\vec{p}^i$ equal to that of $\vec{p}'$. Thus, the algorithm terminates after at most $n - 1$ iterations, producing a sequence of residue profiles
		$\vec{p} = \vec{p}^1, \vec{p}^2, \vec{p}^3, \dots, \vec{p^\ell} = \vec{p}'$
		for some $\ell \geq n$. It is easy to see that the total sum of the changes $\delta_i$ at each iteration is precisely the sum of differences between $\vec{p}$ and $\vec{p}'$ across $S_+$ (or, alternatively, across $S_-$). Thus,
		\begin{equation}\label{equTV}
			\sum_{i = 1}^{\ell - 1} \delta_i = \sum_{j \in S^+} (p'_j - p_j) \leq \max_{S \subseteq [n]} \sum_{j \in S} (p'_j - p_j) = d_{\text{TV}}(\vec{p}, \vec{p}') = \frac12 \|\vec{p}' - \vec{p}\|_1,
		\end{equation}
		where we have applied the well-known formula for the total variation distance $d_{\text{TV}}$.\footnote{This formula still holds even when probabilities sum to some constant other than 1. Also, we really have equality in the middle transition as well, but that is not necessary to argue for this proof.}
		
		Note that Lemma~\ref{lemLipschitzGeneral} applies to each consecutive pair of residue profiles $\vec{p}^i$ and $\vec{p}^{i + 1}$, since they differ in only two coordinates by $\delta_i$. Hence, we have
		\begin{align*}
			\left| \mathbb{P}\left[r(\vec{p}') = T\right] - \mathbb{P}\left[r(\vec{p}) = T\right] \right| & \leq \sum_{i = 1}^{\ell - 1} \left| \mathbb{P}\left[r(\vec{p}^{i + 1}) = T\right] - \mathbb{P}\left[r(\vec{p}^i) = T\right] \right| \tag{triangle inequality}\\
			&\leq \sum_{i = 1}^{\ell - 1} 2 \delta_i \tag{from Lemma \ref{lemLipschitzGeneral}}\\
			&\leq 2 \cdot \frac{1}{2} \, \|\vec{p} - \vec{p}'\|_1 \tag{from Equation (\ref{equTV})}\\
			&= \|\vec{p} - \vec{p}'\|_1. \qedhere
		\end{align*}
	\end{proof}

	\section{Monotonicity of Apportionment Methods}
	\label{sec:apportionment}
	Returning to our full setting of randomized apportionment, what does the (pairwise) selection monotonicity of Sampford rounding win us?
	Fix an ``old'' and a ``new'' election with respective votes $\vec{v}, \vec{v}'$ and the same house size $h$, as well as a coalition $T$ of parties.
	Suppose that each party $i \in T$ has increased its share of the vote from the old election to the new one, whereas each party $i \notin T$ has lost in vote share.
	Then, if no party's vote share has shifted by so much that its lower quota changed and the sum of residues equals $|T|$, the apportionment method induced by Sampford rounding gives $T$ a higher chance of jointly being rounded up, by \cref{thmSampfordMainResult}.
	As a result, the paradox from \cref{sec:motivating} cannot happen for this method.
	
	When lower quotas change between elections, however, the entire concept of ``$T$ being jointly rounded up'' stops making much sense.
	Returning to the example from the introduction, suppose that, instead of the exodus of voters from the left-wing party~5, 40 voters from the right-wing party~4 had moved to party~5, which among other things would increase party~5's lower quota from 0 to 1.
	Since this shift also decreases the sum of all residues from 3 to 2, no apportionment method satisfying quota can ``round up'' all three left-wing parties.
	But this line of thinking muddles that the increase of party~5's lower quota is good news for the left-wing coalition: party~5 receiving one seat has turned from a possibility into certainty, and party~5 might even receive a second seat!
	
	In our view, the right generalization of the paradox to changing lower quotas is by considering how likely the coalition is to exceed certain \emph{thresholds} of house seats.
	In our example, the relevant threshold was that of a majority of seats; broadly speaking, party~5 increasing its lower quota ought to increase the prospects of the coalition gaining such a majority.
	Besides the simple majority, a coalition might care about its probability of exceeding other seat thresholds, which might turn it into a supermajority, a blocking minority, or a parliamentary group with certain privileges~\cite[e.g.,][rules 33--34]{EuropeanParliament19}.
	Ideally, an apportionment method should have the property that the probability of $T$ exceeding a threshold is monotone, for all thresholds simultaneously.
	
	In this section, we capture a monotonicity in the probability of crossing thresholds in four axioms.
	Two of them will not be achievable for any randomized apportionment method, and one more axiom is at least impossible if the apportionment method satisfies a natural condition.
	The satisfiability of the fourth monotonicity axiom is the most tantalizing open question of our work; we conjecture that it is satisfied by the Sampford apportionment method, and show that Grimmett's method satisfies it for coalitions of size $T = 2$.
	
	\subsection{Axioms and Apportionment Methods}
	\label{sec:apportionment:defs}
	The apportionment methods we study are induced by the rounding rules from the last section, through the construction defined in \cref{sec:prelims}.
	We refer to the apportionment method induced by conditional Poisson rounding as the \emph{conditional Poisson method}, to the method induced by Sampford rounding as the \emph{Sampford method}, and so forth.
	An exception is the method induced by systematic rounding, which we refer to as Grimmett's method~\cite{Grimmett04}.
	Recall that all of these apportionment methods satisfy quota and ex-ante proportionality.
	
	The main way in which we strengthen monotonicity for apportionment is \emph{threshold monotonicity}: 
	\vspace{-.5cm}
	\begin{definition}[Threshold monotonicity]
		Let $\vec{v}, \vec{v}' \in \mathbb{R}_{\geq 0}^n$ be two vote vectors, and let $h \in \mathbb{N}$.
		Let $T$ be a set of parties such that $q_i' \geq q_i$ for all $i \in T$\footnote{Or, equivalently, we can require that $i$'s share of the vote $\frac{v_i}{\sum_{j \in [n]} v_j} = q_i / h$ increases rather than its quota.} and $q_i' \leq q_i$ for all $i \notin T$.
		An apportionment method $\app$ satisfies threshold monotonicity if it always holds, for all thresholds $\theta \in \mathbb{N}$, that
		\[ \mathbb{P}_{\vec{\alpha} \sim \app(\vec{v}', h)}[\textstyle\sum_{i \in T} \alpha_i \geq \theta] \geq \mathbb{P}_{\vec{\alpha} \sim \app(\vec{v}, h)}[\textstyle\sum_{i \in T} \alpha_i \geq \theta]. \]
	\end{definition}
	In other words, the axiom requires the random variable $\sum_{i \in T} \app(\vec{v}', h)_i$, which describes the total number of seats awarded to $T$ for the votes $\vec{v}'$, to first-order stochastically dominate the corresponding random variable for the votes $\vec{v}$.
	We also define a pairwise analogue to the axiom:
	\begin{definition}[Pairwise threshold monotonicity]
		Let $\vec{v}, \vec{v}' \in \mathbb{R}_{\geq 0}^n$ be two vote vectors, and let $h \in \mathbb{N}$.
		Let $T_1, T_2$ be two sets of parties such that $q_i' \geq q_i$ for all $i \in T_1$ and $q_i' \leq q_i$ for all $i \in T_2$.
		An apportionment method $\app$ satisfies pairwise threshold monotonicity if always at least one of the following two statements holds: 
		\begin{align*}
			\forall \theta \in \mathbb{N},\;&\mathbb{P}_{\vec{\alpha} \sim \app(\vec{v}', h)}[\textstyle\sum_{i \in T_1} \alpha_i \geq \theta] \geq \mathbb{P}_{\vec{\alpha} \sim \app(\vec{v}, h)}[\textstyle\sum_{i \in T_1} \alpha_i \geq \theta] \quad \text{or} \\
			\forall \theta \in \mathbb{N},\;&\mathbb{P}_{\vec{\alpha} \sim \app(\vec{v}', h)}[\textstyle\sum_{i \in T_2} \alpha_i \geq \theta] \leq \mathbb{P}_{\vec{\alpha} \sim \app(\vec{v}, h)}[\textstyle\sum_{i \in T_2} \alpha_i \geq \theta].
		\end{align*}
	\end{definition}
	Though these axioms (for apportionment methods) cannot syntactically be compared to our previous monotonicity axioms (for rounding rules), 
	they should be thought of as strengthenings of their selection equivalents.
	This relationship can be made formal using the embedding from \cref{sec:prelims}: Any apportionment method satisfying (pairwise) threshold monotonicity, when translated into a rounding rule, satisfies (pairwise) selection monotonicity, by selecting the threshold so that all parties in the coalition must be selected to meet the threshold.

	We will show that pairwise threshold monotonicity is incompatible with a natural condition on apportionment methods:
	\begin{definition}[Full support]
		An apportionment method $\app$ satisfies full support if, for all vote vectors $\vec{v}$, house sizes $h$, and apportionment vectors $\vec{\alpha}$ that satisfy quota (i.e., $\vec{\alpha} \in \mathbb{N}^n$, $\sum_{i \in [n]} \alpha_i = h$, and $\alpha_i \in \{\lfloor q_i \rfloor, \lceil q_i \rceil\}$ for all $i \in [n]$), $\app(\vec{v}, h)$ will take on the value $\vec{\alpha}$ with positive probability.
	\end{definition}
	The definitions of Sampford rounding immediately implies that the Sampford method has full support. Similarly, the conditional Poisson method has full support, since, if the residue of any party $i$ is non-zero, its working probability $\pi_i$ must also be positive by ex-ante proportionality.

	Finally, we define variants of the two preceding axioms whose precondition argues about changes not to the standard quotas or vote shares, but to changes in the raw vote counts:
	\emph{Vote-count threshold monotonicity} is defined just like threshold monotonicity, except requiring that $v_i' \geq v_i$ for all $i \in T$ and $v_i' \leq v_i$ for all $i \notin T$, instead of the matching assumptions for the $q_i$ and $q_i'$.
	In the same way, we define \emph{pairwise vote-count threshold monotonicity} like its non-vote-count variant, where we now require that $v_i' \geq v_i$ for all $i \in T_1$ and $v_i' \leq v_i$ for all $i \in T_2$.
	Under the mild assumption that an apportionment method satisfies homogeneity, i.e., that scaling the vote vector by a constant doesn't change the apportionment, (pairwise) vote-count threshold monotonicity implies (pairwise) threshold monotonicity.

	\subsection{Conjecture and $|T|=2$ Possibility for Threshold Monotonicity}
	\label{sec:thresholdconj}
	We leave the most pressing question, finding an apportionment method that satisfies threshold monotonicity, unresolved.
	Based on extensive computational search for counterexamples, however, we conjecture that the Sampford method satisfies this axiom:
	\begin{conjecture}\label{conjecture}
		The Sampford method satisfies threshold monotonicity.
	\end{conjecture}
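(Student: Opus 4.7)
The plan is to establish the conjecture in two stages: a rounding-level strengthening of \cref{thmSampfordMainResult}, followed by a continuity-and-decomposition argument that lifts the rounding result back to the apportionment setting. In the first stage, I would prove a \emph{set-wise} selection monotonicity for Sampford rounding: for any target vectors $\vec{p},\vec{p}'\in[0,1)^n$ with common integer sum $k$, any set $T\subseteq[n]$, and any integer $k'$ with $1\leq k'\leq\min(|T|,k)$, if $p_i'\geq p_i$ for all $i\in T$ and $p_i'\leq p_i$ for all $i\notin T$, then the Sampford samples $S\sim r_{\text{Sampford}}(\vec{p})$ and $S'\sim r_{\text{Sampford}}(\vec{p}')$ satisfy $\mathbb{P}[|S'\cap T|\geq k']\geq\mathbb{P}[|S\cap T|\geq k']$. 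Our \cref{thmSampfordMainResult} is the special case $|T|=k'=k$, in which $\{|S\cap T|\geq k'\}$ collapses to $\{S=T\}$; this set-wise strengthening is exactly what threshold monotonicity demands whenever no lower quota changes between $\vec{v}$ and $\vec{v}'$.

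For the second stage I would interpolate between $\vec{v}$ and $\vec{v}'$ along a continuous path (the straight-line segment already preserves the sign pattern of the quota changes on $T$ and $[n]\setminus T$) and partition it at the finitely many times at which some standard quota $q_i$ crosses an integer. On each open segment the floors $\lfloor q_i\rfloor$ are constant, so for any threshold $\theta$ the value $\mathbb{P}_{\vec{\alpha}\sim\app(\vec{v},h)}[\sum_{i\in T}\alpha_i\geq\theta]$ changes in lockstep with $\mathbb{P}[|S\cap T|\geq\theta-\sum_{i\in T}\lfloor q_i\rfloor]$, whose monotonicity is delivered by the first stage. At each crossing I would invoke the Lipschitz continuity of Sampford (\cref{thmContinuity}) together with the following boundary observation: if $q_i$ crosses an integer $m$ from below, then just before the crossing $p_i\to 1^-$, so by ex-ante proportionality the event $\alpha_i=m$ occurs with probability approaching $1$; just after the crossing $\lfloor q_i\rfloor=m$ and $p_i\to 0^+$, so $\alpha_i=m$ again holds with probability approaching $1$. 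Combined with the Lipschitz dependence of the joint sample on the continuous coordinates $(p_j)_{j\neq i}$, this shows that the cumulative distribution of $\sum_{i\in T}\alpha_i$ is continuous across each crossing, and the monotone segments stitch together into a monotone cumulative distribution along the full path.

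The main obstacle is the first stage. The proof of \cref{thmSampfordMainResult} reduced selection monotonicity to the infinitesimal inequality $\bigl(\pdif{p_1}-\pdif{p_n}\bigr)\,\mathbb{P}[S=[k]]\geq 0$ and leveraged \cref{lem:magic} to replace the exponential-size Sampford denominator by an expectation under a Poisson trial. For the event $\{|S\cap T|\geq k'\}$ the numerator becomes a sum $\sum_{A:\,|A\cap T|\geq k'}f(A)$ over exponentially many sets, and the telescoping identity of \cref{lem:magicsub} no longer collapses cleanly because the relevant level sets are not nested in a single cardinality parameter. The most promising attack is to extend \cref{lem:magicsub,lem:magic} to joint expectations of the form $\mathbb{E}\bigl[\bone\{|B\cap T|\geq k'\}\cdot g(|B|)\bigr]$ over the Poisson trial, exploiting the product structure of $B$ across $T$ and $[n]\setminus T$ to rewrite the partial derivative as a nonnegative combination of joint level-event probabilities for $|B\cap T|$ and $|B\setminus T|$. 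An arguably cleaner alternative is coupling-based: for the elementary swap $p_1\mapsto p_1+\epsilon$, $p_n\mapsto p_n-\epsilon$ with $1\in T$ and $n\notin T$, construct a joint law on $(S,S')$ under which $|S'\cap T|\geq|S\cap T|$ almost surely; by the sufficiency reduction in \cref{app:Sufficient}, any such monotone coupling would immediately settle the conjecture.
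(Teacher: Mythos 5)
The statement you address is a \emph{conjecture}\,---\,the paper explicitly leaves it open, offering only computational evidence, so there is no paper proof to compare against. What you have written is not a proof but a two-stage plan of attack, and it is fair to assess it as such.

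Your overall architecture is sound and matches how the authors themselves frame the difficulty. Stage 1 (the set-wise strengthening of selection monotonicity, $\mathbb{P}[|S'\cap T|\geq k']\geq\mathbb{P}[|S\cap T|\geq k']$) is indeed exactly what threshold monotonicity requires when lower quotas are held fixed, and stage 2 (interpolate in quota space, partition at integer crossings, stitch across crossings by continuity) is a reasonable way to handle changing lower quotas. However, the plan leaves the genuinely hard part unresolved. Stage 1 is not a minor extension of \cref{thmSampfordMainResult}: the event $\{S=T\}$ is a single atom whose mass has a simple derivative, whereas $\{|S\cap T|\geq k'\}$ aggregates exponentially many atoms, many of which \emph{decrease} in probability as you perturb (precisely the obstruction the paper points out in \cref{sec:thresholdconj}). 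The telescoping identity of \cref{lem:magicsub} collapses the denominator of the Sampford mass by marginalizing over all sets of a given size; the analogous numerator $\sum_{A:\,|A\cap T|\geq k'}f(A)$ does not telescope in the same single cardinality parameter, and your suggested two-parameter extension (joint level events for $|B\cap T|$ and $|B\setminus T|$) is speculative\,---\,you do not show, even at the level of a differential inequality, that the signs work out. The coupling alternative has the same status: a monotone coupling would indeed suffice, but constructing one is the whole problem, and the rejective description of Sampford does not obviously admit a monotone coupling under the elementary swap $p_1\mapsto p_1+\epsilon$, $p_n\mapsto p_n-\epsilon$ (the acceptance probabilities for the two parameter vectors need not be ordered the right way). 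So stage 1 remains essentially the full content of the conjecture.

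Stage 2 also needs more care than you give it. You cannot directly invoke \cref{thmContinuity} at a crossing point: that theorem is about the rounding rule as a map on $[0,1)^n$ with a \emph{fixed} residue sum $k$, but at a crossing $p_i$ exits the domain ($p_i\to 1^-$ from the left, $p_i\to 0^+$ from the right) and $k$ drops by one. What you actually need is a boundary-limit statement: that the Sampford distribution on $[n]$, conditioned on $i\in S$, restricted to $[n]\setminus\{i\}$, converges as $p_i\to 1^-$ to the Sampford distribution on $[n]\setminus\{i\}$ with size $k-1$, and that the analogous limit from the right (conditioning on $i\notin S$ as $p_i\to 0^+$) agrees with it. This is plausible from the closed form \cref{eq:sampfordpmffa} but it is a separate lemma, not an application of Lipschitz continuity. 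You should also interpolate in quota space rather than vote space (as your wording suggests), since $q_i(\lambda)$ is then affine in $\lambda$, ensuring only finitely many crossings and preserving the sign pattern; along the straight line in $\vec{v}$-space the quotas are Möbius functions and the argument is less clean. In short: the skeleton is right and aligns with the paper's own diagnosis of the difficulty, but the load-bearing step is missing, and the continuity bridge at crossings needs its own argument.
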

	Proving this conjecture, of course, appears technically very difficult.
	A major complication relative to our proof of selection monotonicity is that, in many of the apportionment outcomes forming the event ``$T$ receives at least $\theta$ seats'', some of the rounded seats go to parties outside of $T$ whose vote share is shrinking, which means that the probability of many of these outcomes will decrease for the Sampford method.
	Thus, a proof for our conjecture will require a way to charge the decreases in probability of some outcomes to increases in probability of other outcomes.
	
	Still, we do have some limited good news:
	Though Grimmett's method does not satisfy threshold monotonicity for coalitions of three and more parties (not even selection monotonicity, as we saw in \cref{sec:motivating}), it satisfies the axiom for coalitions of size 2.

	\begin{theorem}\label{proGrimmett2}
		Grimmett's method satisfies threshold monotonicity for coalitions $T$ of size $\leq 2$.
	\end{theorem}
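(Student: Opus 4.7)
The plan is to handle the trivial case $|T|\le 1$ first (it follows immediately from ex-ante proportionality: $P[\alpha_i\ge\theta]$ equals $1$, $q_i-\lfloor q_i\rfloor$, or $0$ according as $\theta\le\lfloor q_i\rfloor$, $\theta=\lceil q_i\rceil$, or $\theta>\lceil q_i\rceil$, which is visibly non-decreasing and continuous in $q_i$) and then to focus on $T=\{i,j\}$. I will interpolate along the linear path $\vec v(t)=(1-t)\vec v+t\vec v'$ for $t\in[0,1]$. Because $\alpha_i+\alpha_j=L_i+L_j+\bone[i\in S]+\bone[j\in S]$ where $S$ is Grimmett's random set, $P[\alpha_i+\alpha_j\ge\theta]$ is fully determined by $L_i+L_j$, the residues $p_i,p_j$, and the joint probability $\mu_{ij}:=P[i,j\in S]$; the only nontrivial events are $\{\alpha_i+\alpha_j\ge L_i+L_j+1\}$ with probability $p_i+p_j-\mu_{ij}$ and $\{\alpha_i+\alpha_j\ge L_i+L_j+2\}$ with probability $\mu_{ij}$. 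It therefore suffices to show (i) that both $\mu_{ij}$ and $p_i+p_j-\mu_{ij}$ are non-decreasing in $t$ on any subinterval where no lower quota $L_\ell$ changes, and (ii) that $P[\alpha_i+\alpha_j\ge\theta]$ is continuous at the isolated times where some $L_\ell$ does change.

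For (i) I will use Grimmett's arc-intersection representation. Writing $g_{ij}:=\sum_{i<\ell<j}p_\ell$ for the cyclic gap between the arcs of parties $i$ and $j$, a direct computation gives $\mu_{ij}=\sum_{m\in\mathbb Z:\,g_{ij}<m<g_{ij}+p_i+p_j} f(m-g_{ij};p_i,p_j)$, where $f(\cdot;p_i,p_j)$ is the tent function of width $p_i+p_j$ and height $\min(p_i,p_j)$. Piecewise differentiation of $f$ yields $\partial f/\partial p_i=\bone[y>p_i]$ and $\partial f/\partial y=\bone[y<\min(p_i,p_j)]-\bone[y>\max(p_i,p_j)]$, and a short telescoping argument then gives that $\partial\mu_{ij}/\partial p_i$ counts the integers in $(g_{ij}+p_i,\,g_{ij}+p_i+p_j)$, while $\partial\mu_{ij}/\partial p_i-\partial\mu_{ij}/\partial g_{ij}$ counts the integers in $(g_{ij},\,g_{ij}+p_j]$; symmetric formulas hold with $i\leftrightarrow j$. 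Since $p_j<1$, each such count lies in $\{0,1\}$.

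I then decompose the velocity $\dot{\vec p}(t)$ into a non-negative flow $\{a_{\ell\to m}:\ell\notin T,\,m\in T\}$ with $\sum_\ell a_{\ell\to m}=\dot p_m$ and $\sum_m a_{\ell\to m}=-\dot p_\ell$; such a decomposition exists because between transitions $\sum\dot p=0$, $\dot p_i,\dot p_j\ge 0$, and $\dot p_\ell\le 0$ for $\ell\notin T$. Each atomic flow $a_{\ell\to m}$ contributes $a_{\ell\to m}\cdot\partial\mu_{ij}/\partial p_m$ to $d\mu_{ij}/dt$ when $\ell\notin(i,j)$, and $a_{\ell\to m}\cdot(\partial\mu_{ij}/\partial p_m-\partial\mu_{ij}/\partial g_{ij})$ when $\ell\in(i,j)$; by the previous paragraph, each such contribution lies in $[0,a_{\ell\to m}]$. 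Summing yields $0\le d\mu_{ij}/dt\le \dot p_i+\dot p_j$, which gives the simultaneous monotonicity of $\mu_{ij}$ and $p_i+p_j-\mu_{ij}$ on the subinterval.

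For (ii), at a jump time $t_0$ where $L_i$ increases by one, I will verify by case analysis on $\theta-L_i-L_j$ that the left and right limits of $P[\alpha_i+\alpha_j\ge\theta]$ coincide, using $\mu_{ij}\to p_j$ as $p_i\to 1^-$ and $\mu_{ij}\to 0$ as $p_i\to 0^+$; the case where $L_j$ jumps is symmetric. At a jump time where $L_\ell$ decreases by one for $\ell\notin T$, the residue $p_\ell$ jumps from near $0$ to near $1$; if $\ell\in(i,j)$ this shifts $g_{ij}$ by (approximately) an integer, and since $\mu_{ij}$ depends on $g_{ij}$ only modulo $1$ the value is continuous, while if $\ell\notin[i,j]$ nothing is affected. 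The main obstacle I anticipate is the clean evaluation of the combined derivative $\partial\mu_{ij}/\partial p_i-\partial\mu_{ij}/\partial g_{ij}$, which requires carefully separating the tent pieces active in each of the cases $p_i\le p_j$ and $p_i>p_j$; the remaining steps are mechanical, and combining the piecewise monotonicity on the open subintervals with continuity at the finitely many transition points establishes the theorem.
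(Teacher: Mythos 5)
Your proof is correct, but it takes a genuinely different route from the paper's. The paper gives a clean \emph{coupling} argument: it reparametrizes Grimmett's method by placing standard quotas (not residues) as intervals, shifted by a free parameter $u_0$, and then uses an elementary lemma (on partitions $A,B,C,D,E$ of a line segment with $B,D$ growing and $A,C,E$ shrinking) to show that one can always choose $u_0$ so that the new intervals for both parties in $T$ \emph{contain} the old ones. For every fixed realization of the random offset $u$, the new allocation to $T$ then dominates the old one pointwise, giving first-order stochastic dominance in one stroke — and, because this works directly with standard quotas, it automatically absorbs changes in the lower quotas with no separate continuity argument. Your proof instead interpolates $\vec v(t)$, reduces the $|T|=2$ stochastic-dominance claim to the simultaneous monotonicity of $\mu_{ij}$ and $p_i+p_j-\mu_{ij}$, derives an explicit arc-intersection formula $\mu_{ij}=\sum_m f(m-g_{ij};p_i,p_j)$ with a tent function $f$, bounds the relevant piecewise-constant partial derivatives ($\partial_{p_m}\mu_{ij}\in\{0,1\}$ and $\partial_{p_m}\mu_{ij}-\partial_{g_{ij}}\mu_{ij}\in\{0,1\}$), and routes the velocity through a flow $a_{\ell\to m}$ from shrinking to growing coordinates to conclude $0\le d\mu_{ij}/dt\le\dot p_i+\dot p_j$; it then patches in continuity at lower-quota jumps via the limits $\mu_{ij}\to p_j$ as $p_i\to 1^-$, $\mu_{ij}\to 0$ as $p_i\to 0^+$, and periodicity of $\mu_{ij}$ in $g_{ij}$. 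Both are valid; the paper's argument is shorter and more conceptual, and it sidesteps the residue-jump bookkeeping entirely, while yours is more explicitly computational and makes visible exactly where the pairwise inclusion probability gains or loses mass, at the cost of the derivative case analysis you flag. One small simplification you may want to note: the flow decomposition can be avoided, since the bound $\partial_{g_{ij}}\mu_{ij}=1$ already forces $\partial_{p_i}\mu_{ij}=\partial_{p_j}\mu_{ij}=1$ (any integer in $(g_{ij}+\max(p_i,p_j),g_{ij}+p_i+p_j)$ also lies in each of the two larger windows), which together with $\dot p_i+\dot p_j+\dot g_{ij}\ge 0$ yields $d\mu_{ij}/dt\ge 0$ directly — though the flow is still needed for the matching upper bound.
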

	
	\begin{proof}[Proof of Proposition \ref{proGrimmett2}]
		We prove threshold monotonicity of the variant of Grimmett's method that uses an arbitrary fixed ordering of the states, clearly implying threshold monotonicity for a randomized order as well. For our analysis, we will modify our presentation of Grimmett's method in two ways.
		First, instead of giving all parties their lower quotas and then lining up their residues as intervals on the number line, we directly line up their standard quotas as intervals on the number line. Then each party gets a number of seats equal to the number of integers their interval contains. This is clearly equivalent: since extending the interval of each party $i$ from length $p_i$ to length $q_i = p_i + \lfloor q_i \rfloor$ adds an integral length $\lfloor q_i\rfloor$, the party's interval will contain exactly $\lfloor q_i \rfloor$ more integers than previously, and the shift of other parties remains the same modulo 1. Second, we arbitrarily shift all the intervals to the right by some amount $u_0$. Again, this has no bearing on the probability distribution of numbers of seats awarded, since all intervals are additionally shifted by a random $u$ drawn uniformly from $[0, 1)$, and the distribution of $u + u_0$ modulo 1 is identical to that of $u$ modulo 1. We call this equivalent implementation of Grimmett's method {$G(u_0)$}.
		
		\begin{figure}[t]
			\centering
			\scalebox{0.6}{
				\begin{tikzpicture}
					\node [align=right,left] at (-.15,3.25) {\LARGE Previous};
					\node [align=right,left] at (-.15,2.75) {\LARGE election};
					\node [align=right,left] at (-.15,1.25) {\LARGE New};
					\node [align=right,left] at (-.15,0.75) {\LARGE election};
					
					\foreach \x/\y/\z/\w in {0/3/A/white,3/6/B/black!20,6/9/C/white,9/11.5/D/black!20,11.5/15/E/white}{\draw[very thick,fill=\w] (\x,2.5) rectangle (\y,3.5) node[pos=.5] {\LARGE$\z$};}
					\foreach \x/\y/\z/\w in {1/2/A'/white,2/7/B'/black!20,7/9/C'/white,9/12.5/D'/black!20,12.5/16/E'/white}{\draw[very thick, fill=\w] (\x,1.5) rectangle (\y,0.5) node[pos=.5] {\LARGE$\z$};}
					\draw[very thick, pattern=north west lines, pattern color=black!20] (0,1.5) rectangle (1,0.5) node[pos=.5] {\LARGE$u_0$};
					\foreach \i in {0,5,10,15}{\draw[teal,line width=.6mm, dashed] (\i,0) -- (\i,4);}
			\end{tikzpicture}}
			\caption{Illustration of the proof of \Cref{proGrimmett2} stating that Grimmett's method satisfies threshold monotonicity for $|T|=2$. Intervals corresponding to growing parties are gray.}
			\label{fig:grimmett2}
		\end{figure}
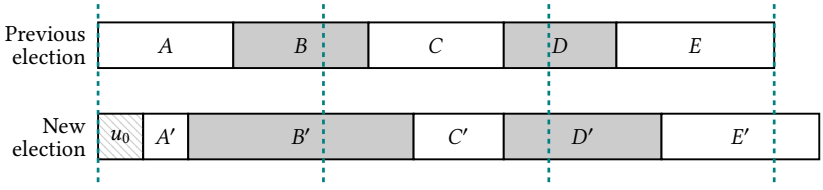
		
		If $\abs{T} = 1$, there is nothing to show: proportionality alone implies threshold monotonicity. So suppose $\abs{T} = \{i, j\}$, where $1 \leq i < j \leq n$. First consider running $G(0)$ on an arbitrary input {$(\Vec{v},h)$}. Before adding the random shift $u$, let $A \cup B \cup C \cup D \cup E$ be the partition of $[0, h)$ where $A$ is the union of all intervals for parties numbered lower than $i$, $B$ is the interval for party $i$, $C$ is the union of all intervals numbered between $i$ and $j$, $D$ is the interval for party $j$, and $E$ is the union of all intervals for parties numbered higher than $j$.
		Though we know that the sizes of the intervals corresponding to $i$ and $j$ increase when going from $(\Vec{v}, h)$ to $(\Vec{v}',h)$ for $G(0)$, it need not be true that the new intervals contain the previous ones. To overcome this issue, we show in \Cref{lemGrimmett2Tech} in \Cref{app:GrimmettTech} that there exists a shift $u_0 \in \mathbb{R}$ such that, when $A'$, $B'$, $C'$, $D'$, and $E'$ are the new respective intervals in running {$G(u_0)$} on input {$(\Vec{v}', h)$}, it holds that $B \subseteq B'$ and $D \subseteq D'$.
		See \Cref{fig:grimmett2} for an illustration of an example. 
		Thus, for any realization of the random shift {$u$} that awarded a given threshold of $t$ seats to the parties in $T$ when running $G(0)$ on {$(\Vec{v},h)$}, the same value of $u$ will award at least $t$ seats to the parties in $T$ when running {$G(u_0)$} on {$(\Vec{v}',h)$}, since the same $t$ integers that were included in $B$ and $D$ will be contained within $B'$ and $D'$. Since $G(0)$ and {$G(u_0)$} are both equivalent to Grimmett's method, we have shown that it satisfies threshold monotonicity.
	\end{proof}
	
	As discussed, the fact that systematic rounding violates selection monotonicity for coalitions of size $3$ implies that Grimmett's method violates threshold monotonicity for $|T|=3$. Our example in \Cref{fig:grimmett} also illustrates why the approach applied in the proof of \Cref{proGrimmett2} fails for larger coalitions: For the old interval of party $3$ to be included in its new interval, we would need to have a positive shift $u_0>0$. However, any positive shift would violate the same constraint for party $1$.

	\subsection{Impossibility of Pairwise Threshold Monotonicity}
	When we considered selection monotonicity and its pairwise variant in \cref{sec:rounding}, the most promising rounding rules seemed those whose probability distributions spread their mass smoothly and broadly over the possible outcomes (like conditional Poisson and Sampford rounding, and unlike systematic and pipage rounding).
	Furthermore, an easy argument showed that Sampford rounding satisfied the pairwise variant of selection monotonicity.

	In contrast, when moving to pairwise \emph{threshold} monotonicity, we show that no apportionment method whose distributions have such a full support, and thus neither the conditional Poisson method nor the Sampford method, can satisfy the axiom.

	\begin{restatable}{theorem}{thmImpossibilityFDCO}\label{thmImpossibilityFDCO}
		No (ex-ante proportional and quota) apportionment method that has full support can satisfy pairwise threshold monotonicity.
	\end{restatable}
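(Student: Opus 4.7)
The plan is to argue by contradiction: assume $\app$ is an apportionment method that satisfies quota, ex-ante proportionality, full support, and pairwise threshold monotonicity, and exhibit specific vote vectors $\vec{v}, \vec{v}'$ and coalitions $T_1, T_2$ for which both conjunctions forming the disjunction in the axiom must fail. I would restrict to the minimal nontrivial configuration, $n=4$ parties and $h=2$, and consider two size-2 coalitions that overlap in a single party, for instance $T_1 = \{1, 2\}$ and $T_2 = \{2, 3\}$. The precondition then forces $q_1' \ge q_1$, $q_2' = q_2$, and $q_3' \le q_3$ (with $q_4'$ unconstrained beyond preserving the sum to $h$), and the two possible thresholds $\theta \in \{1, 2\}$ translate the axiom's two conditions into four concrete inequalities on the six allocation probabilities $p_{ij}$ (the probability of rounding up the 2-subset $\{i, j\}$). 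Specifically, condition 1 becomes ``$p_{12}' \ge p_{12}$ and $p_{34}' \le p_{34}$'' and condition 2 becomes ``$p_{23}' \le p_{23}$ and $p_{14}' \ge p_{14}$'', so pairwise threshold monotonicity reduces to the disjunction of these two conjunctions.

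The key preparatory step is linear algebra: summing and differencing the marginal equations from ex-ante proportionality shows that each of the three differences $p_{12} - p_{34}$, $p_{13} - p_{24}$, and $p_{14} - p_{23}$ is pinned down by the standard quotas alone. Consequently, the set of full-support distributions consistent with a fixed quota vector is a three-dimensional open polytope, and the four probabilities appearing in the disjunction move jointly as $\vec{q}$ changes. I would then choose the shifts in $q_1$, $q_3$, and $q_4$ so that the induced shifts in the pinned differences between $\vec{v}$ and $\vec{v}'$ force each of the two conjunctions to carve out an infeasible slab inside the full-support polytope at $\vec{v}'$, regardless of which reference distribution the method has chosen at $\vec{v}$.

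The main obstacle is that the argument has to work against arbitrary full-support distributions, and full support provides only strict positivity rather than quantitative lower bounds. The contradiction therefore has to come out of the algebra: I would engineer $\vec{v}, \vec{v}'$ so that the two conjunctions force pointwise equalities in certain coordinates, while the shifts in the pinned differences simultaneously force strict inequalities in those same coordinates. Once the correct pair of vote vectors is identified, the verification should reduce to elementary inequalities on a constant number of variables; the technical heart of the proof is locating those shifts, and I would do so by parameterizing both polytopes explicitly and searching for a direction in quota-space along which every pair of distributions in the full-support region violates both conjunctions.
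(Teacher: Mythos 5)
Your plan hinges on locating a \emph{single} pair $\vec{v},\vec{v}'$ with $n=4$, $h=2$ such that \emph{every} full-support distribution at $\vec{v}$ paired with \emph{every} full-support distribution at $\vec{v}'$ violates both conjunctions. That is a much stronger property than the theorem requires, and it is precisely where the plan is likely to fail. Full support only gives qualitative positivity\,---\,each quota-consistent allocation has \emph{some} positive probability $\varepsilon>0$\,---\,with no quantitative lower bound. Any argument that tries to ``carve out an infeasible slab'' inside the feasible polytope has to cope with the fact that, as you push the residues toward degenerate configurations to separate the polytopes, the full-support requirement shrinks but never vanishes, and you have no control over how much probability mass the method actually places on the relevant set. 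The paper's proof works around exactly this by making the choice of inputs \emph{depend on the method}: it fixes the method, considers the family $\vec{p}(m)$ as $m\to\infty$, applies the pigeonhole principle to find a set $T_2$ whose probability stays bounded below by $1/90$ along a subsequence, \emph{then} reads off the method-specific $\varepsilon=\mathbb{P}[S'\supseteq T_1]>0$ from full support and only afterwards picks $m_i$ large enough that $1/m_i<\varepsilon$. This limiting, method-dependent step has no analogue in your proposal, and without it you cannot close the argument: for any fixed pair $(\vec v,\vec v')$ the method retains enough freedom inside the two polytopes to satisfy at least one of the two conjunctions.

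Two smaller issues. First, the dimension count is off: with $n=4$, $h=2$, the incidence matrix of $K_4$ has rank $4$, so the six pair-probabilities modulo the four marginal constraints form a $2$-dimensional (not $3$-dimensional) polytope; your three ``pinned differences'' $p_{12}-p_{34}$, $p_{13}-p_{24}$, $p_{14}-p_{23}$ are indeed determined by the quotas, but they are not independent, since knowing any two of the $p_{ij}$ determines all six. Second, your choice of overlapping size-$2$ coalitions is quite different from the paper's construction, which takes $n=12$, $h=3$, \emph{disjoint} coalitions $T_1$ of size $3$ and $T_2$ of size $2$, and uses a near-degenerate party (residue $1-1/m$) to squeeze the probability of $T_1$. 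That asymmetry in coalition sizes is what lets the paper play $\mathbb{P}[S_{m_i}\supseteq T_1]\le 1/m_i$ against $\mathbb{P}[S_{m_i}\supseteq T_2]\ge 1/90$. Your symmetric $2$-vs-$2$ setup does not obviously produce the same leverage, and the paper's remark that $n=4$ is tight refers to the \emph{vote-count} impossibility (\cref{thmImpossibilityVDC}), which is a different and easier statement not requiring full support.
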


	\begin{proof}
		Let $a$ be any proportional and optimistic apportionment algorithm. We fix $h = 3$ and consider a collection of vote vectors of size $n=12$ where all lower quotas are zero. Thus, we may think of the input to $a$ as just being the residue vector (which is the same as the vector of standard quotas).
		For any integer $m \geq 2$, we define the residues %
		$$\Vec{p}(m) \coloneqq \left(\frac15, \frac15, \frac15, \frac15, \frac15, \frac15, \frac15, \frac15, \frac15, \frac15, \frac1m, 1 - \frac1m \right)$$
		and let $S_m$ be the random variable defined as the set of parties that $a$ rounds up on input $p(m)$.
		By applying union bound over all events in which $S_m$ contains at least two elements from $\{1, \dots, 10\}$, we get
		\begin{align*}
			\sum_{R \in {\{1, \dots, 10\} \choose 2}}\Pr[S_m \supseteq R] &\geq \Pr[|S_m \cap \{1, \dots, 10\}| \geq 2] 
			= \Pr[|S_m \cap \{11, 12\}| \leq 1]\\[-7pt]
			&= 1 - \Pr[S_m \supseteq \{11, 12\}] \geq 1 - \Pr[11 \in S_m] \\ 
			& = 1 - \frac1m \geq \frac12.
		\end{align*}
		
		Therefore, by the averaging principle there must exist some $R \subseteq \{1, \dots, 10\}$ of size 2 such that
		$$\Pr[S_m \supseteq R] \geq \frac{1}{2 \cdot {10 \choose 2}} = \frac{1}{90}.$$
		
		Thus, for each $m \in \{2, 3, 4, \dots\}$, choose some $R_m \subseteq \{1, \dots, 10\}$ of size 2 such that
		$$\Pr[S_m \supseteq R_m] \geq \frac{1}{90}.$$
		Since the sequence $R_2, R_3, R_4, \dots$ only takes finitely-many different values, by the pigeonhole principle there must be some infinite subsequence of indices $m_1, m_2, m_3, \dots$ such that $R_{m_i}$ is the same for each positive integer $i$. Call this set $T_2$ and choose an arbitrary set $T_1 \subseteq \{1, \dots, 10\} \setminus T_2$ of size 3. Without loss of generality, we re-label the states so that $T_2 = \{1, 2\}$ and $T_1 = \{3, 4, 5\}$. Now consider the alternative residue vector $$\Vec{p}' \coloneqq \left(\frac14, \frac14, \frac16, \frac16, \frac16, \frac{1}{500}, \frac{1}{500}, \frac{1}{500}, \frac{1}{500}, \frac{1}{500}, 1 - \frac{1}{200}, 1 - \frac{1}{200} \right)$$
		and let $S'$ be the random set that $a$ rounds up on input $\Vec{p}'$. By full support, we know that
		$$\Pr[S' \supseteq T_1] \eqqcolon \varepsilon > 0.$$
		Choose $i$ large enough so that $\frac{1}{m_i} < \varepsilon$. We claim that the pair of residue vectors $(\Vec{p}', \Vec{p}(m_i))$ violates pairwise threshold monotonicity for the sets $T_1$ and $T_2$, with respective thresholds $\theta_1 = 3$ and $\theta_2 = 2$ (so we are interested in the probabilities that \emph{all} elements of $T_1$ or $T_2$ are rounded up).
		
		First consider the set $T_2$. Observe that, in passing from $\Vec{p}'$ to $\Vec{p}(m_i)$, residues within $T_2$ have decreased from $\frac14$ to $\frac15$. However, since selecting both states in $T_2$ entails excluding at least one of states 11 and 12, we have
		\begin{align*}
			\Pr[S' \supseteq T_2] &\leq \Pr[11 \notin S' \txt{ or } 12 \notin S'] \leq \Pr[11 \notin S'] + \Pr[12 \notin S'] \tag{by the union bound}\\
			&= 2 - \Pr[11 \in S'] - \Pr[12 \in S'] = 2 - \left(1 - \frac{1}{200}\right) - \left(1 - \frac{1}{200}\right) \tag{by proportionality}\\
			&= \frac{1}{100} < \frac{1}{90} \leq \Pr[S_{m_i} \supseteq R_{m_i}] = \Pr[S_{m_i} \supseteq T_2],
		\end{align*}
		so the probability of selecting all states in $T_2$ has increased.
		
		Next consider the set $T_1$. In passing from $\Vec{p}'$ to $\Vec{p}(m_i)$, residues in $T_1$ increased from $\frac16$ to $\frac15$, but
		\begin{align*}
			\Pr[S' \supseteq T_1] &= \varepsilon > \frac{1}{m_i} = 1 - \left(1 - \frac{1}{m_i}\right) = 1 - \Pr[12 \in S_{m_i}] \tag{by proportionality}\\
			&= \Pr[12 \notin S_{m_i}] \geq \Pr[S_{m_i} \supseteq T_1],
		\end{align*}
		where in the final inequality we have used the fact that selecting $T_1$ entails not selecting any other state, as $T_1$ has size 3. Thus, the probability of selecting all states in $T_1$ has decreased, and so we have shown that pairwise threshold monotonicity is violated.
	\end{proof}
	
	Though this theorem only rules out methods having full support, we see this as a consequence of the difficulty of making a uniform argument over many partially degenerate apportionment methods, not as an invitation to explore apportionment methods without full support.
	In particular, the failure of systematic and pipage rounding to satisfy pairwise selection monotonicity implies that their induced methods fail pairwise threshold monotonicity, as described in \cref{sec:apportionment:defs}.
	
	We also remark that our proof leans heavily on the fact that pairwise threshold monotonicity allows vote shares outside of the coalitions to both grow and shrink, which is why we see no indication that threshold monotonicity would be subject to a similar impossibility.
	
	\subsection{Impossibility of Monotonicity in Raw Vote Counts}
	\label{sec:apportionment:votecount}
	In the previous two axioms for apportionment, what entitled a coalition to more seats was if its constituent parties increased their \emph{share} of the votes.
	This seems especially natural in the party-apportionment setting\,---\,for example, when elections are covered in the media, the vote percentages are typically reported as the central statistics of interest.
	Alternatively, one could consider increases in the raw vote counts as what entitles a coalition to more joint representation.
	(There is precedent for both perspectives in classical apportionment theory, though this latter perspective is arguably more prominent~\cite[e.g.,][p.\ 106--108]{BY01}.)
	The two axioms coincide whenever the total population does not change. 
	When the overall number of votes shifts, however, axioms based on the vote counts may grant a coalition increases in joint representation in additional scenarios. %
	
	Unfortunately, formulating (pairwise) threshold monotonicity in terms of vote counts is too much to ask, i.e., the resulting axioms cannot be satisfied by any method. The proof is (as often in apportionment theory) based on the fact that, when we rescale the populations after a decrease in the total number of votes, this rescaling increases the standard quota of large parties by a larger absolute amount than the standard quota of small parties, which can bring the residue of such a large party close to one and force us to select this party almost always.
	
	\begin{restatable}{theorem}{thmimpossibilityvdc}\label{thmImpossibilityVDC}
		No (ex-ante proportional and quota) apportionment method can satisfy vote-count threshold monotonicity or pairwise vote-count threshold monotonicity.
	\end{restatable}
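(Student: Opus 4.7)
The plan is, for an arbitrary quota- and ex-ante-proportional apportionment method $\app$, to construct vote vectors $\vec{v}, \vec{v}'$ together with a coalition $T$ and a threshold $\theta$ violating the axiom (and an analogous construction with two coalitions for the pairwise statement). Both constructions hinge on the rescaling observation flagged just before the theorem: since $q_i = h v_i / V$, the absolute change $q_i' - q_i = h v_i (V - V') / (V V')$ is proportional to $v_i$, so a \emph{large} party's residue can be dragged across a large portion of the interval $[0, 1)$ by a modest change in $V$ alone, even when its raw vote count $v_i$ is held fixed.

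First, I would single out a large party $n \notin T$ and choose $\vec{v}$ so that $q_n$ sits just above an integer $k$, making $n$'s residue tiny; by quota and ex-ante proportionality, $n$ is rounded up with probability near $0$ under $\app(\vec{v}, h)$. Second, I would build $\vec{v}'$ by shrinking the votes of parties in $[n] \setminus (T \cup \{n\})$ while leaving $v_n$ and the coordinates of $v_T$ unchanged; this trivially satisfies $v_i' \geq v_i$ for $i \in T$ and $v_i' \leq v_i$ for $i \notin T$. The shrinkage is tuned so that $q_n'$ lands just \emph{below} $k + 1$, giving $n$ a new residue close to $1$, which by ex-ante proportionality forces $n$ to be rounded up with probability close to $1$ under $\app(\vec{v}', h)$. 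Third, I would choose the residues of the small parties so that, in $\vec{v}$, the coalition $T$ reaches threshold $\theta$ with probability very close to $1$ (because most of the available rounding budget flows to $T$), while in $\vec{v}'$ the forced rounding-up of $n$ consumes one of those slots, bringing $\Pr[\text{seats}_T(\vec{v}', h) \geq \theta]$ strictly below the old probability.

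The main obstacle is that $q_T = h V_T / V$ strictly increases from $\vec{v}$ to $\vec{v}'$ (since $V_{-T}$ strictly drops while $V_T$ is held fixed), so naively the distribution of $\text{seats}_T$ shifts upward, which tends to \emph{preserve} tail probabilities. The crux of the construction is to pick $\theta$ at an upper tail that is attainable by $T$ in $\vec{v}$ precisely because $n$ almost surely abstains from the $k$-th rounding slot, but is no longer attainable in $\vec{v}'$ because $n$ almost surely takes that slot away; this decrement must dominate the (small) stochastic shift due to the change in $q_T$. For the pairwise statement, I would stage \emph{two} such quota-crossings at once: one forcing a party outside $T_1$ to steal a slot away from $T_1$ (breaking (a) at some $\theta_1$), and another lifting an external party outside $T_2$ just above an integer to deposit an unexpected slot that can be claimed by $T_2$ (breaking (b) at some $\theta_2$), leveraging the fact that parties outside $T_1 \cup T_2$ are completely unconstrained in their vote changes and can be pushed in opposite directions independently.
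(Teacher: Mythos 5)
Your opening insight is exactly the paper's: since $q_i = h v_i / V$, a drop in the total $V$ raises a large party's quota by a substantial absolute amount even as $v_i$ itself falls slightly, so a large party $n$'s residue can be dragged from near $0$ to near $1$, forcing $n$ to be rounded up almost surely. This rescaling mechanism is indeed the engine of the paper's counterexample.

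However, Step~3 of your plan --- setting the residues of parties in $T$ near $1$ so that $T$ crosses $\theta$ with probability near $1$ under $\vec{v}$ --- has a gap that you do not resolve. If parties in $T$ have residues near $1$ and $n$ has residue near $0$, the number of rounding slots is $k = h - \sum_i \lfloor q_i \rfloor \approx |T|$ (or slightly more). After the rescaling, $n$'s residue is near $1$ while $T$'s residues are \emph{still} near $1$ (they only rose), so $\sum_i p'_i \approx |T| + 1$: this forces $k' \ge |T|+1$, i.e.\ some shrinking party must have lost a unit of lower quota, and there is now a fresh slot available for $n$ \emph{in addition to} all of $T$. The very shrinkage that lifts $p_n$ also creates the slot, so nothing is stolen from $T$, and $T$ still crosses $\theta$ almost surely under $\vec{v}'$ --- no violation. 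The true obstruction is therefore not the ``small stochastic shift'' in $q_T$ that you flag, but the change in the slot count $k$. The paper's proof avoids this by keeping \emph{all} lower quotas, and hence $k$, fixed, and by using moderate rather than extreme residues combined with an averaging argument: at $\vec{v} = (380,140,140,140)$ with $h=8$, the residues are $(0.8, 0.4, 0.4, 0.4)$ with $k=2$, so $\Pr[1 \notin S] = 0.2$ and by averaging \emph{some} pair $T_1 \subseteq \{2,3,4\}$ satisfies $\Pr[S = T_1] \ge 0.2/3$. At $\vec{v}' = (376,142,142,100)$, residues become roughly $(0.96, 0.49, 0.49, 0.05)$ with lower quotas and $k$ unchanged, so $\Pr[S' = T_1] \le \Pr[1 \notin S'] = 0.04 < 0.2/3$, while $\Pr[1 \in S']$ rose from $0.8$ to $0.96$. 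This single crossing simultaneously kills both the plain and the pairwise axiom (taking $T_2 = \{1\}$), so the ``two crossings'' you envision for the pairwise case are unnecessary.
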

	
	\begin{proof}
		Let $a$ be any proportional apportionment method. We fix the house size be $h = 8$ and consider the initial vote vector
		$$\Vec{v} \coloneqq \left(380, 140, 140, 140\right).$$
		Let $S$ be the random variable defined as the set of parties that $a$ rounded up on input $(\Vec{v},h)$. Observe that the total number of votes is 800, and thus the lower quotas and residues are given by 
		$$\lfloor \Vec{q} \rfloor = (3, 1, 1, 1) \text{ and } \Vec{p} = (0.8, 0.4, 0.4, 0.4).$$
		Hence, $S$ must be of size 2. By proportionality,
		\begin{align*}
			0.8 &= \Pr[1 \in S] = 1 - \Pr[1 \notin S]\\
			&= 1 - \left(\Pr[S = \{2, 3\}] + \Pr[S = \{2, 4\}] + \Pr[S = \{3, 4\}]\right).
		\end{align*}
		Rearranging, we get
		$$\Pr[S = \{2, 3\}] + \Pr[S = \{2, 4\}] + \Pr[S = \{3, 4\}] = 0.2.$$
		Thus, by the averaging principle, at least one of these three probabilities must be at least $0.2/3 = 0.0\overline{6}$. Since $\Vec{v}$ is symmetric with respect to parties 2, 3, and 4, we may assume without loss of generality that this holds for the first term. That is, letting $T_1 \coloneqq \{2, 3\}$, we have
		$$\Pr{}[S = T_1] \geq 0.0\overline{6}.$$
		Now consider the alternative vote vector
		$$\Vec{v}' \coloneqq (376, 142, 142, 100).$$
		The total population is now 760, and the lower quotas remain unchanged. However, the residues are now
		$$\Vec{p}' = (0.96, 0.05, 0.49, 0.49)$$
		(rounded to two decimal places). Let $S'$ be the random set of size 2 that $a$ rounds up on input $(\Vec{v}', h)$. Since the residues still sum to 2, rounding up both parties in $T_1$ precludes rounding up party 1, so
		\begin{align*}
			\Pr[S' = T_1] &\leq \Pr[1 \notin S'] = 1 - \Pr[1 \in S']\\
			&= 1 - 0.96 = 0.04. \tag{by proportionality}
		\end{align*}
		We claim that $a$ violates pairwise vote-count threshold monotonicity for the pair of vote vectors $(\Vec{v},\Vec{v}')$, with $T_1 = \{2, 3\}$ and $T_2 \coloneqq \{1\}$, with thresholds $\theta_1 = \theta_2 = 4$.
		
		First consider $T_2$, which controls at least 4 seats if and only if it is rounded up, since the lower quota for party 1 (the only party in $T_2$) is 3. In passing from $\Vec{v}$ to $\Vec{v}'$, the number of votes for this party has decreased from 380 to 376, yet, by proportionality, the probability that $T_2$ is selected must have increased from 0.8 to 0.96.
		
		Next consider $T_1$, which controls at least 4 seats if and only if both of parties 2 and 3 are rounded up, since the lower quotas for these parties are each 1. In passing from $\Vec{v}$ to $\Vec{v}'$, the numbers of votes for both parties in $T_1$ have increased from 140 to 142, yet we have shown that
		$$\Pr[S = T_1] = 0.0\overline{6} > 0.04 \geq \Pr[S' \supseteq T_1],$$
		so the probability that $T_1$ controls 4 seats has decreased. Thus, $a$ violates pairwise vote-count threshold monotonicity. Since $v'_i \leq v_i$ for all $i \notin T_1$, this also shows that $a$ violates vote-count threshold monotonicity for $T_1$.
	\end{proof}
	
	Note that the proof requires $n = 4$ states. This is tight since for $n = 3$ there is only one proportional rounding rule to use, and it is easy to see that it satisfies all monotonicity properties.
	
	\section{Conclusion}

	From our results, a landscape of monotonicity notions in randomized apportionment and rounding is taking shape.
	On one side lie the (pairwise) selection monotonicity axioms, which can be attained by Sampford sampling.
	On the other side, we have the vote-count variants and pairwise threshold monotonicity, overly strong axioms that cannot be satisfied (at least, for the latter, by natural apportionment methods).
	The next step in the exploration of monotonicity will focus on the axiom in their middle: threshold monotonicity, which we conjecture to be feasible but could not yet show for general-size coalitions.
	
	One motivation to pursue threshold monotonicity is that it would imply a general notion of strategy-proofness for randomized apportionment.
	Consider a voter who approves a subset of the parties, in the sense that they want as many legislative seats as possible to be filled from this subset.\footnote{Such a utility is natural in a setting of proportional representation. For example, the approved parties could be those supporting a new policy the voter hopes to see adopted.}
	Suppose that the voter maximizes an expected utility that is a monotone function in the number of seats\,---\,a very general assumption, which can in particular capture the value of certain seat thresholds, as well as risk-averse or risk-seeking preferences. 
	A simple consequence of threshold monotonicity is that this voter can never profit from strategically voting for a disapproved party, and this even holds for coalitions of such voters.
	
	Beyond the setting of apportionment, we are broadly intrigued by the prospect of relational axioms~\cite{Thomson11} on higher-order correlations, i.e., axioms that capture how the higher-order correlations of a random process vary as a function of changes in the process's parameters.
	Going beyond rounding subject to cardinality constraints, where else can we obtain such guarantees, and where in algorithms, mechanism design, and statistics could they prove useful?
	
	\section{Acknowledgments}
	
	We are very grateful to Ariel Procaccia and Dominik Peters for many productive discussions and useful ideas.
	
	This work was partially supported by the National Science Foundation under Grant No.\txt{} DMS-1928930 and by the Alfred P.\txt{} Sloan Foundation under grant G-2021-16778, while three of the authors were in residence at the Simons Laufer Mathematical Sciences Institute (formerly MSRI) in Berkeley, California, during the Fall 2023 semester; and also by the National Science Foundation Graduate Research Fellowship Program under Grant No. DGE1745303. Any opinions, findings, and conclusions or recommendations expressed in this material are those of the authors and do not necessarily reflect the views of the National Science Foundation.
	
	This work was also partially supported by Anillo ICMD (ANID-Chile grant ACT210005), the Center for Mathematical
	Modeling (ANID-Chile grant FB210005), and FONDECYT ANID-Chile grant 1241846.
	
	\bibliographystyle{ACM-Reference-Format}
	\bibliography{bibliography,other}
	
	\clearpage
	\appendix
	\section*{\LARGE Appendix}
	\section{Relationship With Prior Work}\label{app:housewins}
	\begin{observation}
		Neither Grimmett's method, the pipage method, the conditional Poisson method, nor the Sampford method, nor any apportionment method satisfying full support satisfies the house monotonicity axiom of \citet{GPP22}.
	\end{observation}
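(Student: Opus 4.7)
The plan is to exhibit a single vote vector $\vec v$ and two consecutive house sizes $h, h+1$ that serve as a universal counterexample for all of the methods under consideration. My first task would be to translate GPP22's axiom into a combinatorial coupling condition: a probability distribution over deterministic house-monotone methods induces, for any fixed $\vec v$, a joint distribution of the random apportionments $\app(\vec v, 1), \app(\vec v, 2), \ldots$ in which every realization is componentwise nondecreasing in $h$, and conversely, collecting such monotone couplings independently across $\vec v$'s reassembles a distribution over deterministic methods (each house-monotone for every input). Hence GPP22's axiom is equivalent to the existence, for every $\vec v$ and every $h$, of a componentwise-monotone coupling between $\app(\vec v, h)$ and $\app(\vec v, h+1)$. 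In particular, such a coupling fails to exist as soon as some quota-respecting apportionment $\vec\alpha$ at $h$ has positive probability under $\app$ but no quota-respecting $\vec\beta \geq \vec\alpha$ exists at $h+1$.

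Next I would construct such an obstruction with $\vec v = (45, 5, 45, 5)$ (total $V = 100$). At $h = 2$ the standard quotas are $(0.9, 0.1, 0.9, 0.1)$ with all lower quotas zero, so $\vec\alpha \coloneqq (0, 1, 0, 1)$ is a valid quota-respecting apportionment. At $h = 3$ the quotas become $(1.35, 0.15, 1.35, 0.15)$ with lower quotas $(1, 0, 1, 0)$. Any quota-respecting $\vec\beta \geq \vec\alpha$ at $h = 3$ would need $\beta_2, \beta_4 \geq 1$, forcing $\beta_2 = \beta_4 = 1$ (since quota restricts them to $\{0,1\}$); then $\beta_1 + \beta_3 = 1$, contradicting $\beta_1, \beta_3 \geq 1$. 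Thus $\vec\alpha$ has no dominator at $h = 3$.

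By definition, any full-support method assigns strictly positive probability to $\vec\alpha$ at $h = 2$, so the coupling test fails, ruling out in particular the conditional Poisson and Sampford methods (and any other full-support method). For Grimmett's method and pipage rounding, which need not have full support in general, a short direct calculation on the residues $(0.9, 0.1, 0.9, 0.1)$ settles the matter. Tracking Grimmett's random shift with the identity party ordering, the subset $\{2, 4\}$ is selected precisely for $u \in (0, 0.1]$, giving $\Pr[\vec\alpha] = 0.1$; if the ordering is randomized, it suffices that a single ordering contributes positive mass, which the identity ordering does. For pipage rounding, unrolling its branching rule yields a four-leaf tree in which the leaf corresponding to $\vec\alpha$ has probability $0.1 \cdot 0.1 = 0.01$. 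So both methods also place positive probability on the orphaned outcome $\vec\alpha$, completing the argument.

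The main conceptual step is the equivalence between GPP22's decomposition-based axiom and the per-$\vec v$ coupling reformulation; once that bridge is in place, the rest is a hand-computation that locates a small instance in which quota ``traps'' a positive-probability outcome with no monotone continuation.
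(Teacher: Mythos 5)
Your proof is correct and matches the paper's approach: exhibit a quota-respecting apportionment at house size $h$ that has no quota-respecting dominator at $h+1$, observe that GPP22's decomposition axiom forces such an outcome to have zero probability, and verify directly that each method assigns it positive mass (the paper uses the example $v=(1,2,1,2)$, $h=2$ from Section~4.1 of GPP22 rather than constructing a fresh instance, but the obstruction is the same in structure). One small caution: you assert GPP22's axiom is \emph{equivalent} to the existence of componentwise-monotone couplings between consecutive house sizes, but pairwise monotone couplings on a poset need not chain into a single global coupling, so that equivalence is an overstatement---fortunately your argument uses only the trivially-true forward (necessary-condition) direction, so the proof stands.
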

	\begin{proof}
		As \citet{GPP22} show in Section~4.1 of their paper that, in a scenario with four parties, vote vector $v_1 = 1, v_2 = 2, v_3 = 1, v_4 = 2$, and house size $2$, no method satisfying quota and house monotonicity may give both seats to parties~1 and 3 with positive probability.
		By definition, any method satisfying full support would do so, and thus cannot satisfy house monotonicity, which rules out the conditional Poisson and the Sampford method.
		\citet{GPP22} show in the same section that Grimmett's method (with or without a randomized order) fails this test and thus fails house monotonicity.
		Finally, one easily verifies that the pipage method, when presented with the parties in the above worst-case order, also has a positive probability of awarding both seats to parties~1 and 3. When the order of parties is chosen uniformly at random, the order will with nonzero probability be the one above, showing again that the method fails house monotonicity.
	\end{proof}
	
	\section{Selection Monotonicity Counterexamples for Specific Rules}\label{appCounterexamples}

	Here we prove Propositions \ref{proPipageCounter} and \ref{proPoissonCounter} by way of counterexamples, showing that none of the rounding rules mentioned in this paper satisfy selection monotonicity except for Sampford sampling. We then compliment this result with Proposition \ref{proPairwiseCounter}, which states that these rules also all violate pairwise selection monotonicity.
	
	\proPipageCounter*
	\begin{proof}
		
		We start by showing the statement for pipage rounding. For the sake of illustration, we start with a counterexample for a fixed order corresponding to the order of the indices. Let $$\Vec{p} = \Big(\frac{1}{3},\frac{1}{2},\frac{1}{3},\frac{2}{3},\frac{2}{3},\frac{1}{2}\Big)$$ and $T = \{1,3,6\}.$ We claim that $$\Pr[S = T]>0.$$ To see this, consider the following update sequence which happens with positive probability by the definition of pipage rounding:

		\begin{table}[h]
			\centering
			\renewcommand{\arraystretch}{1.3}
			\begin{tabular}{c|cccccc}
				Step & \( p_1 \) & \( p_2 \) & \( p_3 \) & \( p_4 \) & \( p_5 \) & \( p_6 \) \\
				\hline
				1& \( \frac{1}{3} \) & \( \frac{1}{2} \) & \( \frac{1}{3} \) & \( \frac{2}{3} \) & \( \frac{2}{3} \) & \( \frac{1}{2} \) \\
				2& \( \frac{5}{6} \) & 0 & \( \frac{1}{3} \) & \( \frac{2}{3} \) & \( \frac{2}{3} \) & \( \frac{1}{2} \) \\
				3 &\( \frac{1}{6} \) & 0 & 1 & \( \frac{2}{3} \) & \( \frac{2}{3} \) & \( \frac{1}{2} \) \\
				4& \( \frac{5}{6} \) & 0 & 1 & 0 & \( \frac{2}{3} \) & \( \frac{1}{2} \) \\
				5& 1 & 0 & 1 & 0 & \( \frac{1}{2} \) & \( \frac{1}{2} \) \\
				6 & 1 & 0 & 1 & 0 & 0 & 1 \\
				
			\end{tabular}
		\end{table}
		Now, consider the new residue vector $$\Vec{p}' = \Big(\frac{1}{3},\frac{1}{3},\frac{1}{3},\frac{2}{3},\frac{2}{3},\frac{2}{3}\Big).$$
		We claim that for this profile $\Pr[\{1,3\} \subseteq S] = 0$. To see this, first consider the case that $p_1$ is set to $\frac{2}{3}$ in the first step. Then, in the second step, either $p_1$ is fixed to $1$ and $p_3$ to $0$ or vice versa. Otherwise, $p_1$ is set to $0$ already in the first step. Thus $\Pr[S = T] = 0$.    
		
		We also give the following counterexample that works for the version of pipage rounding in which we randomize over all possible orders. Let $T = \{2,3,4\}$ and $$\Vec{p} = \Big(0.07, 0.9, 0.57, 0.37, 0.99, 0.1\Big).$$ Computing the probabilities by a straight-forward brute-force implementation of the method yields $$\Pr[S = T] \geq 0.0043.$$  However, decreasing the residue for party 1 and increasing it for party 2 as follows
		$$\Vec{p}' = \Big( 0.06, 0.91, 0.57, 0.37, 0.99, 0.1\Big)$$
		yields $$\Pr[S = T] \leq 0.0042,$$ 
		showing that also the random order version of pipage rounding violates selection monotonicity.
		
		\bigskip 
		For systematic rounding with fixed order, the counterexample for Grimmett's apportionment method, given in \Cref{secIntro} directly provides a counterexample when using the residues given in \Cref{tbl:apportia} as input for systematic rounding. For random order, a straightforward brute-force check shows that in the first election there exists no order for which the left-wing parties get all three additional seats with positive probability. 
	\end{proof}
	
	\proPoissonCounter*
	\begin{proof}
		The following counterexample was discovered and verified with considerable computational effort, using symbolic and rational arithmetic to ensure there were no numerical errors. It involves such orders of magnitude that it is not detectable using standard Python packages with floating-point arithmetic. Consider the following pair of scaled working probabilities for $n = 6$ and $k = 3$:
		\begin{align*}
			\vec{\pi} &\coloneqq (99620001435175085845613951348591, 33206667145059699577734936400435, \push 33206667145059699577734936400435, 23244667001544291253373835102276586, \push 23244667001544291253373835102276586, 1660333357252963458777541885429371)\\
			\vec{\pi}' &\coloneqq (99620001435175193801835755646020, 33206667145059681577227243883092, \push 33206667145059681577227243883092, 23244667001544299141767505142336500, \push 23244667001544299141767505142336500, 1660333357252962147206216649823732)
		\end{align*}
		Note that, to get the true working probabilities, these must be scaled down by the following respective normalization constants, which do not fit on one line:
		\begin{align*}
			N = \txt{}& 999999999999999999999999999999997683569039326925979\\&611190826394109716067937415025227822500962454159264^{1/3}\\
			N' = \txt{}& 999999999999999999999999999999999031428737331175393\\&563500944194957519703813036589912171397358763602688^{1/3}
		\end{align*}
		One may verify that these correspond to a pair of residue vectors $\vec{p}$ and $\vec{p}'$ such that $p_1 \leq p'_1$, $p_2 \leq p'_2$, $p_3 \leq p'_3$, $p_4 \geq p'_4$, $p_5 \geq p'_5$, $p_6 \geq p'_6$, and yet
		$$\frac{\pi_1 \pi_2 \pi_3}{N} \geq \frac{\pi'_1 \pi'_2 \pi'_3}{N'},$$
		i.e., the set $\{1, 2, 3\}$ is more likely under $\vec{p}$ than $\vec{p}'$.
	\end{proof}

	\begin{proposition}\label{proPairwiseCounter}
		Systematic rounding, pipage rounding, and conditional Poisson rounding violate pairwise selection monotonicity. 
	\end{proposition}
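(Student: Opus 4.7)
The plan is to prove the proposition by exhibiting, for each of the three rules, a pair of residue vectors $\vec p,\vec p'$ and two $k$-subsets $T_1,T_2$ of $[n]$ that together witness a simultaneous failure of both inequalities in the definition of pairwise selection monotonicity, i.e., $\Pr[r(\vec p')=T_1]<\Pr[r(\vec p)=T_1]$ while $\Pr[r(\vec p')=T_2]>\Pr[r(\vec p)=T_2]$. A natural starting point is each rule's already-established violation of (non-pairwise) selection monotonicity from \Cref{proPipageCounter,proPoissonCounter}: each such example already supplies a candidate $T_1$ together with a pair $(\vec p,\vec p')$ on which $\Pr[T_1]$ strictly drops, so the remaining task is to produce an auxiliary set $T_2$, lying entirely within the parties whose residues weakly decreased, whose joint selection probability strictly increased between $\vec p$ and $\vec p'$.

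A structurally convenient setup is to consider profiles with $n=2k$ parties and to let $T_1,T_2$ partition $[n]$. Under the sum-to-$k$ constraint, the hypothesis $p'_i\geq p_i$ on $T_1$ is then equivalent to $p'_i\leq p_i$ on $T_2$, so any pair of profiles with residue mass flowing from one block to the other is admissible, and the only remaining requirement is that $\Pr[T_1]$ strictly decrease and $\Pr[T_2]$ strictly increase simultaneously. For systematic rounding and pipage rounding I would search for such a violation over small profiles (say $n\leq 8$ with small denominators), verifying each candidate either by partitioning $[0,1)$ according to systematic rounding's random shift, or by expanding the finite branching tree of pipage rounding. Because these two rules are already known to fail (non-pairwise) selection monotonicity on examples of roughly this size, I expect a moderate elaboration of those examples — namely, spreading the residue shifts over enough coordinates to accommodate a disjoint $T_2$ of the right size — to suffice.

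The conditional Poisson case is the main obstacle, for the same reasons the corresponding selection-monotonicity counterexample (\Cref{proPoissonCounter}) was so delicate: the probability mass function depends on implicitly defined working probabilities, the normalization is a sum over $\binom{n}{k}$ terms, and a genuine violation can easily be wiped out by numerical error. Following the template of \Cref{proPoissonCounter}, I would conduct a computer search in exact rational arithmetic over pairs of working-probability vectors, looking for perturbations under which two disjoint blocks' sums-of-products move in opposite ``wrong'' directions after renormalization; verification will likely require symbolic arithmetic at extreme orders of magnitude, analogous to the scaled integers appearing in \Cref{proPoissonCounter}. For systematic and pipage rounding, by contrast, I expect the construction to amount to a small strengthening of the given examples, with the only real bookkeeping being the simultaneous verification of both inequalities.
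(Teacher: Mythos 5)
Your overall strategy — exhibit an explicit $\vec p,\vec p', T_1, T_2$ with $\Pr[T_1]$ strictly dropping and $\Pr[T_2]$ strictly rising — is of course correct, and your prediction that conditional Poisson will require exact-arithmetic computation matches what the paper does. But your structural plan points in the wrong direction and would make the search needlessly hard, or even miss the cleanest witnesses.

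The suggestion to work with $n=2k$ and let $T_1, T_2$ partition $[n]$ is the opposite of what you want. The paper explicitly remarks, right after the definition, that pairwise selection monotonicity is not implied by selection monotonicity \emph{precisely because the residues of parties outside $T_1 \cup T_2$ may vary arbitrarily} — and that extra degree of freedom is the engine of all three counterexamples. In the paper's pipage and systematic examples, the residues on $T_1$ and on $T_2$ are \emph{held fixed} and only the residues of the other parties are perturbed; the precondition is then trivially satisfied, and one only needs the outside perturbation to shift probability mass away from $T_1$ and toward $T_2$. (Concretely: for systematic rounding with $n=6$, $k=2$, $\vec p=(0.1,0.1,0.2,0.2,0.5,0.9)$, $T_1=\{1,2\}$, $T_2=\{3,4\}$, and only parties 5, 6 change; for pipage similarly only parties 3, 4 change; for conditional Poisson, $T_1=\{1,2,3\}$ and $T_2=\{3,4,5\}$ overlap, and only $p_1$ and $p_6$ move.) In a partition setting there are no outside parties, so a violation of pairwise monotonicity would require simultaneous violations of selection monotonicity for both $T_1$ and (in reverse) $T_2$, which is a substantially stronger requirement and not what these examples produce. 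Your other plan — bolting a $T_2$ onto the existing selection-monotonicity counterexamples — also does not match the paper (the pairwise counterexamples are fresh constructions), and there is no reason in advance to expect those particular profiles to happen to exhibit a suitable rising $T_2$. The fix is simple: search for perturbations that leave the coalitions' own residues untouched (or, as in the conditional Poisson case, allow overlap between $T_1$ and $T_2$), and let one or two outside residues carry all the change.
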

	
	\begin{proof}
		For pipage rounding (with random order), consider the following vector of residues $$\Vec{p} = (0.03, 0.59,0.07,0.42,0.47,0.42),$$ summing up to $k=2$. We consider the two coalitions $T_1 = \{1,2\}$ and $T_2=\{5,6\}$. A straight-forward implementation of pipage rounding (with randomized order) yields the following selection probabilities for the two coalitions: 
		$$\Pr[S = T_1] \geq 0.011 \text{ and } \Pr[S = T_2] \leq 0.12.$$
		Now, consider the new vector of residues 
		$$\Vec{p}' = (0.03, 0.59,0.16,0.33,0.47,0.42).$$ Note that, while residues for $T_1$ and $T_2$ remained unchanged, the residues of parties $3$ and $4$ did change. This leads to coalition $T_1$ losing selection probability and coalition $T_2$ gaining selection probability. Precisely, $$\Pr[S = T_1] \leq 0.01 \text{ and } \Pr[S = T_2] \geq 0.1204,$$ which proves the violation of pairwise selection monotonicity. 
		
		\bigskip
		
		For systematic rounding, we let $n = 6$ and $k = 2$, and suppose $\vec{p} = (0.1, 0.1, 0.2, 0.2, 0.5, 0.9)$. Let $T_1$ be the set of parties with residue $0.1$ and $T_2$ be the set of parties with residue $0.2$. Then one can check that, if we increase $0.5 \to 0.6$ and decrease $0.9 \to 0.8$ (without changing the total population or state quotas), the probability of selecting $T_1$ decreases (to zero) while the probability of selecting $T_2$ increases, which violates pairwise selection monotonicity.
		
		\bigskip
		
		Finally, for conditional Poisson rounding, we must resort to computational verification as in the proof of Proposition \ref{proPoissonCounter}. We unfortunately do not have a closed form for the working probabilities, but the pair of residue profiles is
		\begin{align*}
			\vec{p} \coloneqq (0.0618342562928861,\ & 0.0207176796116814, \\ 0.0207176796116814,\ & 0.9933997806603289, \\ 0.9933997806603289,\ & 0.9099308231630933)\\
			\vec{p}' \coloneqq (0.0618342562928862,\ & 0.0207176796116814, \\ 0.0207176796116814,\ & 0.9933997806603289, \\ 0.9933997806603289,\ & 0.9099308231630932)
		\end{align*}
		Observe that the only residues that changed in going from $\vec{p}$ to $\vec{p}'$ were $p_1$, which increased, and $p_6$, which decreased. The unique working probabilities are algebraic numbers given by complicated expressions. Using Mathematica, we computed that the set $T_1 \coloneqq \{1, 2, 3\}$ (which contains only residues that weakly increased) decreased in probability by roughly $5 \times 10^{-26}$ while $T_2 \coloneqq \{3, 4, 5\}$ (which contains only residues that weakly decreased, because they all stayed the same) increased in probability by roughly $2 \times 10^{-18}$. Thus, pairwise selection monotonicity is violated.
	\end{proof}

	\section{Sufficient Condition for Selection Monotonicity} \label{app:Sufficient}
	We call a rounding rule neutral, if reordering of the input vector $\Vec{p}$ does not change the selection probabilities. Clearly, Sampford rounding is neutral. Under neutrality, we can assume without loss of generality that the set of parties of interest in the definition of selection monotonicity is $T=[k]$. 
	
	\begin{lemma}
		Let $r$ be a neutral rounding rule. If, for any $\Vec{p} \in [0,1)^{n}$ with $\sum_{1 \leq i \leq n} p_i = k$ for some integer $1 \leq k \leq n-1$ and $p_1 > 0$ it holds that 
		\[\left(\pdif{p_1} - \pdif{p_n}\right) \, \mathbb{P}_{S \sim r(\Vec{p})}[S = [k]] \geq 0,\]
		then $r$ satisfies selection monotonicity. 
	\end{lemma}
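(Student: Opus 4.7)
The plan is to reduce the claimed monotonicity implication to the one-dimensional derivative inequality via a piecewise path from $\vec{p}$ to $\vec{p}'$ that modifies only two coordinates at a time, invoking neutrality of $r$ to position these coordinates at indices $1$ and $n$ for each leg of the path. First, by neutrality I assume without loss of generality that $T = [k]$. Since both vectors sum to $k$, and since $p_i' \geq p_i$ on $[k]$ and $p_i' \leq p_i$ on $[n]\setminus[k]$, the total surplus $\sum_{i \in [k]}(p_i' - p_i)$ equals the total deficit $\sum_{i \notin [k]}(p_i - p_i')$. This lets me construct a transfer schedule exactly as in the proof of \Cref{thmContinuity}: repeatedly pick $i \in [k]$ with $p_i < p_i'$ and $j \notin [k]$ with $p_j > p_j'$, transfer $\delta_\ell \coloneqq \min(p_i' - p_i,\, p_j - p_j')$ of mass from $p_j$ to $p_i$, and iterate. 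This produces a finite chain $\vec{p} = \vec{q}^{(0)}, \vec{q}^{(1)}, \dots, \vec{q}^{(L)} = \vec{p}'$, where each successive pair differs in only two coordinates (one inside $[k]$ increased, one outside $[k]$ decreased), and where each $\vec{q}^{(\ell)}$ still sums to $k$ and lies in $[0,1)^n$.

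Next, I show that the probability that $r$ selects $[k]$ is weakly monotone along each step of the chain. Fix a step from $\vec{q}^{(\ell)}$ to $\vec{q}^{(\ell+1)}$ involving coordinates $i \in [k]$ and $j \notin [k]$. Let $\sigma$ be a permutation sending $i \mapsto 1$, $j \mapsto n$, and sending $[k]\setminus\{i\}$ bijectively onto $\{2,\dots,k\}$ (so in particular $\sigma^{-1}([k]) = [k]$). Writing $\vec{q}_\sigma$ for the residue vector with coordinates permuted by $\sigma$, neutrality gives $\mathbb{P}[r(\vec{q}^{(\ell)}) = [k]] = \mathbb{P}[r(\vec{q}^{(\ell)}_\sigma) = [k]]$, and likewise for $\ell+1$. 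Moreover the permuted endpoints $\vec{q}^{(\ell)}_\sigma$ and $\vec{q}^{(\ell+1)}_\sigma$ differ only at indices $1$ and $n$: parameterizing the straight line between them as $\vec{p}(t) \coloneqq \vec{q}^{(\ell)}_\sigma + t\,(e_1 - e_n)$ for $t \in [0, \delta_\ell]$, only the first and last coordinates change.

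The final step is then to apply the fundamental theorem of calculus to the function $\phi(t) \coloneqq \mathbb{P}[r(\vec{p}(t)) = [k]]$. For every $t \in (0, \delta_\ell]$, the first coordinate $p_1(t) = p_i + t$ is strictly positive, so the hypothesis of the lemma yields
\[
    \phi'(t) = \left(\pdif{p_1} - \pdif{p_n}\right) \mathbb{P}[r(\vec{p}(t)) = [k]] \geq 0.
\]
Thus $\phi$ is nondecreasing on $(0, \delta_\ell]$, and by continuity at $t=0$ (as the probability mass function under consideration depends continuously on the residues), nondecreasing on the whole interval $[0, \delta_\ell]$. This gives $\mathbb{P}[r(\vec{q}^{(\ell+1)}) = [k]] \geq \mathbb{P}[r(\vec{q}^{(\ell)}) = [k]]$, and chaining over $\ell = 0, \dots, L-1$ yields $\mathbb{P}[r(\vec{p}') = [k]] \geq \mathbb{P}[r(\vec{p}) = [k]]$, as required.

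The only delicate point I anticipate is the boundary case where the growing coordinate starts from $p_i = 0$, since then the hypothesis formally fails at $t=0$; but as noted above this is handled by continuity, since the derivative bound is available on all of $(0, \delta_\ell]$. The argument is otherwise routine, its entire substance being the translation of a pointwise derivative inequality into integrated monotonicity along a carefully chosen path that respects the symmetry used in the hypothesis.
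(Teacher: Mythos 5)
Your proof is correct and follows essentially the same route as the paper's: reduce to a sequence of two-coordinate transfers between $[k]$ and $[n]\setminus[k]$, relabel via neutrality so the changing coordinates sit at positions $1$ and $n$, and integrate the directional-derivative inequality along each leg. The paper organizes this as an induction on the number of differing coordinates rather than an explicit chain, but the substance is identical; you are in fact a bit more explicit about the relabeling permutation and about the boundary case where the growing coordinate starts at $0$, which the paper's proof glosses over.
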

	
	\begin{proof}
		Let $\Vec{p}, \Vec{p}' \in [0,1)^n$ be two vectors summing up to the same integer $k$. Let $T$ be a set of $k$ parties such that $p'_i \geq p_i$ for all parties $i \in T$ and $p'_i\leq p_i$ for all $i \not\in T$. By neutrality we can reorder $\Vec{p}$ and $\Vec{p}'$ such that $T = [k]$ and do not change the selection probabilities of $T$. To prove that $r$ satisfies selection monotonicity, we aim to show that $$\Pr_{S \sim r(\Vec{p}')}[S = [k]] \geq \Pr_{S \sim r(\Vec{p})}[S = [k]].$$
		We prove the statement by induction over the number of indices in $[n]$ for which $p_i \neq p'_i$ holds. Since $\Vec{p}$ and $\Vec{p}'$ are normalized, they differ in at least two indices (unless they are equal). Thus, in our base case, we assume without loss of generality that $p'_1 = p_1 + \epsilon$ and $p'_n = p_n - \epsilon$ for some $\epsilon \in (0,1)$, where the specific form again follows from normalization. We define the following function, which linearly interpolates between $\Vec{p}$ and $\Vec{p}'$. That is, for all $\lambda \in [0,1]$ we define $$\Vec{f}(\lambda) = \Vec{p} + \lambda (\Vec{p}' - \Vec{p}).$$ Note that $f_1(\lambda) = p_1 + \lambda\epsilon $ and $f_n(\lambda) = p_n - \lambda \epsilon$ and $f_i(\lambda) = p_i = p'_i$ for all $i \in \{2, \dots, n-1\}$. We get that 
		\begin{align}
			\frac{\partial}{\partial \lambda} \Pr_{S \sim r(\Vec{f}(\lambda))}[S = [k]] &= \frac{\partial}{\partial f_1}  \Pr_{S \sim r(\Vec{f})}[S = [k]]\frac{\partial f_1}{\partial \lambda} + \frac{\partial}{\partial f_n} \Pr_{S \sim r(\Vec{f})}[S = [k]]\frac{\partial f_n}{\partial \lambda} \tag{chain rule}\\ 
			& = \epsilon  \cdot \left( \frac{\partial}{\partial f_1}\Pr_{S \sim r(\Vec{f})}[S = [k]] -  \frac{\partial}{\partial f_n}\cdot \Pr_{S \sim r(\Vec{f})}[S = [k]] \right) \notag \\ 
			& \geq 0 \tag{precondition of lemma}
		\end{align}
		Thus, by the fundamental theorem of calculus and the fact that $\Vec{f}(0) = \Vec{p}$ and $\Vec{f}(1) = \Vec{p}'$, we get that 
		$$\Pr_{S \sim r(\Vec{p}')}[S = [k]] - \Pr_{S \sim r(\Vec{p})}[S = [k]] = \int_{0}^1 \frac{\partial}{\partial \lambda} \Pr_{S \sim r(\Vec{f}(\lambda))}[S = [k]] d\lambda \geq 0.$$
		For the induction step, consider the case when $\Vec{p}$ and $\Vec{p}'$ differ in more than two indices. Let $i$ be the index minimizing $|p'_i - p_i|$ among all $i \in [n]$ with $p'_i \neq p_i$ and assume without loss of generality that $p'_i > p_i$. Then, let $j \in [n]$ be such that $p_j < p'_j$. Let $\Vec{w}$ be the vector with $w_\ell = p_\ell$ for all $\ell \in [n] \setminus \{i,j\}$, $w_{i} = p'_i$ and $w_{j} = p_j - (p'_i - p_i)$. Note that $\Vec{w} \in [0,1)^n$, $\Vec{w}$ sums to $k$ and $\Vec{p}$ and $\Vec{w}$ differ in two indices, namely $w_i >p_i$ and $w_j < p_j$, where $i \in [k]$ and $j \in [n] \setminus [k]$ holds. Hence, using the same argument as in the base case, we can show that $$\Pr_{S \sim r(\Vec{w})}[S = [k]] \geq \Pr_{S \sim r(\Vec{p})}[S = [k]].$$
		Moreover, note that $\Vec{w}$ and $\Vec{p}'$ differ in one less index than $\Vec{p}$ and $\Vec{p}'$, and it also holds that $p'_\ell \geq w_{\ell}$ for all $\ell \in [k]$ and $p'_{\ell} \leq w_{\ell}$ for all $\ell \in [n] \setminus [k]$. Thus, by induction hypothesis it follows that $$\Pr_{S \sim r(\Vec{p}')}[S = [k]] \geq \Pr_{S \sim r(\Vec{w})}[S = [k]] \geq \Pr_{S \sim r(\Vec{p})}[S = [k]],$$
		which concludes the proof. 
	\end{proof}
	
	\section{Deferred Proofs}
	\label{app:deferred}
	
	\subsection{Formula from \Cref{sec:rounding}} \label{app:deferred:rounding}

	\lemxpdif*
	\begin{proof}
		We take the derivative for each $\ell \in [k-1]$ individually. That is, 
		\begin{align*}
			&\left(\pdif{p_1} - \pdif{p_n}\right) \, \expbone \\
			={} &\sum_{0 \leq \ell \leq k-1} (k - \ell) \cdot \left(\pdif{p_1} - 
			\pdif{p_n}\right) \, \mathbb{P}\big[|B| = \ell\big] \\
			={} &\sum_{0 \leq \ell \leq k-1} (k - \ell) \cdot \left(\pdif{p_1} - \pdif{p_n}\right) \, \bigg((1-p_1)\,(1-p_n)\,\mathbb{P}\big[|\hat{B}| = \ell\big] \\
			&\hphantom{\sum_{0 \leq \ell \leq k-1} (k - \ell) \cdot \left(\pdif{p_1} - \pdif{p_n}\right) \, \bigg(}+ \big(p_1\,(1-p_n) + (1-p_1)\,p_n\big)\,\mathbb{P}\big[|\hat{B}| = \ell-1\big] \\
			&\hphantom{\sum_{0 \leq \ell \leq k-1} (k - \ell) \cdot \left(\pdif{p_1} - \pdif{p_n}\right) \, \bigg(}+ p_1\,p_n\,\mathbb{P}\big[|\hat{B}| = \ell-2\big]\bigg) \\
			={} &\sum_{0 \leq \ell \leq k-1} (k - \ell) \, \big((p_n - p_1)\,\mathbb{P}\big[|\hat{B}| = \ell\big] + 2\,(p_1 - p_n)\,\mathbb{P}\big[|\hat{B}| = \ell-1\big] + (p_n - p_1)\,\mathbb{P}\big[|\hat{B}| = \ell-2\big]\big) \\
			={} & (p_n - p_1) \, \bigg( \sum_{0 \leq \ell \leq k-1} (k - \ell) \, \big(\mathbb{P}\big[|\hat{B}| = \ell\big]- \mathbb{P}\big[|\hat{B}| = \ell-1\big]\big) \\
			& \hphantom{(p_n - p_1) \, \bigg(} - \sum_{0 \leq \ell \leq k-1} (k - \ell) \,\big(\mathbb{P}\big[|\hat{B}| = \ell - 1\big]- \mathbb{P}\big[|\hat{B}| = \ell-2\big]\big)\bigg) \\
			={}& (p_n - p_1) \, \left( \sum_{0 \leq \ell \leq k-1} \mathbb{P}\big[|\hat{B}| = \ell\big] - \sum_{0 \leq \ell \leq k-2} \mathbb{P}\big[|\hat{B}| = \ell\big]\right)\tag{telescoping sums, $\mathbb{P}\big[|\hat{B}| < 0\big]=0$} \\
			={}& (p_n - p_1) \, \mathbb{P}\big[|\hat{B}| = k-1\big],  
		\end{align*}
		which concludes the proof. \qedhere
	\end{proof}

	\subsection{Auxiliary Lemma for the Proof of \Cref{proGrimmett2}} \label{app:GrimmettTech}
	
	\begin{lemma}\label{lemGrimmett2Tech}
		Let $a,b,c,d,e \in \rr$ be such that $a, c, e \leq 0$, $b, d \geq 0$, and $a + b + c + d + e = 0$. Then there exists $u_0 \in \rr$ such that the following all hold:
		\begin{enumerate}[leftmargin=.9cm]
			\item\label{itmGrimmet2Tech1} $u_0 + a \leq 0$
			\item\label{itmGrimmet2Tech2} $u_0 + a + b \geq 0$
			\item\label{itmGrimmet2Tech3} $u_0 + a + b + c \leq 0$
			\item\label{itmGrimmet2Tech4} $u_0 + a + b + c + d \geq 0$
		\end{enumerate}
	\end{lemma}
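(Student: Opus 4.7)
The plan is to translate the four sign conditions into a two-sided bound on $u_0$ and then show that the resulting interval is nonempty using the hypotheses. Specifically, each of the four inequalities is affine in $u_0$, so they rewrite as
\[ \max\{-a-b,\ -(a+b+c+d)\} \ \leq \ u_0 \ \leq \ \min\{-a,\ -(a+b+c)\}. \]
Thus it suffices to exhibit any real number in this interval, and the crux is to verify that the interval is nonempty, i.e., that every one of the two candidate lower bounds is $\leq$ every one of the two candidate upper bounds.

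There are four cross-comparisons, three of which are immediate from the hypotheses: $-a-b \leq -a$ reduces to $b \geq 0$; $-a-b \leq -(a+b+c)$ reduces to $c \leq 0$; and $-(a+b+c+d) \leq -(a+b+c)$ reduces to $d \geq 0$. The only nontrivial comparison is $-(a+b+c+d) \leq -a$, equivalent to $b+c+d \geq 0$. This is the step where I would use the normalization hypothesis $a+b+c+d+e = 0$: it gives $b+c+d = -a-e$, which is nonnegative since both $a \leq 0$ and $e \leq 0$.

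With all four comparisons verified, the feasibility interval is nonempty, so any element of it can be taken as $u_0$. A concrete explicit choice that makes the verification cleanest is $u_0 := \max\{-a-b,\ -(a+b+c+d)\}$, for which inequalities (2) and (4) hold by construction and (1), (3) then follow from the cross-comparisons above. I do not anticipate any real obstacle here; the entire argument is a short sign check, and the only mildly substantive ingredient is the use of the normalization identity together with the sign conditions on $a$ and $e$ to handle the single nontrivial comparison.
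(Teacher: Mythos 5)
Your proof is correct. You rewrite the four affine constraints as a two-sided bound $\max\{-a-b,\,-(a+b+c+d)\} \le u_0 \le \min\{-a,\,-(a+b+c)\}$ and verify the four cross-comparisons, of which only $-(a+b+c+d) \le -a$ requires the normalization $a+b+c+d+e=0$ together with $a,e \le 0$. The paper instead exhibits an explicit $u_0$ by a two-case construction: if $b \le -c$ it takes $u_0 = -a$, and otherwise $u_0 = -a-b-c$, checking the four inequalities directly in each case; the same key identity $b+c+d = -a-e \ge 0$ appears there in case 1 to verify inequality (4). Your feasibility-interval formulation avoids the case split and makes the role of each hypothesis transparent, whereas the paper's construction has the minor advantage of naming a concrete $u_0$ in closed form without a $\max$. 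Both are equally short; yours is arguably the cleaner organization of the same elementary sign check.
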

	
	\begin{proof}
		There are two cases to consider. First suppose $b \leq -c$. Then we let $u_0 \coloneqq -a$, which satisfies all four inequalities:
		\begin{enumerate}[leftmargin=.9cm]
			\item $u_0 + a = 0$
			\item $u_0 + a + b = b \geq 0$
			\item $u_0 + a + b + c = b + c \leq -c + c = 0$
			\item $u_0 + a + b + c + d = b + c + d = -a - e \geq 0$
		\end{enumerate}
		On the other hand, if $b \geq -c$, we let $u_0 \coloneqq -a - b - c$, which then also satisfies all four inequalities:
		\begin{enumerate}[leftmargin=.9cm]
			\item $u_0 + a = - b - c \leq - b + b = 0$
			\item $u_0 + a + b = -c \geq 0$
			\item $u_0 + a + b + c = 0$
			\item $u_0 + a + b + c + d = d \geq 0$ 
		\end{enumerate}
		This concludes the proof. 
	\end{proof}
	
	By letting $a \coloneqq \abs{\abs{A'} - \abs{A}}$, $b \coloneqq \abs{\abs{B'} - \abs{B}}$, and so on, this lemma implies that $B \subseteq B'$ and $D \subseteq D'$ in the proof of Theorem \ref{proGrimmett2}.
\end{document}